\tikzset{>={Latex[width=1.5mm,length=1.5mm]}}
\def\R{\mathbb{R}}
\newcommand{\cP}{\mathcal{P}}
\newcommand{\opt}{\textsf{OPT}}
\newcommand{\lp}{\textsf{LP}}
\def\ep{\varepsilon}
\def\tO{\tilde{O}}
\newtheorem{theorem}{Theorem}[section]
\newtheorem{lemma}[theorem]{Lemma}
\newtheorem{claim}[theorem]{Claim}
\theoremstyle{definition}
\newtheorem{definition}[theorem]{Definition}
\newcommand{\tuple}[1]{\left(#1\right)} 
\newcommand{\eps}{\varepsilon}
\def\*#1{\mathbf{#1}}
\def\+#1{\mathcal{#1}}
\newcommand{\poly}{\ensuremath{\mathsf{poly}}}
\newcommand{\polylog}{\ensuremath{\mathsf{polylog}}}
\newcommand*{\inlineequation}[2][]{
  \begingroup
    \refstepcounter{equation}
    \ifx\\#1\\
    \else
      \label{#1}
    \fi
    \relpenalty=10000 
    \binoppenalty=10000 
    \ensuremath{
      #2
    }
    ~\@eqnnum
  \endgroup
}
\begin{document}

\title{{Online Directed Spanners and Steiner Forests}\thanks{A preliminary version of this work appeared in the Proceedings of APPROX 2021}}

\author{Elena Grigorescu\thanks{Purdue University, Email: \{elena-g, krq\}@purdue.edu} \and Young-San Lin\thanks{Melbourne Business School, Email: y.lin@mbs.edu} \and Kent Quanrud\footnotemark[2] \thanks{E.G and Y.L. were supported in part by NSF CCF-1910659 and NSF CCF-1910411.}}
\date{\today}

\maketitle

\begin{abstract}

We present online algorithms for directed spanners and directed Steiner forests. These are well-studied  network connectivity problems that fall under the unifying framework of online covering and packing linear programming formulations. This framework was developed in the seminal work of Buchbinder  and  Naor (Mathematics of Operations Research, 34, 2009) and is based on primal-dual techniques. 
Specifically, our results include the following:

\begin{itemize}

\item For the {\em pairwise spanner} problem, in which the pairs of vertices to be spanned arrive online, we present an  efficient randomized algorithm with competitive ratio $\tilde{O}(n^{4/5})$ for graphs with general edge lengths, where $n$ is the number of vertices of the given graph. For graphs with uniform edge lengths, we give efficient randomized algorithms with competitive ratio $\tilde{O}(n^{2/3+\ep})$ and $\tilde{O}(k^{1/2+\ep})$, where $k$ is the number of terminal pairs. To the best of our knowledge, these are the first online algorithms for directed spanners. 
In the offline version, the current best approximation ratio for uniform edge lengths is $\tilde{O}(n^{3/5 + \ep})$, due to Chlamt{\'a}{\v{c}},  Dinitz, Kortsarz, and  Laekhanukit (SODA 2017, TALG 2020).

\item For the \emph{directed Steiner forest} problem with uniform costs, in which the pairs of vertices to be connected arrive online,
we present an efficient randomized algorithm with competitive ratio $\tO(n^{2/3 + \ep})$. The state-of-the-art online algorithm for general costs is due to Chakrabarty, Ene, Krishnaswamy, and Panigrahi (SICOMP 2018) and is $\tO(k^{1/2 + \ep})$-competitive. In the offline version, the current best approximation ratio with uniform costs is $\tilde{O}(n^{4/7 + \ep})$, due to Abboud and Bodwin (SODA 2018).

\end{itemize}

To obtain {\em efficient} and {\em competitive} online algorithms, we observe that a small modification of the online covering and packing framework by Buchbinder and Naor implies a polynomial-time implementation of the primal-dual approach with separation oracles, which a priori might perform exponentially many calls to the oracle. We convert the online spanner problem into an online covering problem and complete the rounding-step analysis in a problem-specific fashion.

\end{abstract}

\thispagestyle{empty}
\newpage

\section{Introduction}

\providecommand{\parof}[1]{\mleft(#1\mright)}
\providecommand{\nnreals}{\mathbb{R}_{\geq 0}}
\providecommand{\sizeof}[1]{\mleft| #1 \mright|}
\providecommand{\setof}[1]{\mleft\{ #1 \mright\}}
\providecommand{\reals}{\mathbb{R}}
\providecommand{\rip}[2]{\mleft\langle #1, #2 \mright\rangle}

We study online variants of  directed network optimization problems. In an online problem, the input is presented sequentially, one item at a time, and the algorithm is forced to make irrevocable decisions in each step, without knowledge of the remaining part of the input. The performance of the algorithm is measured by its {\em competitive ratio}, which is the ratio between the value of the online solution and that of an optimal offline solution.

Our main results focus on {\em directed spanners}, which are sparse subgraphs that approximately preserve pairwise distances between vertices. Spanners are fundamental combinatorial objects with a wide range of applications, such as distributed computation \cite{Awerbuch, PelegS89}, data structures \cite{alon1987optimalpreprocessing,Yao1982SpacetimeTF}, routing schemes \cite{CowenW04,PachockiRSTW18,PelegU89a,RodittyTZ08},  approximate shorthest paths \cite{BaswanaK10,DorHZ00, Elkin05}, distance oracles \cite{BaswanaK10, Chechik15, PatrascuR14}, and property testing \cite{AwasthiJMR16,bhattacharyya2012transitive}. For a comprehensive account of the literature, we refer the reader to the excellent survey  \cite{ahmed2019graph}.

We also study related network \emph{connectivity} problems, and in particular \emph{directed Steiner forests}, which are sparse subgraphs that maintain connectivity between target terminal vertex pairs. Steiner forests are broadly used in different areas, such as multicommodity network design \cite{fleischer2006simple,gupta2003approximation}, mechanism design and games \cite{chawla2006optimal,konemann2005primal,konemann2008group,roughgarden2007optimal}, computational biology \cite{khurana2017genome,pirhaji2016revealing}, and computational geometry \cite{bateni2012euclidean,borradaile2015polynomial}.

Our approaches are based on covering and packing linear programming (LP) formulations that fall into the unifying framework developed by Buchbinder and  Naor \cite{buchbinder2009online}, using the powerful primal-dual technique \cite{GoemansW95}.
This unifying framework extends across widely different domains, and hence provides a general abstraction that captures the algorithmic essence of all online covering and packing formulations. 
In our case, to obtain {\em efficient} competitive algorithms for solving the LPs online, we observe that the algorithms in \cite{buchbinder2009online} can be slightly modified to significantly speed up the setting of our applications, in which the algorithm might otherwise make exponentially many calls to a separation oracle. This component is not tailored to the applications studied here and may be of independent interest. In particular, previous approaches solving online covering and packing problems either focus on the competitiveness of the algorithm \cite{aaabn-set-cover,awerbuch1993throughput,bansal2012primal}, or manage to leverage the specific structure of the problem for better time efficiency in a somewhat ad-hoc manner \cite{alon2006general,awerbuch2004line,bansal2008randomized,berman1997line,buchbinder2006improved,imase1991dynamic}. Here, the solution may be viewed as a more {\em unified} framework that is also {\em efficient}.

\subsection{Our contributions}

\subsubsection{Directed spanners}
Let $G = (V,E)$ be a directed simple graph with $n$ vertices. Each edge is associated with its \emph{length} $\ell: E \to \R_{\geq 0}$. The edge lengths are \emph{uniform} if $\ell(e)=1$ for all $e \in E$. In spanner problems, the goal is to
compute a minimum cardinality (number of edges) subgraph in which the distance between
terminals is preserved up to some prescribed factor. In the most
well-studied setting, called the \emph{directed $s$-spanner} problem,
there is a fixed value $s \geq 1$ called the \emph{stretch}, and the goal is 
to find a minimum cardinality subgraph in which \emph{every} pair of
vertices has \emph{distance} within a factor of $s$ in the original
graph. Here, we use the term distance to denote the sum of the edge lengths in a shortest path of two terminal vertices. For low stretch spanners, when $s=2$, there is a tight $\Theta(\log n)$-approximation algorithm \cite{elkin1999client, Kortsarz2001OnTH}; when $s=3,4$ both with uniform edge lengths, there are $\tO(n^{1/3})$-approximation algorithms \cite{berman2013approximation,dinitz2016approximating}.
When $s > 4$, the best known approximation factor is $\tO(n^{1/2})$ \cite{berman2013approximation}.
The problem is hard to approximate within an $O(2^{{\log^{1-\eps} n}})$ factor for $3 \leq s = O(n^{1-\delta})$ and any $\eps, \delta \in (0,1)$, unless $NP\subseteq  DTIME(n^{\operatorname{polylog} n})$ \cite{ElkinP07}.

A more general setting, called the \emph{pairwise
  spanner} problem \cite{chlamtavc2020approximating}, and the {\em client-server} model \cite{bhattacharyya2012transitive,elkin1999client}, considers an arbitrary set of
terminals $D = \{(s_i,t_i) \mid i \in [k]\} \subseteq V \times V$.  Each
terminal pair $(s_i,t_i)$ has its own target distance $d_i$. The goal is
to compute a minimum cardinality subgraph in which for each $i$, the
distance from $s_i$ to $t_i$ is at most $d_i$.  For the pairwise
  spanner problem with uniform edge lengths, \cite{chlamtavc2020approximating} obtains an $\tO(n^{3/5 + \ep})$-approximation.

In the online
version, the graph is known ahead of time, and the terminal pairs and the corresponding target distances are
received one by one in an online fashion.
The distance requirement of the arriving terminal pair is satisfied by irrevocably including edges.
There are no online algorithms for the pairwise spanner problem that we are aware of, even in the simpler and long-studied
case of stretch $s$ or graphs with uniform edge lengths.

For graphs with uniform edge lengths, we prove the following theorem in Section \ref{sec:us}.

\begin{restatable}{theorem}{thmsqrtkps} \label{thm:sqrt-k-ps}
For the online pairwise spanner problem with uniform edge lengths, there exists a randomized polynomial-time algorithm with competitive ratio $\tO(k^{1/2 + \delta})$ for any constant $\delta > 0$.
\end{restatable}

Next, we turn to graphs with general edge lengths and derive online algorithms with competitive ratios in terms of $n$. We present a generic algorithm (Algorithm~\ref{alg:dks}) used for Theorems~\ref{theorem:spanner}, \ref{thm:us}, \ref{theorem:steiner}, and other variants of spanners introduced in Section \ref{subsubsec:online-ps}. For graphs with general edge lengths, we show the following in Section~\ref{subsec:general-ps}.

\begin{restatable}{theorem}{thmpdks}\label{theorem:spanner}
  For the online pairwise spanner
  problem, there is a randomized polynomial-time algorithm with
  competitive ratio $\tO(n^{4/5})$.
\end{restatable}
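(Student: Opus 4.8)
The plan is to cast the online pairwise spanner problem as a single online covering linear program, solve that LP online and in polynomial time using the separation-oracle version of the Buchbinder--Naor framework developed earlier, and then convert the fractional solution into an integral one by an online rounding procedure that treats terminal pairs differently according to a length threshold~$L$.

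\emph{The covering LP and its oracle.} Keep one variable $x_e \ge 0$ per edge $e$ and minimize $\sum_{e} x_e$. When a pair $(s_i,t_i,d_i)$ arrives we must guarantee that the (fractional) solution supports one unit of $s_i$--$t_i$ flow routed only along paths of length at most~$d_i$; by LP duality this is exactly the family of covering constraints $\sum_{e\in C} x_e \ge 1$ ranging over all (fractional) \emph{anti-spanners}~$C$ of the pair, i.e.\ fractional edge sets that meet every length-$d_i$ path. Although this family is exponentially large, a violated member can be produced by a single restricted-shortest-path computation in the graph weighted by the current $x_e$'s (with a $(1+o(1))$-approximate oracle if the lengths are not polynomially bounded). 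Feeding this oracle into the online covering framework yields, in polynomial time, an online-feasible fractional solution whose cost is within an $\tO(1)$ factor of the optimal fractional value, which itself is at most $\opt$.

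\emph{Rounding with a threshold.} Before any pair arrives, sample a set $R$ of $\tO(n/L)$ vertices uniformly at random and add to the output the in- and out-shortest-path arborescences rooted at each $r\in R$; this costs $\tO(n^2/L)$ edges. Call a pair \emph{far} if every length-$d_i$ $s_i$--$t_i$ path uses at least $L$ vertices, and \emph{near} otherwise. With high probability every far pair has some sampled $r$ on one of its length-$d_i$ paths, and then concatenating the shortest $s_i$-to-$r$ and $r$-to-$t_i$ paths (each a subpath of a valid path, hence short enough) satisfies it for free; moreover the mere existence of a far pair forces $\opt \ge L-1$, so this part of the output is $\tO(n^2/L^2)$-competitive. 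For a near pair we restrict its LP constraints to length-$d_i$ paths on at most~$L$ vertices (the oracle now also enforces a hop bound, still polynomial-time), let the framework update $x$, and then round online: scale $x$ up by a factor~$\beta$ and, the first time the scaled value of an edge crosses the relevant threshold, include that edge irrevocably with the appropriate probability --- a monotone rule, hence implementable online. A Chernoff/cut argument along the short flow paths, in the spirit of the offline flow-LP rounding of directed spanners, shows that $\beta = \tO(\sqrt{nL})$ makes every near pair satisfied with high probability; a union bound over the at most $n^2$ pairs, together with an explicit patch that installs a valid path whenever the rounding fails for some pair, gives an always-feasible solution whose expected size is $\tO(\sqrt{nL})\cdot\opt$.

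\emph{Balancing and the main obstacle.} Adding the two contributions, the competitive ratio is $\tO\!\big(n^2/L^2 + \sqrt{nL}\big)$, and choosing $L = n^{3/5}$ balances it at $\tO(n^{4/5})$. I expect the main difficulty to lie in the near-pair rounding: one must certify that the online, irrevocably rounded graph --- in which every ``layer'' of the length-bounded flow shares the same underlying edge, so the sampling is heavily correlated --- still contains a genuinely length-bounded $s_i$--$t_i$ path, and must do so losing only the $\tO(\sqrt{nL})$ factor rather than the exponential-in-$L$ loss of trying to preserve one fixed path or the $\Omega(n)$ loss of naively union-bounding over all length-bounded cuts. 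The remaining ingredients --- the polynomial-time separation-oracle implementation inside the online covering framework, and the far-pair sampling --- are provided by the earlier framework or are routine.
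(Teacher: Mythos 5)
Your LP formulation and its separation oracle match the paper's, and your online rounding mechanics (monotone threshold/conditional inclusion) are also the paper's. The gap is in your treatment of \emph{near} pairs. You classify a pair as near if it has \emph{some} feasible path on fewer than $L$ vertices, and claim that scaling the LP by $\beta = \tO(\sqrt{nL})$ and rounding settles every near pair w.h.p. No known rounding lemma supports this, and the difficulty you yourself flag is the real obstruction: all existing analyses (the anti-spanner argument of Berman et al., used here and in Chlamt\'a\v{c} et al.) control the failure probability by a union bound over minimal anti-spanners of the pair, and the number of those is bounded only in terms of the size of the pair's \emph{local graph} $V^i$ (the union of all feasible paths), not in terms of the hop-length of its shortest feasible path. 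A pair can have a $2$-hop feasible path and a local graph on all $n$ vertices; it is ``near'' under your definition but not thin under any definition the rounding analysis can use, and the LP may fractionally spread its flow over that entire local graph. Your balancing $\tO(n^2/L^2 + \sqrt{nL})$ therefore rests on an unproved (and, as far as the literature goes, unavailable) lemma. A secondary issue: you install the arborescences before any pair arrives, so if no far pair ever shows up you pay $\tO(n^2/L)$ against a possibly constant $\opt$; this part is fixable by deferring the sampling until the first far pair arrives.

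The paper's route avoids both problems by decoupling the two roles you try to make the threshold $L$ play. It classifies pairs as thick/thin by local-graph size $|V^i| \gtrless t$ (so thin pairs genuinely admit the factor-$t$ anti-spanner rounding), and it obtains the lower bound needed to pay for the arborescences not from path lengths but from the combinatorial fact $\opt \ge \sqrt{k}$ (Lemma~\ref{lem:optgeqsqrt}): the first $T = \lfloor n^{4/5}\rfloor$ pairs are handled greedily by cheapest feasible paths (each costing at most $\opt$ edges, hence ratio $T$), and only once $k \ge T$, so that $\opt \ge n^{2/5}$, are the $\tO(n^2/t)$ arborescence edges added and the LP rounding with factor $t = n^{4/5}$ switched on. If you want to salvage your proposal, replace ``near/far by hop-length'' with ``thin/thick by $|V^i|$'' and add the $\opt \ge \sqrt{k}$ argument together with the greedy warm-up phase; as written, the near-pair step does not go through.
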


For graphs with uniform edge lengths without further assumptions, we use Theorem~\ref{thm:sqrt-k-ps} and the generic algorithm to prove the following theorem in Section~\ref{subsec:us}.

\begin{restatable}{theorem}{thmus}\label{thm:us}
  For the online pairwise spanner
  problem with uniform edge lengths, there is a randomized polynomial-time algorithm with
  competitive ratio $\tO(n^{2/3+\ep})$ for any constant $\ep \in (0,1/3)$.
\end{restatable}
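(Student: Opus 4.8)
The plan is to reduce the problem to an online covering LP, solve it online with the efficient primal--dual machinery established earlier in the paper, and then round; the rounding splits the terminal pairs into ``shallow'' ones (small target distance), which are handed to the bounded-diameter algorithm of Theorem~\ref{theorem:unit-spanner}, and ``deep'' ones (large target distance), which are routed through a random hitting set and handled by ideas close to those behind Theorems~\ref{theorem:quasimetric-spanner} and~\ref{thm:asds}. Throughout, fix the threshold $L := n^{3\ep}$.

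First I would model the instance via the flow/layered-graph relaxation standard for directed spanners: for a pair $(s_i,t_i)$ with target $d_i$, form its layered graph $G_i$ with $d_i$ levels---it has $O(d_i n)=O(n^2)$ edges, hence polynomial size even when $d_i$ is large---and require that the edge variables $x\in[0,1]^E$ carry a unit $s_i$-to-$t_i$ flow in $G_i$; equivalently, $x$ fractionally covers every $d_i$-antispanner of the pair. Running the separation-oracle version of the Buchbinder--Naor online covering algorithm from the earlier sections on this LP produces, and updates monotonically as the pairs arrive, a fractional solution with $\sum_e x_e=\tO(1)\cdot\opt$. From here the task is purely an online rounding of $x$ into a subgraph $H$, and the competitive ratio is the blow-up the rounding incurs.

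Shallow pairs, those with $d_i\le L$, form an online pairwise spanner instance with uniform lengths and maximum allowed distance $L$, so I would feed them to the algorithm of Theorem~\ref{theorem:unit-spanner} with parameter $d:=L$, obtaining competitive ratio $\tO(L^{1/3}n^{2/3})=\tO(n^{2/3+\ep})$ on this part. For the deep pairs, those with $d_i>L$, I would sample offline a set $R\subseteq V$ with $|R|=\tO(n/L)$ so that, with high probability, every ordered pair $(u,v)$ with $\operatorname{dist}_G(u,v)>L$ has a shortest $u\to v$ path meeting $R$ (a union bound over the at most $n^2$ pairs suffices). A deep pair with $\operatorname{dist}_G(s_i,t_i)\le L\le d_i$ is satisfied by any $\le L$-hop $s_i\to t_i$ path, in particular a shortest one in $G$; a deep pair with $\operatorname{dist}_G(s_i,t_i)>L$ is satisfied once $H$ contains a shortest $s_i\to r$ path and a shortest $r\to t_i$ path for some witness $r\in R$ lying on a shortest $s_i\to t_i$ path, since their concatenation has exactly $\operatorname{dist}_G(s_i,t_i)\le d_i$ hops. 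Each of these requirements is a covering constraint living in a layered graph of polynomial size (depth $L$, resp.\ depth $\operatorname{dist}_G(s_i,t_i)$), so I would set up a second online covering LP carrying these constraints, solve it online, and round it, exploiting the hitting-set structure so as to buy, per used $r\in R$, only what an arriving pair needs instead of a whole BFS in/out tree.

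I expect the main obstacle to be controlling the cost of the deep part. That second LP is a \emph{restriction} of the pairwise spanner LP---it insists on routes no longer than necessary, and routed through $R$---so its optimum is already at least $\opt$; the crude bound one gets from buying all in/out trees of $R$ is only $\tO(n^2/L)$, which is far too large once $\ep$ is small, so one must instead argue that the restriction together with the rounding costs at most $\tO(n^{2/3+\ep})\cdot\opt$. This is where the real combinatorial work lies, and it is the source of both the exponent $2/3$ and the $+\ep$ slack, which I would extract by bucketing pairs by scale into $O(1/\ep)$ classes and absorbing an $n^{O(\ep)}$ loss; a thick/thin dichotomy in the spirit of Berman et al.\ and Chlamt{\'a}{\v{c}} et al., combined with the hitting-set argument, is the natural tool here. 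A secondary point is keeping the rounding \emph{online} and irrevocable: edges are only ever added, and a pair served at its arrival must remain served, which holds because $x$ is monotone and the rounding uses independent thresholds at geometrically many scales---but one must check that the surviving $s_i\to t_i$ route promised to pair $i$ at its arrival is never broken by later updates, and that the guarantee of Theorem~\ref{theorem:unit-spanner} composes additively with the deep-part analysis rather than multiplicatively. Finally, setting $L=n^{3\ep}$ balances the two parts at $\tO(n^{2/3+\ep})$, and the hypothesis $\ep<1/3$ is exactly what keeps $L<n$, so that the hitting set is genuinely sparse and $\tO(L^{1/3}n^{2/3})$ remains a sub-linear bound.
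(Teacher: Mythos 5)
Your proposal diverges from the paper's proof at the decomposition step, and the place where it diverges is exactly where the gap is. The paper does not split pairs by target distance; it splits by the \emph{number of terminal pairs seen so far}. If $k \geq T = \lfloor n^{4/3-4\ep}\rfloor$, Lemma~\ref{lem:optgeqsqrt} ($\opt \geq \sqrt{k}$) gives $\opt \geq n^{2/3-2\ep}$, and only then can the $\tO(n^2/t)$ cost of the sampled in/out-arborescences be charged against $\opt$ to get $\tO(n^{2/3+\ep})$. If $k < T$, the paper invokes Theorem~\ref{thm:sqrt-k-ps}, a $\tO(k^{1/2+\delta})$-competitive algorithm built from three nontrivial ingredients: an $O(\sqrt{k})$ junction-tree approximation for pairwise spanners, a height-reduction to a forest losing $O(k^\delta)$, and a reduction to an ordered Steiner label cover problem solved online via undirected Steiner forest. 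That second regime is the entire content of Section~\ref{sec:us} and is the actual technical contribution behind the theorem.

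Your shallow/deep split does not substitute for this. The shallow part is fine: applying Theorem~\ref{theorem:unit-spanner} with $d = L = n^{3\ep}$ to the sub-instance of pairs with $d_i \le L$ indeed yields $\tO(n^{2/3+\ep})$ against that sub-instance's optimum, which is at most $\opt$. But the deep part is where the difficulty lives, and you explicitly leave it open: the hitting-set in/out-trees cost $\tO(n^2/L) = \tO(n^{2-3\ep})$, which cannot be charged to $\opt$ when few pairs have arrived (e.g.\ $k = O(1)$, $\opt = O(n)$), and ``buying only what an arriving pair needs through $R$'' with hard distance constraints is precisely the ordered Steiner label cover problem the paper spends Sections~\ref{sec:lcp-hr}--\ref{sec:online-slc} solving. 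Gesturing at ``a thick/thin dichotomy'' and ``bucketing by scale'' does not close this: without either a lower bound on $\opt$ in terms of the arrived input (the paper's $\opt \ge \sqrt{k}$) or an algorithm whose ratio degrades gracefully with $k$ (the paper's $\tO(k^{1/2+\delta})$ junction-tree algorithm), there is no way to pay for the deep pairs. As written, the proposal proves the shallow half and restates the hard half as an open obstacle, so it does not constitute a proof of Theorem~\ref{thm:us}.
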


\subsubsection{Directed Steiner forests}
In the \emph{directed Steiner forest} problem, we are given a directed graph $G=(V,E)$ with edge costs $w: E \to \R_{\geq 0}$, and a set of terminals $D = \{(s_i,t_i) \mid i \in [k]\} \subseteq V \times V$. The goal is to find a subgraph $H=(V,E')$ which includes an $s_i \leadsto t_i$ path for each terminal pair $(s_i, t_i)$, and the total cost $\sum_{e \in E'}w(e)$ is minimized. The costs are \emph{uniform} when $w(e)=1$ for all $e \in E$.

In the online version, the graph is known ahead of time, and the terminal pairs arrive one by one in an online fashion.
The connectivity requirement of the arriving terminal pair is satisfied by irrevocably including edges.

In the offline setting with general costs, the best known approximations are $O ({k^{1/2 + \ep}})$ by Chekuri et al.\
\cite{chekuri2011set} and $O(n^{2/3 + \ep})$ by Berman et al.\ \cite{berman2013approximation}. For the special case of uniform costs, there is an improved approximation factor of
$\tO(n^{4/7 + \ep})$ by Abboud and Bodwin \cite{abboud2018reachability}.
In the online setting, Chakrabarty et al.\ \cite{cekp} give an
$\tO(k^{1/2 + \ep})$ approximation for general costs. Their algorithm also extends to the more general buy-at-bulk version. We prove the following in Section~\ref{sec:dsf}.

\begin{restatable}{theorem}{thmsteiner}\label{theorem:steiner}
   For the online directed Steiner forest problem with uniform costs, there is a randomized polynomial-time algorithm with competitive ratio $\tO(n^{2/3 + \ep})$ for any constant $\ep \in (0,1/3)$.
\end{restatable}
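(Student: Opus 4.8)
The first thing I would try is to obtain Theorem~\ref{theorem:steiner} directly from the online pairwise spanner algorithm with uniform lengths (Theorem~\ref{thm:us}). For a directed graph $G$ on $n$ vertices and any subgraph $H$, there is an $s_i\leadsto t_i$ path in $H$ if and only if $\text{dist}_H(s_i,t_i)\le n-1$: a simple path uses at most $n-1$ edges, and conversely a finite distance means reachability. Hence the online directed Steiner forest problem with uniform costs \emph{is} exactly the online pairwise spanner problem with uniform lengths in which every arriving pair $(s_i,t_i)$ is assigned target distance $d_i=n-1$ --- the feasible subgraphs are the same and the objective $|E(H)|$ is the same, so the two optima coincide. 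Feeding the stream $(s_1,t_1,n-1),(s_2,t_2,n-1),\dots$ to the algorithm of Theorem~\ref{thm:us} thus produces, online and in polynomial time, a randomized subgraph of size $\tO(n^{2/3+\ep})\cdot\opt$ connecting all pairs, which is the claimed guarantee. The one point to verify is that the bound of Theorem~\ref{thm:us} carries no hidden dependence on $\max_i d_i$; it does not (the bound is stated uniformly in the $d_i$, e.g.\ after bucketing pairs into $O(\log n)$ groups with $d_i\in[2^j,2^{j+1})$), so instantiating it at $d_i=n-1$ is harmless, and one only checks that the layered graphs used there --- now with $\Theta(n)$ layers --- remain of polynomial size. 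I would present the proof in Section~\ref{sec:dsf} essentially as this reduction.

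If a self-contained argument is preferred, Section~\ref{sec:dsf} would instead re-run the scheme of Section~\ref{sec:us} specialized to plain reachability. The steps, in order, are: (i) write fractional online DSF as an online covering LP with a variable $x_e\in[0,1]$ per edge and, for each arriving pair $i$, the exponential family of constraints $\sum_{e\in\delta^+(S)}x_e\ge 1$ over all $S$ with $s_i\in S\not\ni t_i$, separated in polynomial time by a minimum $s_i$--$t_i$ cut; by the efficiently-implemented Buchbinder--Naor framework of the earlier section, this maintains online a fractional $x$ with $\sum_e x_e=\tO(1)\cdot\opt$ in which every arrived pair carries a unit $s_i\leadsto t_i$ flow. (ii) Sample random ``hub'' vertex sets at a few geometric scales $n>\sigma_1>\sigma_2>\cdots$, with $\tO(n/\sigma_j)$ hubs at scale $j$, so that w.h.p.\ every set of at least $\sigma_j$ vertices contains a scale-$j$ hub. (iii) When pair $i$ arrives, decompose its unit flow into paths and route the pair through the coarsest scale whose long paths it genuinely needs: a carrying path of length in $(\sigma_{j+1},\sigma_j]$ meets some scale-$j$ hub $r$, so recursively connect $s_i\leadsto r$ and $r\leadsto t_i$ at the next finer scale, with $O(1)$-hop connections as the base case; every purchased edge is charged either to the LP cost or to the in-tree/out-tree grown at a hub, whose total size is at most $2(n-1)$ --- the only place uniform costs are used. (iv) Guess $\opt$ by doubling, build hub infrastructure only at the scales the guess can pay for, and choose the $\sigma_j$ so the edge count telescoped over the $O(1/\ep)$ scales is $\tO(n^{2/3+\ep})\cdot\opt$.

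The substantive obstacle in this self-contained route is the rounding --- steps (ii)--(iv). Directed connectivity, unlike undirected connectivity, cannot be recovered from a fractional solution with only polylogarithmic blow-up, because a digraph can have exponentially many minimum $s$--$t$ cuts and so no Karger-style cut-counting bound applies; this is exactly why naively sampling $O(\log n)$ flow paths per pair is only $\tO(n)$-competitive and why the hub hierarchy is forced. The delicate parts are then to make the hub/junction routing work \emph{online}, committing a hub and its in/out trees before seeing future pairs while still guaranteeing that each purchased edge is paid for either by the $\tO(\opt)$ LP budget or by a hub that enough of its served pairs route through, and to pick the scales so the competitive ratio closes at $n^{2/3+\ep}$ rather than something weaker, with the $\ep$ slack absorbing the $O(1/\ep)$ levels of recursion. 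Since all of this is already carried out for Theorem~\ref{thm:us}, the reduction in the first paragraph is the cleanest path, and it is the one I would write down.
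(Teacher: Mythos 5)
Your primary route (reduce online directed Steiner forest with uniform costs to online pairwise spanners with uniform lengths by setting every $d_i=n-1$ and invoking Theorem~\ref{thm:us}) is correct but is not the paper's proof, and it inverts the paper's logical order. The paper proves Theorem~\ref{theorem:steiner} in Section~\ref{sec:dsf} \emph{before} and independently of Theorem~\ref{thm:us}: it views the problem as the pairwise spanner instance with $d_i=\infty$, runs Algorithm~\ref{alg:dks} with $T=\lfloor n^{4/3-4\ep}\rfloor$ and $t=n^{2/3+\ep}$, handles the regime $k<T$ by black-boxing the $\tO(k^{1/2+\delta})$-competitive online directed Steiner forest algorithm of \cite{cekp} (Theorem~\ref{thm:online-bab}) with $\delta$ chosen so that $(4/3-4\ep)(1/2+\delta)=2/3+\ep$, and handles $k\ge T$ via $\opt\ge\sqrt{k}$ (Lemma~\ref{lem:optgeqsqrt}) and the LP-rounding/arborescence computation \eqref{eq:ds}. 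Your reduction is logically sound: under uniform lengths reachability is exactly distance at most $n-1$, Theorem~\ref{thm:us} carries no hidden dependence on $\max_i d_i$ (the layered graphs already have $\Theta(n)$ layers and size $n^{O(1/\delta)}$), and there is no circularity, since Theorem~\ref{thm:us} rests on Theorem~\ref{thm:sqrt-k-ps} (junction trees, height reduction, and the online undirected Steiner forest algorithm of \cite{berman1997line}), not on Theorem~\ref{theorem:steiner}. What each approach buys: yours makes Theorem~\ref{theorem:steiner} a two-line corollary, but only by routing the easier statement through the paper's heaviest machinery (the hard-distance framework of Section~\ref{sec:us}), whereas the paper gets it cheaply from an off-the-shelf result of \cite{cekp}; moreover, in the paper Theorem~\ref{thm:us} reuses \eqref{eq:ds} from the proof of Theorem~\ref{theorem:steiner}, so your presentation would require reordering Sections~\ref{sec:dsf} and~\ref{sec:us}. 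Your second, ``self-contained'' hub-hierarchy sketch does not match what the paper does in either regime (the paper uses a single scale of $\tO(n/t)$ sampled arborescence roots plus LP rounding for thin pairs, and cites \cite{cekp} rather than re-deriving junction-tree routing), and, as you concede, its steps (ii)--(iv) are precisely where the work lies; as written it is a plan rather than a proof, so the reduction in your first paragraph is the part that stands.
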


We essentially improve the competitive ratio when the number of terminal pairs is $\omega(n^{4/3})$.

\subsubsection{Summary}

We summarize our main results for online pairwise spanners and directed Steiner forests in Table \ref{hardness} by listing the competitive ratios and contrast them with the corresponding known competitive and approximation ratios. We note that offline $\tO(n^{4/5})$-approximate pairwise spanners for graphs with general edge lengths and offline $\tO(k^{1/2+\ep})$-approximate pairwise spanners for graphs with uniform edge lengths can be obtained by our online algorithms.

\begin{table}[H]
\begin{center}
\def\arraystretch{1.2}
\begin{tabular}{|*3{l|}}
\hline
\textbf{Setting} & \textbf{Offline} & \textbf{Online}  \\
\hline
\cline{1-3}
Pairwise & $\tO(n^{4/5})$ (implied by Thm~\ref{theorem:spanner}) & $\tO(n^{4/5})$ (Thm~\ref{theorem:spanner}) \\
Spanners & {\color{gray}$\tO(n^{3/5 + \ep})$ (uniform lengths) \cite{chlamtavc2020approximating}} & $\tO(n^{2/3+\ep})$ (uniform lengths, Thm~\ref{thm:us})\\
& $\tO(k^{1/2+\ep})$ (uniform lengths, implied by Thm~\ref{thm:sqrt-k-ps}) & $\tO(k^{1/2+\ep})$ (uniform lengths, Thm~\ref{thm:sqrt-k-ps})\\
\hline
\cline{1-3}
Directed & {\color{gray}$\tO(n^{4/7 + \ep})$  (uniform costs) \cite{abboud2018reachability}} & {\color{gray}$\tO(k^{1/2 + \ep})$ \cite{cekp}} \\
Steiner & {\color{gray}$O(n^{2/3 + \ep})$ \cite{berman2013approximation}} & $\tO(n^{2/3+\ep})$ (uniform costs, Thm~\ref{theorem:steiner})\\
Forests & {\color{gray}$O(k^{1/2 + \ep})$ \cite{chekuri2011set}} &\\
\hline
\end{tabular}
\caption{Summary of the competitive and approximation ratios. Here, $n$ refers to the number of vertices and $k$ refers to the number of terminal pairs. We include the known results for comparison. The text in gray refers to known results while the text in black refers to our contributions.}
\label{hardness}
\end{center}
\end{table}

\subsection{An efficient online covering and packing framework}

Before presenting our modification to the unified framework in \cite{buchbinder2009online} to obtain efficient online covering and packing LP solvers, we give an overview of the well-known primal-dual framework for approximating covering and packing LP's online. This framework is the main engine of our applications and it is important to establish some context before getting into the application for spanners and Steiner forests. We also introduce a discussion of certain technical nuances that arise for our application, and the small modification we propose to address it. A more formal description, including proofs and fully parameterized theorem statements, is fairly technical and therefore deferred to Section~\ref{sec:covering} \emph{after} we have used these tools in the context of spanners and Steiner forests.

The primal-dual framework was first developed for the online set cover
problem in the seminal work of \cite{aaabn-set-cover}. The approach
was extended to network optimization problems in undirected graphs in
\cite{alon2006general}, then abstracted and generalized to a broad LP-based
primal-dual framework in \cite{buchbinder2009online}. Our discussion
primarily centers around the abstract framework in
\cite{buchbinder2009online}. A number of previous results in online
algorithms, such as ski rental \cite{kmmo} and paging \cite{bansal2012primal}, can be recovered from this approach and many new important
applications have since been developed, such as the $k$-server problem \cite{young1994thek}. We refer the
reader to the excellent survey by Buchbinder and Naor \cite{buchbinder2009design}.

These works develop a clean two-step approach to online algorithms
based on 1) solving the LP online, and 2) rounding the LP
online. Solving the LP online can be done in a generic fashion, while
rounding tends to be problem-specific. The setting for the \emph{covering} LP
is the following.
\begin{align}
  \begin{aligned}
    \text{minimize } & \rip{\mathbf{c}}{x} 
    \text{ over } x \in \nnreals^n 
    \text{ s.t.\ } A x \geq \mathbf{b}.
  \end{aligned}
\end{align}

Here, $A \in \R_{\geq 0}^{m \times n}$ consists of $m$ covering
constraints, $\mathbf{b} \in \R_{> 0}^n$ is a positive lower bound of the covering constraints, and $\mathbf{c} \in \R_{> 0}^m$ denotes the positive coefficients of the linear cost function. Each constraint can be normalized, so we focus on covering LP's in the following form.
\begin{align}
  \begin{aligned}
    \text{minimize } & \rip{\mathbf{c}}{x} 
    \text{ over } x \in \nnreals^n 
    \text{ s.t.\ } A x \geq \mathbf{1}
  \end{aligned}
                        \label{equation:covering}
\end{align}
where $\mathbf{1}$ is a vector of all ones.

In the online covering problem, the cost vector $\mathbf{c}$ is given offline, and each of these covering constraints is presented one by one in an online fashion, that is, $m$ can be unknown. The goal is to update $x$ in a non-decreasing manner such that all the covering constraints are satisfied and the objective value $\rip{\mathbf{c}}{x}$ is approximately optimal. An important idea in this
line of work is to simultaneously consider the dual {\em packing} problem:
\begin{align}
  \begin{aligned}
    \text{maximize } & \rip{\mathbf{1}}{y} 
    \text{ over } y \in \nnreals^m 
    \text{ s.t.\ } A^T y \leq \mathbf{c}
  \end{aligned}
                        \label{equation:packing}
\end{align}
where $A^T$ consists of $n$ packing constraints with an upper bound $\mathbf{c}$ given offline.

In the online packing problem, the \emph{columns} of $A^T$ and the corresponding variables are presented online taking initial value zero; one can either let the arriving variable remain zero, or irrevocably assign a positive value to the arriving variable. The goal is to approximately maximize the objective value $\rip{\mathbf{1}}{y}$ with each constraint approximately satisfied.

\paragraph{Separation oracles in the online setting.}

The primal-dual framework in \cite{buchbinder2009online} simultaneously solves
both LP \eqref{equation:covering} and LP \eqref{equation:packing}, and crucially
uses LP-duality and strong connections between the two solutions to
argue that they are both nearly optimal. Here we give a sketch of the LP solving framework for reference in the subsequent
discussion. We maintain solutions $x$ and $y$ for
LP \eqref{equation:covering} and LP \eqref{equation:packing}, respectively, in an online fashion. The covering solution $x$ is a function of the
packing solution $y$. In particular, each coordinate $x_j$ is exponential in the \emph{load} of the corresponding packing constraint in LP \eqref{equation:packing}. Both $x$ and $y$ are monotonically increasing. The algorithm runs in phases, where each phase corresponds to an estimate for $\opt$ revised over time. Within a phase we have the following.
If the new covering constraint $i \in [m]$, presented online, is already satisfied, then there is nothing to be
done. Otherwise, increase the corresponding coordinate $y_i$, which
simultaneously increases the $x_j$'s based on the magnitude of the
coordinate $a_{ij}$, where $a_{ij}$ is the $i$-th row $j$-th column entry of $A$. The framework in \cite{buchbinder2009online} increases $y_i$ until the increased $x_j$'s satisfy the new constraint. This naturally extends to the setting when the problem relies on a \emph{separation oracle} to retrieve an unsatisfied covering constraint where the number of constraints can be unbounded \cite{buchbinder2009online}. However, while this approach will fix all violating constraints, each individual fix may require a diminishingly small adjustment that cannot be charged off from a global perspective. Consequently the algorithm may have to address exponentially many constraints. 

\paragraph{A primal-dual bound on separation oracles.} 
Our goal is to adjust the framework to ensure that we only address a polynomial number of constraints (per phase).
For many concrete problems in the literature, this issue can be addressed directly based on the problem at hand (discussed in greater detail in Section~\ref{sec:related}). In our setting, we start with a combinatorially defined LP that is 
not a pure covering problem, and convert it to a covering LP.
While having a covering LP is conducive to the online LP framework, the machinery generates a large number of covering constraints that are very unstructured. For example, we
have little control over the coefficients of these constraints. This motivates us to develop a more generic argument to bound the number of queries to the separation oracle, based on the online LP framework, more so than the exact problem at hand. Here, when addressing a violated constraint $i$, we instead increase the dual variable $y_i$ until the increased primal variables $x$ (over-)satisfy the new constraint \emph{by a factor of 2}. This forces at least one $x_j$ to be doubled -- and in the dual, this means we used up a substantial amount of the corresponding packing constraint. Since the packing solution is already guaranteed to be feasible in each phase by the overall framework, this leads us to conclude that we only ever encounter polynomially many violating constraints. 

For our modified online covering and packing framework, we show that 1) the approximation guarantees are identical to those in \cite{buchbinder2009online}, 2) the framework only encounters polynomially many violating constraints for the online covering problem, and 3) only polynomially many updates are needed for the online packing problem.

\begin{theorem} \label{thm:inf-covering}
(Informal) There exists an $O(\log n)$-competitive online algorithm for the covering LP \eqref{equation:covering} which encounters polynomially many violating constraints.
\end{theorem}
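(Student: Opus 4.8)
\emph{Proof plan.} The plan is to reuse the Buchbinder--Naor online primal-dual algorithm for the covering LP~\eqref{equation:covering} (together with its dual packing LP~\eqref{equation:packing}) as a skeleton, and to change only the rule used when a violated covering constraint is encountered. Recall that the skeleton runs in phases, each phase working with a geometrically growing guess $\widehat{\opt}$ of the optimum; within a phase each coordinate $x_j$ is a fixed, strictly positive exponential function of the load $\frac{1}{c_j}\sum_i a_{ij}y_i$ of the $j$-th packing constraint, both $x$ and $y$ are nondecreasing, and the phase terminates (doubling the guess and reinitializing) once $\rip{\mathbf{c}}{x}$ reaches a constant multiple of $\widehat{\opt}$. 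When a covering constraint $i$ with $\sum_j a_{ij}x_j < 1$ arrives, the original algorithm raises $y_i$ --- which raises exactly those $x_j$ with $a_{ij}>0$ --- until $\sum_j a_{ij}x_j = 1$; our sole modification is to instead raise $y_i$ until $\sum_j a_{ij}x_j \ge 2$.

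I would then verify the two claims separately. For the $O(\log n)$ competitiveness, the point is that over-satisfying each constraint by an additional factor of $2$ perturbs only constants. The standard analysis controls three quantities --- the increase of $\rip{\mathbf{c}}{x}$ per unit increase of $y_i$, the total dual mass $\rip{\mathbf{1}}{y}$ generated in a phase, and the ratio of the final primal cost to the dual objective --- and each of these changes by at most a constant factor when the per-fix stopping value moves from $1$ to $2$; the dual $y$ stays feasible up to the competitive factor within a phase for the same reason as before, since the primal-cost cap ends the phase before any packing constraint is overloaded beyond that factor. Weak duality between~\eqref{equation:covering} and~\eqref{equation:packing}, together with the fact that the geometric growth of $\widehat{\opt}$ makes the total cost a geometric series dominated by its last phase, then reproduces the $O(\log n)$ bound of~\cite{buchbinder2009online}.

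For the polynomial bound on the number of violated constraints --- the genuinely new ingredient --- the key observation is that every fix at least doubles some coordinate. Writing $x^{\mathrm{old}}$ and $x^{\mathrm{new}}$ for the primal vector just before and just after a fix of constraint $i$, we have $\sum_j a_{ij}x^{\mathrm{old}}_j < 1$ and $\sum_j a_{ij}x^{\mathrm{new}}_j \ge 2$, hence $\sum_j a_{ij}\big(x^{\mathrm{new}}_j - 2x^{\mathrm{old}}_j\big) > 0$, so there is a $j$ with $a_{ij}>0$ and $x^{\mathrm{new}}_j > 2x^{\mathrm{old}}_j$; since $x_j$ is initialized to a strictly positive value at the start of the phase, this is a genuine doubling from a positive base. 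A fixed coordinate can be the doubled one only boundedly many times per phase: doubling $x_j$ corresponds to an additive increase by a fixed amount in the load of the $j$-th packing constraint, and that load stays bounded within a phase (this is exactly what certifies approximate dual feasibility) --- equivalently, $x_j$ remains between its positive initialization and the value $O(\widehat{\opt}/c_j)$ forced by the primal-cost cap, and the logarithm of that ratio is polynomial in the instance size. Summing over the $n$ coordinates and the polynomially many phases bounds the total number of fixes, and hence the number of separation-oracle calls, by a polynomial.

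The step I expect to be the actual obstacle is not the doubling inequality, which is elementary, but calibrating the per-phase initialization of $x$ and the primal-cost threshold so that both desiderata hold simultaneously: the ratio (final value of $x_j$)$/$(initial value of $x_j$) must remain singly exponential in the instance size so that the constraint count stays polynomial, while the initial cost $\rip{\mathbf{c}}{x}$ and the amount of dual overload permitted within a phase must be kept small enough that the $O(\log n)$ competitive ratio is unaffected by the factor-$2$ change. Pinning down these parameters --- and stating the theorem with the right dependence on the magnitudes of the entries of $A$ and $\mathbf{c}$ --- is precisely the content of Section~\ref{sec:covering}; for the informal statement here it suffices that all the quantities involved are explicit, polynomial-size functions of the input.
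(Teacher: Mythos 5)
Your proposal is correct and follows essentially the same route as the paper's proof in Section~\ref{sec:covering}: the same guess-and-double phase structure with $x_j$ exponential in the packing load, the same modification of over-satisfying each violated constraint by a factor of $2$, the same observation that each fix forces some coordinate $x_j$ with $a_{ij}>0$ to at least double from a strictly positive per-phase initialization, and the same bound on doublings per coordinate per phase via the ratio of the primal-cost cap $O(\opt/c_j)$ to the initialization $\alpha(r)/(2nc_j)$. The calibration issues you flag at the end are exactly what Theorem~\ref{thm:covering} resolves, yielding the stated $\poly(n,\log\opt,\log(1/\alpha(1)))$ bound.
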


\begin{theorem} \label{thm:inf-packing}
(Informal) Given any parameter $B>0$, there exists a $1/B$-competitive online algorithm for the packing LP \eqref{equation:packing} which updates $y$ polynomially many times, and each constraint is violated within an $O(f(A)/B)$ factor ($f(A)$ is a logarithmic function that depends on the entries in $A$).
\end{theorem}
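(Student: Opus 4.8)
The plan is to derive Theorem~\ref{thm:inf-packing} as the ``packing face'' of the same modified primal--dual process that underlies Theorem~\ref{thm:inf-covering}. Recall the online packing algorithm of \cite{buchbinder2009online}: it maintains a monotone packing solution $y$ for \eqref{equation:packing} together with a monotone covering solution $x$ for \eqref{equation:covering}, where the $n$ coordinates of $x$ are fixed in advance and the covering constraint $\sum_j a_{ij} x_j \ge 1$ \emph{arrives exactly when the packing variable $y_i$ arrives} (this constraint is the LP-dual certificate attached to $y_i$). When $y_i$ arrives, if its dual covering constraint is already satisfied we leave $y_i = 0$; otherwise we increase $y_i$ and, in lockstep, grow each $x_j$ by the multiplicative-plus-additive rule of \cite{buchbinder2009online} until that constraint is satisfied. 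In the modified version we stop only once $\sum_j a_{ij} x_j \ge 2$, i.e.\ we over-satisfy by a factor of $2$, exactly as in the discussion preceding Theorem~\ref{thm:inf-covering}, and we run the whole process in $O(\log(c_{\max}/c_{\min}) + \log m)$ phases, each corresponding to a doubling guess of the optimum, as in \cite{buchbinder2009online}.

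First I would bound the number of updates to $y$. An update happens only on an arrival where the dual covering constraint was violated, so the number of $y$-updates equals the number of violated covering constraints processed, which is polynomial per phase by (the proof of) Theorem~\ref{thm:inf-covering}: each time we process a violated constraint $i$, the quantity $\sum_j a_{ij} x_j$ jumps from below $1$ to at least $2$, hence more than doubles, and therefore (since $\sum_j a_{ij} x_j^{\mathrm{new}} < 2\sum_j a_{ij} x_j^{\mathrm{old}}$ whenever no single $x_j$ doubled) at least one coordinate $x_j$ with $a_{ij} > 0$ at least doubles. Because $x$ is initialized coordinatewise to a value $\delta = 1/\poly$ and stays a near-optimal feasible covering solution for the constraints seen so far — so $c_j x_j \le \rip{\mathbf c}{x} = O(\log n)\cdot\opt$ and hence $x_j \le M$ with $\log_2(M/\delta) = \poly$ — each coordinate can double only $\poly$ times; summing over the $n$ coordinates and the $\poly$ many phases gives a polynomial bound on the number of $y$-updates.

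Next I would recover the objective and violation guarantees. The point is that replacing the stopping threshold $1$ by $2$ perturbs every quantity in the analysis of \cite{buchbinder2009online} — the per-step increase of the packing objective $\rip{\mathbf 1}{y}$, the per-step increase of the covering cost $\rip{\mathbf c}{x}$, and the final load $\sum_i a_{ij} y_i / c_j$ on each packing constraint $j$ — by at most a universal constant factor, so its LP-duality and potential-function bookkeeping goes through verbatim up to constants. This yields, for any chosen $B > 0$, a packing solution that is $1/B$-competitive while each constraint $\sum_i a_{ij} y_i \le c_j$ is violated by at most a factor $O(f(A)/B)$, where $f(A) = O(\log(a_{\max}/a_{\min}))$ is the same logarithmic quantity in the entries of $A$ that appears in \cite{buchbinder2009online}, and the parameter $B$ is fed into the phase schedule and the growth rate of $y$ exactly as there.

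The main obstacle I expect is the third step: certifying that the constant-factor over-satisfaction is genuinely ``free'' in the competitiveness and violation analysis, since the statement asserts the guarantees match \cite{buchbinder2009online} up to constants. Concretely one must re-examine the invariant coupling $x_j$ to the load of packing constraint $j$ and check that the relaxed invariant $\sum_j a_{ij} x_j \in [1,2)$ at the moment constraint $i$ is retired only shifts the final bounds by constants, and one must handle arbitrary (not $0/1$) entries of $A$, which is precisely where the initialization scale $\delta$ and the factor $f(A)$ must be chosen carefully so that $\log_2(M/\delta)$ — and hence the update count — stays polynomial. A minor additional point is confirming that rescaling or resetting $x$ and $y$ across phases preserves monotonicity of the reported $y$ and only multiplies the update count by the (polynomial) number of phases.
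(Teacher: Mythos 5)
Your overall strategy matches the paper's: run the primal--dual packing algorithm, over-satisfy each arriving dual covering constraint by a factor of $2$, argue that the derivative coupling $\partial X/\partial y_i \leq B\,\partial Y/\partial y_i$ survives the change up to constants (giving $1/B$-competitiveness and the $O((\log n + \log(a^{\max}_j/a^{\min}_j))/B)$ violation), and bound the number of $y$-updates by showing each violated constraint forces some coordinate of $x$ to grow geometrically. However, there is one genuine structural problem: you import the guess-and-double phase schedule from the covering algorithm into the packing algorithm. The paper's packing algorithm (Algorithm~\ref{alg:packing}) is deliberately phase-free: $B$ is a parameter handed to the algorithm up front, so no estimate of $\opt$ is needed, and this matters because in online packing each $y_i$ is \emph{irrevocably} assigned in round $i$ --- you cannot ``start a new phase'' and reprocess or rescale earlier packing variables the way the covering algorithm re-derives $x^r$ in each phase and takes a coordinatewise maximum. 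You flag the monotonicity of the reported $y$ across phases as a ``minor additional point,'' but it is in fact the reason the phase structure does not belong here; the fix is simply to drop the phases, as the paper does.

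Two smaller repairs. First, your update-count argument initializes $x_j$ to a positive $\delta$ so that ``doubling'' is meaningful; the paper instead keeps the initialization at zero (the increment function is $\frac{1}{na^{\max}_j}(\exp(\cdot)-1)$, whose additive term $\frac{1}{na^{\max}_j}$ plays the role of your $\delta$) and argues via a small/large dichotomy: each violated constraint either promotes some $x_j$ past the threshold $1/(2na^{\max}_j)$ or multiplies some already-large $x_j$ by $3/2$. A positive initialization would also make $X(0)>0$ while $Y(0)=0$, breaking the invariant $X(i)\leq B\,Y(i)$ that the competitiveness proof needs from the very first step, so this detail is not cosmetic. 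Second, your upper bound $c_jx_j \leq \rip{\mathbf{c}}{x} = O(\log n)\cdot \opt$ is the \emph{covering-algorithm} guarantee and does not apply to the $x$ maintained by the packing algorithm; the correct bound comes from the packing side: scaling $y$ by $B/B'$ makes it feasible, so weak duality gives $c_jx_j \leq X(i) \leq B\,Y(i) \leq B'\opt$, i.e.\ $x_j = O(B'\opt/c_j)$, which still yields a polynomial number of geometric increases. With the phases removed and these two bounds corrected, your argument coincides with the paper's proof of Theorem~\ref{thm:packing}.
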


We note that the competitive ratios given in \cite{buchbinder2009online} are tight, which also implies the tightness of the modified framework. The number of violating constraints depends not only on the number of covering variables and packing constraints $n$, but also on the number of bits used to present the entries in $A$ and $\mathbf{c}$. The formal proof for Theorem~\ref{thm:inf-covering} is provided in Section~\ref{sec:covering}, while the formal proof for Theorem~\ref{thm:inf-packing} provided in Appendix~\ref{sec:packing} is not directly relevant to this work, but may be of independent interest.

\subsection{High-level technical overview for online network optimization problems}

\paragraph{Online pairwise spanners.}

For this problem, a natural starting point is the flow-based LP approach for offline $s$-stretch directed spanners, introduced in \cite{dinitz2011directed}. 
The results of \cite{chlamtavc2020approximating} adopt a slight tweak for this approach to achieve an $\tO(n/\sqrt{\opt})$-approximation, where $\opt$ is the size of the optimal solution. With additional ideas, the $\tO(n/\sqrt{\opt})$-approximation is converted into an $\tO(n^{3/5 + \ep})$-approximation for pairwise spanners. 
One technical obstacle in the online setting is the lack of a useful lower bound for $\opt$. Another challenge is solving the LP for the spanner problem and rounding the solution in an online fashion, particularly as the natural LP is not a covering LP. We address these technical obstacles as discussed below in Section~\ref{subsubsec:online-ps}. Ultimately we obtain an $\tO(n^{4/5})$ competitive ratio for the online setting. The strategy here is to convert the LP for spanners into a covering LP, where the constraints are generated by an \emph{internal} LP. The covering LP previously appeared in \cite{dinitz2011directed} implicitly, and in \cite{dinitz2019lasserre} explicitly.

\paragraph{Online pairwise spanners with uniform edge lengths.} 

For the special case of uniform edge lengths, \cite{chlamtavc2020approximating} obtains an improved bound of $\tO(n^{3/5 + \ep})$. It is natural to ask if the online bound of $\tO(n^{4/5})$ mentioned above can be improved as well.
Indeed, we obtain an improved bound of $\tO(n^{2/3+ \ep})$ by replacing the greedy approach in the small $\opt$ regime by using the $\tO(k^{1/2+ \ep})$-competitive online algorithm discussed in Section~\ref{sec:us}. This algorithm leverages ideas from \cite{chlamtavc2020approximating} in reducing to label cover problems with ideas from the online network design algorithms of \cite{cekp}. Some additional ideas are required to combine the existing tools and among others we had to formulate a new pure covering LP that can be solved online, to facilitate the transition.

\paragraph{Online Steiner forests with uniform costs.} This problem is a special case of the online pairwise spanner problem where the distance requirement for each terminal pair is infinity and the edge lengths are uniform. The online algorithm for this problem has a similar structure to the one for pairwise spanners and similar obstacles to overcome.

\subsection{Additional background and related work} \label{sec:related}

\paragraph{Streaming, dynamic, and distributed algorithms for spanners.} A model related to online algorithms is that of streaming algorithms. In the streaming model an input is also revealed sequentially, but the algorithm is only allowed to use some small amount of space,  which is sublinear in the length of the stream, and is supposed to maintain an approximate solution. For this model, several papers consider spanner variants, such as undirected or weighted graphs, and additive or multiplicative stretch approximations, and the aim is to build spanners with small size or distortion \cite{Baswana_streamingalgorithm, KapralovW14, filtser2020graph}. 
In a related direction, spanners have also been studied in the setting of dynamic data structures, where the edges of a graph are inserted or removed one at a time and the goal is to maintain an approximate solution with small update time and space \cite{Elkin11,BodwinW16}. 
A relevant model is that of distributed computation where nodes in the network communicate efficiently to build a solution \cite{DerbelGP07,DerbelGPV08, FernandezW020}. 
As mentioned earlier, the survey by Ahmed et al.\ \cite{ahmed2019graph} gives a comprehensive account of the vast literature on spanners, and we refer the reader to the references within. 

\paragraph{Connections to buy-at-bulk formulations.}  In the buy-at-bulk network design problem \cite{awerbuch1997buy}, each edge is associated with a sub-additive cost function of its \emph{load}. Given a set of terminal demands, the goal is to route integral flows from each source to each sink concurrently to minimize the total cost of the routing. This problem is a generalization of various single-source or multicommodity network connectivity problems, including Steiner trees and Steiner forests, in which the cost of each edge is fixed once allocated. 
While most problems admit efficient polylogarithmic approximations in either the online or offline setting for undirected networks \cite{chekuri2011set,awerbuch2004line,berman1997line,gupta2017last}, the problems are much harder for directed networks. In the offline setting, the current best approximation ratio is $O(k^{\ep})$ in polynomial time and $\polylog(n)$ in quasi-polynomial time for the directed Steiner tree problem \cite{zelikovsky1997series,charikar1999approximation}, $O(\min\{k^{1/2 + \ep}, n^{2/3 + \ep}\})$ in polynomial time for the directed Steiner forest problem \cite{berman2013approximation,chekuri2011set}, and $O(\min\{k^{1/2 + \ep}, n^{4/5 + \ep}\})$ in polynomial time for the directed buy-at-bulk problem \cite{antonakopoulos2010approximating}. 
In the online setting for directed networks, \cite{cekp} showed that compared to offline, it suffices to pay an extra polylogarithmic factor, where the polylogarithmic term was later improved by \cite{shen2020online}. The main contribution of \cite{cekp} is essentially bringing the \emph{junction-tree-based} approach into the online setting for connectivity problems. This is the main ingredient that improves the competitive ratio of our online algorithm for pairwise spanners from $\tO(n^{4/5 + \ep})$ when edge lengths are general to $\tO(n^{2/3 + \ep})$ when edge lengths are uniform. Our approach for online pairwise spanners with uniform edge lengths combines this ingredient and the ideas of the offline pairwise spanner framework \cite{chlamtavc2020approximating} which tackles hard distance requirements.

\paragraph{Online LP's and separation oracles.} As previously mentioned, generating separating constraints with an oracle in the online setting is not new.
For example, this arises implicitly in early work on network optimization \cite{alon2006general} and the oracle is discussed explicitly in \cite{buchbinder2009online}.
As a recent example, \cite{gupta2014changing} develops online algorithms for the multistage matroid maintenance problem, which requires solving a covering LP with box constraints online. \cite{gupta2014changing} adjusts the separation oracle to only identify constraints that are violated by at least some constant. Because of the $\{0,1\}$-incidence structure of their LP, the sum of primal variables has to increase by a constant to satisfy such a constraint. Meanwhile the box constraints limit the total sum of primal variables to $O(n)$. This leads to an $O(n)$ bound on the number of separating constraints. While there are strong similarities to our approach, one difference is the use of the $\{0,1\}$-structure and box constraints to obtain their bound. Our comparably unstructured setting required us to develop an argument independent of concrete features such as these.

\paragraph{Other variants of online covering and packing problems.} Beyond linear objectives, there are other variants of online covering and packing problems, which focus on different objectives with linear constraints. This includes optimizing convex objectives \cite{azar2016online} and $\ell_q$-norm objectives \cite{shen2020online}. Other online problem-dependent variants include for instance  mixed covering and packing programs  \cite{azar2013online}, and sparse integer programs \cite{gupta2014approximating}. All these frameworks utilize the primal-dual technique,  which updates the covering and packing solutions simultaneously with some judiciously selected growth rate, to guarantee nice competitive ratio. Instead, our modified framework focuses on the efficiency of online algorithms for fundamental covering and packing problems, which is amenable to applications with exponential or unbounded number of constraints, where a violating one can be searched by an efficient separation oracle.

\subsection{Organization} 
Since the proof of Theorem \ref{thm:sqrt-k-ps} is the most involved contribution of this work, we start by presenting it in Section~\ref{sec:us}.
In Section~\ref{subsubsec:online-ps}, we prove Theorems~\ref{theorem:spanner}, \ref{thm:us}, and \ref{theorem:steiner} by designing and analyzing specific variants of a generic online algorithm.
We show the modified online covering framework in Section~\ref{sec:covering}, while the modified online packing framework is presented in Appendix~\ref{sec:packing}.

\section{Online Pairwise Spanners with Uniform Edge Lengths} \label{sec:us}

In this section, we prove Theorem \ref{thm:sqrt-k-ps}. Namely, we design a randomized online algorithm for the pairwise spanner problem with uniform edge lengths with competitive ratio $\tO(k^{1/2 + \delta})$ for any constant $\delta > 0$. We recall that in the {\em pairwise spanner} problem, we are given a directed graph $G=(V, E)$ with edge length $\ell: E \to \R_{\geq 0}$, a general set of $k$ terminals $D = \{(s_i,t_i) \mid i \in [k]\} \subseteq V \times V$, and a target distance $d_i$ for each terminal pair $(s_i,t_i)$, the goal is to output a subgraph $H=(V, E')$ of $G$ such that for every pair $(s_i, t_i) \in D$ it is the case that $d_H(s_i, t_i)\leq d_i$, i.e., the distance of a shortest $s_i \leadsto t_i$ path is at most $d_i$ in the subgraph $H$, and we want to minimize the number of edges in $E'$. The edge lengths are uniform if $\ell(e)=1$ for all $e \in E$.
In the online setting, the directed graph $G$ is given offline, while the vertex pairs in $D \subseteq V \times V$ arrive online one at a time. In the beginning, $E' = \emptyset$. Suppose $(s_i,t_i)$ and its target distance $d_i$ arrive in round $i$, we select some edges from $E$ and irrevocably add them to $E'$, such that in the subgraph $H=(V,E')$, $d_{H}(s_i,t_i) \leq d_i$.

\subsection{Outline of the proof of Theorem~\ref{thm:sqrt-k-ps}}

We start by describing the high-level approach of our proof of  Theorem~\ref{thm:sqrt-k-ps}. While the proof combines ideas of the online buy-at-bulk framework in  \cite{cekp} and of the reduction from the pairwise spanner problem to a connectivity problem in \cite{chlamtavc2020approximating}, implementing the  details require several new ideas. Specifically, we introduce a useful extension of the Steiner problem, called the {\em Steiner label cover} problem,  and our main contribution is an online covering LP formulation for this problem. This approach allows us to not only capture the global approximation property in an online setting, as in \cite{cekp}, but also to handle distance constraints, as in \cite{chlamtavc2020approximating}. The entire proof consists of three main ingredients: 
\begin{enumerate}
\item We first show that there exists an $O(\sqrt{k})$-approximate solution consisting of \textit{junction trees}. A junction tree is a subgraph consisting of an in-arborescence and out-arborescence rooted at the same vertex (see also Definition \ref{def:junc-tree}).
\item We then show a reduction from the online pairwise spanner problem to the online Steiner label cover problem on a forest with a loss of an
$O(k^{1/2+\delta})$ factor. More precisely, an $O(\sqrt{k})$ factor comes  from the junction tree approximation and an extra $O(k^{\delta})$ factor comes from the \textit{height reduction} technique introduced in \cite{chekuri2011set,helvig2001improved}. The height reduction technique allows us to focus on low-cost trees of height $O(1/\delta)$ in order to  recover a junction tree approximation.

\item Finally, we show a reduction from the online Steiner label cover problem to the online \emph{undirected group Steiner forest} problem, with a loss of a $\polylog(n)$ factor. More precisely, we first formulate an online covering LP for the online Steiner label cover instance, then construct an online undirected group Steiner forest instance from the LP solution, with a loss of a factor of $2$. By \cite{cekp}, the online undirected group Steiner forest problem on the forest that we construct can be solved with competitive ratio $\polylog(n)$.
\end{enumerate}

Combining these three ingredients results in an $\tO(k^{1/2+\delta})$-competitive algorithm. We provide further intuition below. The detailed description of the first, second, and third ingredients are in Sections~\ref{sec:jt-apx}, \ref{sec:lcp-hr}, and \ref{sec:online-slc}, respectively.

\paragraph{Junction tree approximation.}
Many connectivity problems, including Steiner forests, buy-at-bulk, and spanner problems, are usually solved using \emph{junction trees} introduced in \cite{chekuri2010approximation}. 

\begin{definition}\label{def:junc-tree}
A \emph{junction tree} rooted at $r \in V$ is a directed graph $G=(V,E)$, by taking the union of an in-arborescence rooted at $r$ and an out-arborescence rooted at $r$.\footnote{A junction tree does not necessarily have a tree structure in directed graphs, i.e., an edge may be used twice, once in the in-arborescence and once in the out-arborescence. Nevertheless, we continue using this term because of historical reasons. A similar notion can also be used for undirected graphs, where a junction tree is indeed a tree.} A \emph{junction tree solution} is a collection of junction trees rooted at different vertices, that satisfies all the terminal distance constraints.
\end{definition}

\begin{restatable}{lemma}{lemjtapx}\label{lem:jt-apx}
There exists an $O(\sqrt{k})$-approximate junction tree solution for pairwise spanners.
\end{restatable}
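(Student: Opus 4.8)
The plan is to run the standard density-based greedy argument, adapted to the pairwise spanner setting so that the structures removed at each step are junction trees and the distance constraints are respected globally. Fix an optimal pairwise spanner $H^* = (V, E^*)$ with $|E^*| = \opt$, and recall from Lemma~\ref{lem:optgeqsqrt} that $\opt \geq \sqrt{k}$; this is precisely the inequality that will make the $O(\sqrt{k})$ target achievable. Define the \emph{density} of a partial solution to be the ratio of the number of edges it uses to the number of terminal pairs it settles (i.e.\ for which it provides an $s_i \leadsto t_i$ path of length at most $d_i$). I want to show that as long as there remain unsettled terminal pairs, there exists a junction tree (the union of an in-arborescence and an out-arborescence at a common root $r$), built out of edges of the original graph, that settles some subset of the remaining pairs at density $O(\sqrt{k})$ — or, more carefully, at density $O(\opt / (\text{number of remaining pairs})^{1/2})$ or similar, so that the geometric-series / harmonic-sum telescoping over the greedy iterations yields a total of $O(\sqrt{k} \cdot \opt)$ edges, hence an $O(\sqrt{k})$ approximation.

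The key structural step is the existence of a low-density junction tree. Here I would partition $E^*$ according to the ``thick/thin'' dichotomy relative to a threshold, exactly mirroring the offline spanner analysis: for each still-unsettled pair $(s_i,t_i)$, consider a shortest feasible $s_i \leadsto t_i$ path $P_i \subseteq E^*$. If many such $P_i$ pass through a common vertex $r$, then the union of the $s_i \leadsto r$ prefixes and $r \leadsto t_i$ suffixes forms an in/out-arborescence pair rooted at $r$ — a junction tree — using at most $|E^*| \le \opt$ edges while settling all those pairs; if the number of pairs routed through $r$ is $\Omega(\sqrt{k})$ (for an appropriate choice of root by an averaging/pigeonhole argument over the at most $\opt \le$ something vertices touched, or over the paths), its density is $O(\opt/\sqrt{k}) = O(\sqrt{k})$ using $\opt \le$ (trivial bound on $\opt$ in terms of $k$ is the wrong direction, so) — more precisely, one shows some vertex lies on at least $(\text{remaining pairs})/\opt$ of the paths, giving density $\opt^2/(\text{remaining pairs})$, and summing $\opt^2/j$ over $j$ from $1$ to $k$ gives $O(\opt^2 \log k) = \tO(\opt \cdot \opt)$; combined with $\opt \le k$ this is not yet $O(\sqrt k)\opt$, so the correct bound must instead exploit $\opt \ge \sqrt k$ to replace one factor of $\opt$ by $\sqrt k$. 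I would look to \cite{chlamtavc2020approximating, bhattacharyya2012transitive} for the exact bookkeeping: the statement to prove at each iteration is that there is a junction tree of density $O(\sqrt{k})$, and the averaging must be set up (splitting into the case where the surviving optimum restricted to unsettled pairs has few edges versus many) so that this constant is genuinely $\sqrt{k}$ and not $\opt$.

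After the low-density junction tree claim, the rest is routine: the greedy procedure repeatedly extracts such a junction tree, adds it to the solution, removes the newly-settled pairs, and recurses; a standard set-cover-style charging argument (summing the density bound against the decreasing count of uncovered pairs) shows the total edge count is $O(\sqrt{k}) \cdot \opt$, proving the lemma. Note the resulting solution is by construction a junction tree solution in the sense defined just before the statement. The main obstacle I anticipate is getting the density bound to be $O(\sqrt{k})$ rather than $O(\opt)$ or $\tO(\opt)$ — i.e.\ correctly invoking $\opt \geq \sqrt{k}$ and the thick/thin split so that one of the two factors of $\opt$ that naively appears is traded for $\sqrt{k}$; this is exactly the place where the pairwise-spanner structure (as opposed to generic set cover) has to be used, and where I would most carefully follow the framework of \cite{chlamtavc2020approximating}.
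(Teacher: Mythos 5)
Your greedy skeleton is exactly the paper's: repeatedly extract a minimum-density junction tree, remove the settled pairs, and telescope via $\sum_{i=1}^{k} \opt/\sqrt{i} = O(\sqrt{k})\,\opt$. You also have the heavy-vertex half of the key density lemma correctly: if some vertex $r$ lies on feasible $s_i \leadsto t_i$ paths (of length at most $d_i$) inside the optimal solution for at least $\sqrt{k}$ distinct pairs, the union of those paths is a junction tree rooted at $r$ with at most $\opt$ edges and at least $\sqrt{k}$ settled pairs, i.e.\ density at most $\opt/\sqrt{k}$.

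The genuine gap is the complementary case, which you explicitly leave open and try to patch with the wrong tool. When no vertex lies on $\sqrt{k}$ of the optimal paths, the paper's argument (Lemma~\ref{lem:sqrt-k-den}, following the directed Steiner network analysis of \cite{chekuri2011set}) is: then each \emph{edge} of the optimum also lies on at most $\sqrt{k}$ of the paths, so duplicating every edge of the optimal solution $\sqrt{k}$ times makes the $k$ paths pairwise edge-disjoint in the resulting multigraph; the total duplicated cost is at most $\sqrt{k}\cdot\opt$, so by averaging over the $k$ disjoint paths some single path uses at most $\opt/\sqrt{k}$ edges, and a single path is itself a (degenerate) junction tree of density $\opt/\sqrt{k}$. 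Your proposed averaging (``some vertex lies on at least $(\text{remaining pairs})/\opt$ of the paths'') gives density $\opt^2/(\text{remaining pairs})$, which, as you yourself compute, is too weak; and your fallback of invoking $\opt \geq \sqrt{k}$ is a red herring --- Lemma~\ref{lem:optgeqsqrt} is not used in this lemma at all. What actually balances the two cases is choosing the dichotomy threshold to be $\sqrt{k}$ paths per vertex: above it the whole optimum amortizes over $\sqrt{k}$ pairs, below it edge-disjointness after $\sqrt{k}$-fold duplication yields one cheap path. Without that second case the density claim, and hence the lemma, is unproven.
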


In Section~\ref{sec:jt-apx}, we present the proof of Lemma~\ref{lem:jt-apx}.
At a high level, the  proof follows by a standard \emph{density} argument. A \emph{partial solution} is a subgraph that connects a subset of the terminal pairs within the required distances. The density of a partial solution is the ratio between the number of edges used and the number of terminal pairs connected within the required distances. This argument is used for solving offline problems including the Steiner forest problem \cite{chekuri2011set,feldman2012improved,berman2013approximation}, the buy-at-bulk problem \cite{antonakopoulos2010approximating}, the Client-Server $s$-spanner \cite{bhattacharyya2012transitive} problem, and the pairwise spanner problem \cite{chlamtavc2020approximating}, by greedily removing low-density partial solutions in an iterative manner. Fortunately, this iterative approach also guarantees a nice \emph{global} approximation that consists of junction trees rooted at different vertices, which is amenable in the online setting.

\paragraph{Reduction to Steiner label cover.}

In Section~\ref{sec:lcp-hr}, we reduce the pairwise spanner problem to the following extension of Steiner problem termed \emph{Steiner label cover}.

\begin{restatable}{definition}{defslc}\label{def:slc}
In the \emph{Steiner label cover} problem, we are given a (directed or undirected) graph $G=(V,E)$, non-negative edge costs $w: E \to \R_{\geq 0}$, and a collection of $k$ disjoint vertex subset pairs $(S_i, T_i)$ for $i \in [k]$ where $S_i, T_i \subseteq V$ and $S_i \cap T_i = \emptyset$. Each pair is associated with a relation (set of permissible pairs) $R_i \subseteq S_i \times T_i$. The goal is to find a subgraph $F=(V,E')$ of $G$, such that 1) for each $i \in [k]$, there exists $(s,t) \in R_i$ such that there is an $s \leadsto t$ path in $F$, and 2) the cost $\sum_{e \in E'}w(e)$ is minimized.
\end{restatable}

For the online Steiner label cover problem, $(S_i, T_i)$ and $R_i$ arrive online, and the goal is to irrevocably select edges to satisfy the first requirement and also approximately minimize the cost.

To reduce to the online Steiner label cover problem, we construct a directed graph $G'$ that consists of disjoint layered graphs from the given graph $G=(V,E)$. Each vertex in $G'$ is labelled by the distance to (from) the root of a junction tree. This allows us to capture distance constraints by a Steiner label cover instance with distance-based relations. From $G'$, we further construct an undirected graph $H$ which is a forest by the \emph{height reduction} technique \cite{chekuri2011set,helvig2001improved}. In $H$, we define the corresponding Steiner label cover instance, where the solution is guaranteed to be a forest. The Steiner label cover instance on the forest $H$ has a nice property. For each tree in $H$, the terminal vertices can be ordered in a way such that if an \emph{interval} belongs to the relation, then any \emph{subinterval} also belongs to the relation.

\begin{restatable}{definition}{defslcord}\label{def:slc-ord}
The \emph{ordered Steiner label cover problem on a forest} is defined as a special case of the Steiner label cover problem (see Definition~\ref{def:slc}) with the following properties.
\begin{enumerate}
    \item $G$ is an undirected graph consisting of disjoint union of trees $H_1, H_2, \dots, H_n$ each of which has a distinguished root vertex $r_j$ where $j \in [n]$.
    \item For each $(S_i,T_i)$ and $R_i$ for $i \in [k]$ where $S_i$ and $T_i$ are disjoint subsets of the leaf vertices of $G$, and each tree $H_j$, the input also includes the total orderings $\prec_{i,j}$ such that:
    \begin{enumerate}
    \item For $S^j_i := S_i \cap V(H_j)$ and $T^j_i := T_i \cap V(H_j)$, the ordering $\prec_{i,j}$ is defined on $S^j_i \cup T^j_i$.
    \item The root $r_j$ separates $S^j_i$ from $T^j_i$.
    \item If $s \in S^j_i$ and $t \in T^j_i$ are such that $(s,t) \in R_i$, then for any $s' \in S^j_i$ and $t' \in T^j_i$ such that $s \preceq_{i,j} s' \prec_{i,j} t' \preceq_{i,j} t$, we have that $(s',t') \in R_i$.
    \end{enumerate}
\end{enumerate}
\end{restatable}

We note that for the online ordered Steiner label cover problem on a forest, besides $(S_i, T_i)$ and $R_i$, the orderings $\{\succ_{i,j}\}_{j \in [n]}$ also arrive online.

We employ a well-defined mapping between junction trees in $G$ and forests in $H$ by paying an $\tO(k^{1/2+\delta})$ factor for competitive online solutions. A crucial step for showing Theorem~\ref{thm:sqrt-k-ps} is the following theorem.

\begin{restatable}{theorem}{thmoslcps} \label{thm:oslc-ps}
    For any constant $\delta > 0$, an $\alpha$-competitive polynomial-time algorithm for online ordered Steiner label cover 
    on a forest implies an $O(\alpha k^{1/2 + \delta})$-competitive polynomial-time algorithm for the online pairwise spanner problem on a directed graph with uniform edge lengths.
\end{restatable}

At a high level, the online pairwise spanner problem on a directed graph $G=(V,E)$ with uniform edge lengths reduces to an instance of online Steiner label cover on the forest $H$ with the following properties.
    \begin{enumerate}
        \item $H$ consists of disjoint trees $H_r$ for each vertex $r \in V$.
        \item $|V(H)| = n^{O(1/\delta)}$, $E(H)=n^{O(1/\delta)}$, and each tree $H_r$ has depth $O(1/\delta)$.
        \item For each arriving terminal pair $(s_i,t_i)$ with distance requirement $d_i$, there is a corresponding pair of terminal sets $(\hat{S}_i,\hat{T}_i)$ and relation $\hat{R}_i$ with  $|\hat{R}_i|=n^{O(1/\delta)}$, where $\hat{S}_i$ and $\hat{T}_i$ are disjoint subsets of vertices in $H$. Furthermore, we can generate total orderings $\prec_{i,r}$ based on the distance-based relations $\hat{R}_i$ such that the Steiner label cover instance is an ordered instance on the forest $H$.
    \end{enumerate}

This technique closely follows the one for solving offline pairwise spanners in \cite{chlamtavc2020approximating}. The intermediary problem considered in \cite{chlamtavc2020approximating} is the \emph{minimum density Steiner label cover problem}. In this framework, the solution is obtained by selecting the partial solution with the lowest density among the junction trees rooted at different vertices and repeat. In the online setting, to capture the global approximation for pairwise spanners, we construct a forest $H$ and consider all the possible roots simultaneously.

The constructions of $G'$, $H$, $\hat{S}_i$, $\hat{T}_i$, $\hat{R}_i$, and the orderings $\prec_{i,r}$, the proofs of Theorem~\ref{thm:oslc-ps} and Theorem~\ref{thm:sqrt-k-ps}, and the reduction from the online pairwise spanner problem on the directed graph $G$ to the online ordered Steiner label cover problem on the forest $H$, are presented in Section~\ref{sec:lcp-hr}. 

\paragraph{An online algorithm for Steiner label cover on $H$.}

In Section~\ref{sec:online-slc}, the goal is to prove the following lemma.
\begin{restatable}{lemma}{lemonlineslc} \label{lem:online-slc}
For the online ordered Steiner label cover problem on a forest (see Definition \ref{def:slc-ord}), there is a randomized polynomial-time algorithm with competitive ratio $\polylog(n)$.
\end{restatable}

We derive an LP formulation for the Steiner label cover instance on $H$. At a high level, the LP minimizes the total edge weight by selecting edges that cover paths with endpoint pairs which belong to the distance-based relation. We show that the LP for Steiner label cover can be converted into an online covering problem, which is efficiently solvable by Theorem~\ref{thm:inf-covering}.

The online rounding is based on the online LP solution for Steiner label cover. We extract the representative vertex sets $\tilde{S}_i$ and $\tilde{T}_i$ from the terminal sets $\hat{S}_i$ and $\hat{T}_i$, respectively, according to orderings $\prec_{i,r}$ and the contribution of the terminal vertex to the objective of the Steiner label cover LP. We show that the union of cross-products over partitions of $\tilde{S}_i$ and $\tilde{T}_i$ (based on the trees in $H$) 
is a subset of the distance-based relation $\hat{R}_i$. This allows us to reduce the online ordered Steiner label cover problem to the \emph{online undirected group Steiner forest} problem defined below.

\begin{restatable}{definition}{defgpst}\label{def:gpst}
In the \emph{online undirected group Steiner forest} problem, we are given an undirected graph $G=(V,E)$ with non-negative edge costs $w: E \to \R_0$, and a collection of $k$ terminal vertex set pairs $(S_i, T_i) \subseteq V \times V$ for $i \in [k]$ that arrives online. The goal is to irrevocably pick edges from $E$ to form a subgraph $F=(V,E')$ of $G$, such that 1) upon the arrival of pair $i \in [k]$, there exists $s \in S_i$ and $t \in T_i$ such that there is an $s \leadsto t$ path in $F$, and 2) the cost $\sum_{e \in E'}w(e)$ is approximately minimized.
\end{restatable}

This technique closely follows the one for solving offline pairwise spanners in \cite{chlamtavc2020approximating}. The main difference is that in the offline pairwise spanner framework, the LP formulation is density-based and considers only one (fractional) junction tree. To globally approximate the online pairwise spanner solution, our LP formulation is based on the forest $H$ and its objective is the total weight of a (fractional) forest.

The LP for the undirected group Steiner forest problem is roughly in the following form.
\begin{equation} \label{opt:inf-dsf} 
\begin{aligned}
& \min_{x} & & \sum_{e \in E(H)}{w'(e) x_e} \\
& \text{subject to}
& & \text{$x$ supports an $\tilde{S}_i$-$\tilde{T}_i$ flow of value 1} & \forall i \in [k],\\
& & & x_e \geq 0 & \forall e \in E(H).\\
\end{aligned}
\end{equation}
Here $w'$ denotes the edge weights in $H$. We show that a solution of the undirected group Steiner forest LP~\eqref{opt:inf-dsf} recovers a solution for the Steiner label cover LP by a factor of 2. The integrality gap of the undirected group Steiner forest LP is $\polylog(n)$ because the instance can be decomposed into single-source group Steiner tree instances by the structure of $H$ \cite{cekp,garg2000polylogarithmic}. This implies that the online rounding for the Steiner label cover LP can be naturally done via solving the undirected group Steiner forest instance online by losing a $\polylog(n)$ factor \cite{cekp}.

\paragraph{Putting it all together.} In Section~\ref{sec:us-sum}, we summarize the overall $\tO(k^{1/2 + \delta})$-competitive algorithm for online pairwise spanners when the given graph has uniform edge lengths. The reduction strategy  is as follows: 
\begin{enumerate}
\item Reduce the online pairwise spanner problem of the original graph $G$ to the online Steiner label cover problem of the directed graph $G'$ which consists of disjoint layered graphs.
\item Reduce the online Steiner label cover problem on $G'$ to an online ordered Steiner label cover problem on $H$, where $H$ is a forest.
\item In the forest $H$, reduce the online ordered Steiner label cover problem to the online undirected group Steiner forest problem.
\end{enumerate}

We note that $G'$ and $H$ are constructed offline. The pairwise spanner in $G$ is $O(\sqrt{k})$-approximated by junction trees according to Lemma~\ref{lem:jt-apx}. 
The graph $G'$ preserves at most twice the cost of the pairwise spanner (junction tree solution) in $G$. The solution of the ordered Steiner label cover problem in graph $H$ is a forest. One can map a forest in $H$ to junction trees in $G'$, via the height reduction technique by losing an $O(k^{\delta})$ factor. Finally, in the forest $H$, we solve the undirected group Steiner forest instance online and recover an ordered Steiner label cover solution by losing a $\polylog(n)$ factor. The overall competitive ratio is therefore $\tO(k^{1/2 + \delta})$.

\subsection{The junction-tree approximation} \label{sec:jt-apx}

We can now proceed to the proof of Lemma~\ref{lem:jt-apx}. Our goal is to approximate a pairwise spanner solution by a collection of junction trees rooted at different vertices.

\lemjtapx*

\begin{proof}
We use a density argument to show a greedy procedure that implies a $O(\sqrt{k})$-approximate junction tree solution. The \emph{density} of a partial solution is defined as follows.

\begin{definition}
Let $J$ denote a junction tree in $G$ and $D(J)$ denote the set of source-sink pairs $(s_i, t_i)$ connected by $J$ such that $d_J(s_i, t_i) \le d_i$, then the density of $J$ is $\rho(J):=|E(J)| / |D(J)|$, i.e., the number of edges used in $J$ divided by the number of terminal pairs connected by $J$ within required distances.
\end{definition}

Intuitively, we are interested in finding low-density junction trees. Let $\opt$ be the number of edges in an optimal pairwise spanner solution. We show that there always exists a junction tree with density at most a $\sqrt{k}$ factor of the optimal density. The proof of Lemma~\ref{lem:sqrt-k-den} closely follows the one for the directed Steiner network problem in \cite{chekuri2011set} by considering whether there is a \emph{heavy} vertex that lies in $s_i \leadsto t_i$ paths for distinct $i$ or there is a simple path with low density. The case analysis also holds when there is a distance constraint $d_i$ for each $(s_i,t_i)$. We provide the proof in Appendix~\ref{pf:lem:sqrt-k-den} for the sake of completeness.

\begin{restatable}{lemma}{lemsqrtkden} \label{lem:sqrt-k-den}
There exists a junction tree $J$ such that $\rho(J) \leq \opt / \sqrt{k}$.
\end{restatable}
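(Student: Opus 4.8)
The plan is to run the classical density / junction-tree argument for directed network connectivity (in the style of Chekuri, Even, Gupta, and Segev) and check that the hard distance constraints $d_i$ survive it; the excerpt already promises this, so I expect the adaptation to be routine. Fix an optimal pairwise spanner $H^*$, so $|E(H^*)| = \opt$. For each $i \in [k]$ choose a shortest $s_i \leadsto t_i$ path $P_i$ inside $H^*$, which we may take to be simple; feasibility of $H^*$ gives $\ell(P_i) \le d_i$. Let $F = \bigcup_{i \in [k]} P_i$, a subgraph of $H^*$ (hence of $G$) with $|E(F)| \le \opt$. Every junction tree produced below will be a subgraph of $F$.

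First I would ask whether some vertex $v$ lies on at least $\sqrt{k}$ of the paths $P_i$; let $I$ be the set of such indices, so $|I| \ge \sqrt{k}$. Form a junction tree rooted at $v$ by taking a shortest-path in-arborescence $T_{\mathrm{in}}$ of $F$ into $v$ together with a shortest-path out-arborescence $T_{\mathrm{out}}$ of $F$ out of $v$, and set $J = T_{\mathrm{in}} \cup T_{\mathrm{out}}$. Since $J \subseteq F$, we have $|E(J)| \le \opt$. For each $i \in I$ the path $P_i$ passes through $v$ exactly once, so $d_F(s_i, v) + d_F(v, t_i) \le \ell(P_i) \le d_i$; as shortest-path arborescences preserve distances to and from their root within $F$, the tree $J$ realizes an $s_i \leadsto t_i$ walk of length at most $d_i$ for every $i \in I$. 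Hence $|D(J)| \ge |I| \ge \sqrt{k}$ and $\rho(J) \le \opt / \sqrt{k}$, as desired.

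Otherwise no vertex — hence, charging each edge to its head, no edge — lies on $\sqrt{k}$ or more of the paths. Counting edge-incidences two ways, $\sum_{i \in [k]} |E(P_i)| = \sum_{e \in E(F)} |\{i : e \in P_i\}| < |E(F)| \cdot \sqrt{k} \le \opt \sqrt{k}$, so by averaging some path $P_{i_0}$ uses fewer than $\opt / \sqrt{k}$ edges. Viewed as a junction tree rooted at $s_{i_0}$ — out-arborescence equal to $P_{i_0}$, trivial in-arborescence — it still meets the requirement $d_{i_0}$ and connects one pair, so its density is below $\opt / \sqrt{k}$.

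I do not anticipate a genuine obstacle, since this is essentially a transcription of the known argument; the point needing the most care is the first case, where one must verify that cutting $P_i$ at $v$ and rerouting each half along the arborescences cannot increase its length — which is exactly why $T_{\mathrm{in}}, T_{\mathrm{out}}$ are taken to be shortest-path arborescences of $F$ (and not of $G$, where some edges of $P_i$ could be missing) and why the inequality $d_F(s_i, v) + d_F(v, t_i) \le \ell(P_i)$ is available. One should also be explicit that $D(J)$ counts pairs connected within their distance budgets, which is precisely what the first case establishes; the bound $|E(J)| \le |E(F)| \le \opt$ and the two-way counting in the second case are otherwise routine.
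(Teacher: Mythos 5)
Your proof is correct and follows essentially the same two-case density argument as the paper: either a vertex is "heavy" (lies on at least $\sqrt{k}$ of the optimal paths), yielding a junction tree with at most $\opt$ edges connecting at least $\sqrt{k}$ pairs, or an averaging/double-counting argument over the optimal paths produces a single cheap path serving one pair. Your write-up is in fact somewhat more careful than the paper's, making explicit that the arborescences are taken inside the union of optimal paths so that the distance budgets $d_i$ are preserved.
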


Now we are ready to prove Lemma~\ref{lem:jt-apx}. Consider the procedure that finds a minimum density junction tree in each iteration, and continues on the remaining disconnected terminal pairs. Suppose there are $t$ iterations, and after iteration $j \in [t]$, there are $n_j$ disconnected terminal pairs. Let $n_0 = k$ and $n_t = 0$. After each iteration, the minimum number of edges used for connecting the remaining terminal pairs in the remaining graph is at most $\opt$, so the number of edges used by this procedure is upper-bounded by
\[
\sum_{j=1}^t \frac{(n_{j-1} - n_j)\opt}{\sqrt{n_{j-1}}} \leq \sum_{i=1}^k \frac{\opt}{\sqrt{i}} \leq \int_1^{k+1} \frac{\opt}{\sqrt{x}} dx = 2 \opt (\sqrt{k+1} - 1) = O(\sqrt{k}) \opt\]
where the first inequality uses the upper bound by considering the worst case when only one terminal pair is removed in each iteration of the procedure.
\end{proof}

By the proof of Lemma~\ref{lem:sqrt-k-den}, Lemma~\ref{lem:jt-apx} also accounts for  \emph{duplicate} edges from junction trees rooted at different vertices.

\subsection{The Steiner label cover problem and height reduction} \label{sec:lcp-hr}

In this section, we give the construction of the graph $G'$ that consists of disjoint layered graphs, the undirected graph $H$ which is a forest, the terminal set pairs $(\hat{S}_i,\hat{T}_i)$, and the relations $\hat{R}_i$.

We reduce the pairwise spanner problem to the \emph{Steiner label cover} problem. A similar technique is introduced in \cite{chlamtavc2020approximating} by a density argument which is amenable for the offline setting. For our purpose, we instead focus on the global problem that is tractable in the online setting. In what follows we construct a relevant Steiner label cover instance (recall Definition~\ref{def:slc}).

\paragraph{Constructing the layered graph $G'$.} We reduce the pairwise spanner problem to the Steiner label cover problem as follows. Given $G=(V,E)$, for each vertex $r \in V$, we construct $G_r=(V_r, E_r)$. Let
\[V_r:= ((V \setminus \{r\}) \times \bigcup_{j \in [n-1]}\{-j, j\}) \cup \{(r,0)\}\]
and
\[E_r:= \{(u,j) \to (v,j+1) \mid (u,j), (v,j+1) \in V_r, (u,v) \in E\}.\]
For each edge $e \in E_r$, let the edge weight be 1. For each vertex $(u,-j)$ (respectively $(u,j)$) in $V_r$ where $j \in [n-1] \cup \{0\}$, add a vertex $(u^-,-j)$ (respectively $(u^+,j)$), and create an edge $(u^-,-j) \to (u,-j)$ (respectively $(u,j) \to (u^+,j)$) with weight 0. This concludes the construction of $G_r$ (see Figure~\ref{fig:lg} for an illustration).

Let $G'$ be the disjoint union of $G_r$ for $r \in V$. Given a pairwise spanner instance on $G$, we can construct a Steiner label cover instance on $G'$. 
For explicitness, we denote the vertex $(u,j)$ in $V_r$ by $(u,j)_r$. For each $(s_i, t_i)$, let $S_i = \{(s^-_i, -j)_r \mid j > 0, r \in V \setminus \{s_i\}\} \cup (s^-_i,0)_{s_i}$, $T_i = \{(t^+_i, j)_r \mid j > 0, r \in V \setminus \{t_i\}\} \cup (t^+_i,0)_{t_i}$, and $R_i = \{((s^-_i, -j_s)_r,(t^+_i, j_t)_r) \mid r \in V, j_s+j_t \leq d_i\}$ where we recall that $d_i$ is the distance requirement for pair $(s_i,t_i)$. Intuitively, a copy of $s_i$ and a copy of $t_i$ belongs to the relation $R_i$ if 1) they are connected by a junction tree with the same root $r$, and 2) the distance between them is at most $d_i$ in this junction tree.

\begin{figure}[H]
\centering
\begin{subfigure}{.2\textwidth}
\begin{tikzpicture}[scale=0.5]
    \node[fill,circle, inner sep=0pt, minimum size=0.2cm] (a) at (-4,2) {};
    \node[fill,circle, inner sep=0pt, minimum size=0.2cm] (c) at (-4,-2) {};
    \node[fill,circle, inner sep=0pt, minimum size=0.2cm] (b) at (-2,2) {};
    \node[fill,circle, inner sep=0pt, minimum size=0.2cm] (d) at (-2,-2) {};
    \node[fill,circle, inner sep=0pt, minimum size=0.2cm] (r) at (0,0) {};
    \node at (-4,2.7) {$a$};
    \node at (-4,-2.7) {$c$};
    \node at (-2,2.7) {$b$};
    \node at (-2,-2.7) {$d$};
    \node at (0,0.7) {$r$};
    \path[->]
        (a) edge (b)
        (b) edge (c)
        (b) edge (r)
        (c) edge (a)
        (c) edge (d)
        (d) edge (b)
        (d) edge (r)
        (r) edge (a)
        (r) edge (c);
\end{tikzpicture}
\subcaption{$G$}
\end{subfigure}
\begin{subfigure}{.7\textwidth}
\begin{tikzpicture}[scale=0.65]
    \node at (-9,1.5) {$a$};
    \node at (-9,0.5) {$b$};
    \node at (-9,-0.5) {$c$};
    \node at (-9,-1.5) {$d$};
    \node at (0,-1) {$(r,0)$};
    \node[fill,circle, inner sep=0pt, minimum size=0.2cm] (r0) at (0,0) {};
    \foreach \x in {1,...,4}
    \foreach \y in {1,...,4}{
        \node[fill,circle, inner sep=0pt, minimum size=0.2cm] (l\x\y) at (-2*\x,-2.5 + \y) {};
        \node[fill,circle, inner sep=0pt, minimum size=0.2cm] (r\x\y) at (2*\x,-2.5 + \y) {};
    }
    \foreach \x in {1,...,4}{
        \node[draw,circle, inner sep=0pt, minimum size=0.2cm] (lo\x4) at (-2*\x-1, 3) {};
        \node[draw,circle, inner sep=0pt, minimum size=0.2cm] (lo\x3) at (-2*\x-1, 2.3) {};
        \node[draw,circle, inner sep=0pt, minimum size=0.2cm] (lo\x2) at (-2*\x-1, -2.3) {};
        \node[draw,circle, inner sep=0pt, minimum size=0.2cm] (lo\x1) at (-2*\x-1, -3) {};
        \path[->]
            (lo\x4) edge[dashed] (l\x4)
            (lo\x3) edge[dashed] (l\x3)
            (lo\x2) edge[dashed] (l\x2)
            (lo\x1) edge[dashed] (l\x1);
        \node[draw,circle, inner sep=0pt, minimum size=0.2cm] (ro\x4) at (2*\x+1, 3) {};
        \node[draw,circle, inner sep=0pt, minimum size=0.2cm] (ro\x3) at (2*\x+1, 2.3) {};
        \node[draw,circle, inner sep=0pt, minimum size=0.2cm] (ro\x2) at (2*\x+1, -2.3) {};
        \node[draw,circle, inner sep=0pt, minimum size=0.2cm] (ro\x1) at (2*\x+1, -3) {};
        \path[->]
            (r\x4) edge[dashed] (ro\x4)
            (r\x3) edge[dashed] (ro\x3)
            (r\x2) edge[dashed] (ro\x2)
            (r\x1) edge[dashed] (ro\x1);
    }
    \foreach \x in {1,...,3}{
        \pgfmathtruncatemacro{\xnext}{\x+1}
        \draw[->] (l\xnext1) -- (l\x3);
        \draw[->] (l\xnext2) -- (l\x1);
        \draw[->] (l\xnext3) -- (l\x2);
        \draw[->] (l\xnext2) -- (l\x4);
        \draw[->] (l\xnext4) -- (l\x3);
        \draw[->] (r\x1) -- (r\xnext3);
        \draw[->] (r\x2) -- (r\xnext1);
        \draw[->] (r\x3) -- (r\xnext2);
        \draw[->] (r\x2) -- (r\xnext4);
        \draw[->] (r\x4) -- (r\xnext3);
    }
    \draw[->] (l11) -- (r0);
    \draw[->] (l13) -- (r0);
    \draw[->] (r0) -- (r12);
    \draw[->] (r0) -- (r14);
    \node[draw,circle, inner sep=0pt, minimum size=0.2cm] (lr0) at (-0.5, 2) {};
    \node[draw,circle, inner sep=0pt, minimum size=0.2cm] (rr0) at (0.5, 2) {};
    \path[->]
        (lr0) edge[dashed] (r0)
        (r0) edge[dashed] (rr0);
\end{tikzpicture}
\subcaption{$G_r$}
\end{subfigure}
\caption{Construction of $G_r$ given $G=(V,E)$ and $r \in V$}
\label{fig:lg}
\end{figure}

We can recover a pairwise spanner solution in $G$ with at most the cost of the corresponding Steiner label cover solution in $G'$. Suppose we have the Steiner label cover solution in $G'$, for each selected edge $(u,j) \to (v,j+1)$ in $G_r$, we select $u \to v$ for the pairwise spanner solution in $G$.

\begin{claim} \label{cl:layer-ps}
There exists a junction tree solution for the pairwise spanner problem in $G$ with at most the cost of a Steiner label cover solution in $G'$.
\end{claim}

We observe that any junction tree solution as a pairwise spanner in $G$ corresponds to a Steiner label cover solution in $G'$ and vice versa (see Figure \ref{fig:lg-ex} for an illustration). Let $\opt_{junc}$ be the optimal value of a junction tree solution that counts duplicate edges from junction trees rooted at different vertices for the pairwise spanner problem in $G$. Then the optimum of the Steiner label cover solution in $G'$ is at most $2\opt_{junc}$. In each junction tree, each edge is used at most once in the in-arborescence and at most once in the out-arborescence.

\begin{claim} \label{cl:layer-ps-opt}
The optimal value of the Steiner label cover problem in $G'$ is at most $2\opt_{junc}$.
\end{claim}

\begin{figure}[H]
\centering
\begin{subfigure}{.2\textwidth}
\begin{tikzpicture}[scale=0.5]
    \node[fill,circle, inner sep=0pt, minimum size=0.2cm] (a) at (-4,2) {};
    \node[fill,circle, inner sep=0pt, minimum size=0.2cm] (c) at (-4,-2) {};
    \node[fill,circle, inner sep=0pt, minimum size=0.2cm] (b) at (-2,2) {};
    \node[fill,circle, inner sep=0pt, minimum size=0.2cm] (d) at (-2,-2) {};
    \node[fill,circle, inner sep=0pt, minimum size=0.2cm] (r) at (0,0) {};
    \node at (-4,2.7) {$a$};
    \node at (-4,-2.7) {$c$};
    \node at (-2,2.7) {$b$};
    \node at (-2,-2.7) {$d$};
    \node at (0,0.7) {$r$};
    \path[->]
        (a) edge [draw=red, line width=1.5pt] (b)
        (b) edge (c)
        (b) edge [draw=red, line width=1.5pt] (r)
        (c) edge (a)
        (c) edge (d)
        (d) edge (b)
        (d) edge [draw=red, line width=1.5pt] (r)
        (r) edge [draw=red, line width=1.5pt] (a)
        (r) edge [draw=red, line width=1.5pt] (c);
\end{tikzpicture}
\subcaption{$G$}
\end{subfigure}
\begin{subfigure}{.7\textwidth}
\begin{tikzpicture}[scale=0.65]
    \node at (-9,1.5) {$a$};
    \node at (-9,0.5) {$b$};
    \node at (-9,-0.5) {$c$};
    \node at (-9,-1.5) {$d$};
    \node at (0,-1) {$(r,0)$};
    \node[fill,circle, inner sep=0pt, minimum size=0.2cm] (r0) at (0,0) {};
    \foreach \x in {1,...,4}
    \foreach \y in {1,...,4}{
        \node[fill,circle, inner sep=0pt, minimum size=0.2cm] (l\x\y) at (-2*\x,-2.5 + \y) {};
        \node[fill,circle, inner sep=0pt, minimum size=0.2cm] (r\x\y) at (2*\x,-2.5 + \y) {};
    }
    \foreach \x in {1,...,4}{
        \node[draw,circle, inner sep=0pt, minimum size=0.2cm] (lo\x4) at (-2*\x-1, 3) {};
        \node[draw,circle, inner sep=0pt, minimum size=0.2cm] (lo\x3) at (-2*\x-1, 2.3) {};
        \node[draw,circle, inner sep=0pt, minimum size=0.2cm] (lo\x2) at (-2*\x-1, -2.3) {};
        \node[draw,circle, inner sep=0pt, minimum size=0.2cm] (lo\x1) at (-2*\x-1, -3) {};
        \path[->]
            (lo\x4) edge[dashed] (l\x4)
            (lo\x3) edge[dashed] (l\x3)
            (lo\x2) edge[dashed] (l\x2)
            (lo\x1) edge[dashed] (l\x1);
        \node[draw,circle, inner sep=0pt, minimum size=0.2cm] (ro\x4) at (2*\x+1, 3) {};
        \node[draw,circle, inner sep=0pt, minimum size=0.2cm] (ro\x3) at (2*\x+1, 2.3) {};
        \node[draw,circle, inner sep=0pt, minimum size=0.2cm] (ro\x2) at (2*\x+1, -2.3) {};
        \node[draw,circle, inner sep=0pt, minimum size=0.2cm] (ro\x1) at (2*\x+1, -3) {};
        \path[->]
            (r\x4) edge[dashed] (ro\x4)
            (r\x3) edge[dashed] (ro\x3)
            (r\x2) edge[dashed] (ro\x2)
            (r\x1) edge[dashed] (ro\x1);
    }
    \foreach \x in {1,...,3}{
        \pgfmathtruncatemacro{\xnext}{\x+1}
        \draw[->] (l\xnext1) -- (l\x3);
        \draw[->] (l\xnext2) -- (l\x1);
        \draw[->] (l\xnext3) -- (l\x2);
        \draw[->] (l\xnext2) -- (l\x4);
        \draw[->] (l\xnext4) -- (l\x3);
        \draw[->] (r\x1) -- (r\xnext3);
        \draw[->] (r\x2) -- (r\xnext1);
        \draw[->] (r\x3) -- (r\xnext2);
        \draw[->] (r\x2) -- (r\xnext4);
        \draw[->] (r\x4) -- (r\xnext3);
    }
    \draw[->] (l11) -- (r0);
    \draw[->] (l13) -- (r0);
    \draw[->] (r0) -- (r12);
    \draw[->] (r0) -- (r14);
    \node[draw,circle, inner sep=0pt, minimum size=0.2cm] (lr0) at (-0.5, 2) {};
    \node[draw,circle, inner sep=0pt, minimum size=0.2cm] (rr0) at (0.5, 2) {};
    \path[->]
        (lr0) edge[dashed] (r0)
        (r0) edge[dashed] (rr0);
     \path[->]
        (lo24) edge [dashed, draw=red, line width=1.5pt] (l24)
        (l24) edge [draw=red, line width=1.5pt] (l13)
        (l13) edge [draw=red, line width=1.5pt] (r0)
        (r0) edge [draw=red, line width=1.5pt] (r12)(r12) edge [dashed, draw=red, line width=1.5pt] (ro12)
        (lo11) edge [dashed, draw=red, line width=1.5pt] (l11)
        (l11) edge [draw=red, line width=1.5pt] (r0)
        (r0) edge [draw=red, line width=1.5pt] (r14)
        (r14) edge [draw=red, line width=1.5pt] (r23)
        (r23) edge [dashed, draw=red, line width=1.5pt] (ro23);
\end{tikzpicture}
\subcaption{$G_r$}
\end{subfigure}
\caption{Suppose $(s_1,t_1) = (a,c)$, $d_1 = 3$, $(s_2,t_2) = (d,b)$, and $d_2 = 3$. We can find a junction tree rooted at $r$ consisting of the red thick edges in $G$ and $G_r$ that connects the two terminal pairs and satisfies the distance requirements. Note that edge $a \to b$ is used in the in-arborescence and out-arborescence in $G$.}
\label{fig:lg-ex}
\end{figure}

\paragraph{Constructing the Steiner label cover instance on the forest $H$.} To find an approximate solution for the Steiner label cover instance, we further process the graph $G'$ by the height reduction technique \cite{chekuri2011set,helvig2001improved}. The height reduction technique allows us to generate the forest $H$. The structure of $H$ is later useful in Section~\ref{sec:online-slc} for deriving an LP formulation for the Steiner label cover problem.

\begin{lemma} \label{lem:hr} (\cite{chlamtavc2020approximating})
Let $G=(V,E)$ be a directed graph where each edge is associated with a non-negative weight $w: E \to \R_{\geq 0}$. Let $r \in V$ be a root and $\sigma > 0$ be some parameter. Then we can efficiently construct an undirected tree $T_r$ rooted at $r'$ of height $\sigma$ and size $|V|^{O(\sigma)}$ together with edge weights $w': E(T_r) \to \R_{\geq 0}$ and a vertex mapping $\Psi: V(T_r) \to V$, such that
\begin{enumerate}
    \item For any in-arborescence (out-arborescence) $T$ in $G$ rooted at $r$, there exists a subtree $T'$ in $T_r$ rooted at $r'$ such that letting $L(T)$ and $L(T')$ be the set of leaves of $T$ and $T'$, respectively, we have $\Psi(L(T')) = L(T)$. Moreover, 
    \[\sum_{e \in T} w(e) \leq O(\sigma |L(T)|^{1 / \sigma}) \sum_{e \in T'} w'(e).\]
    \item Given any subtree $T'$ in $T_r$ rooted at $r'$, we can efficiently find an in-arborescence (out-arborescence) $T$ in $G$ rooted at $r$, such that $\Psi(L(T')) = L(T)$ and $\sum_{e \in T'} w'(e) \leq \sum_{e \in T} w(e)$.
\end{enumerate}
\end{lemma}

For our purpose, for each $r \in V$, we construct a tree $H_r$ from $G_r$ by the height reduction technique. We recall that $G_r$ is a layered graph with a center vertex $(r,0)$. Let $G^-_r$ denote the subgraph of $G_r$ induced by the vertex set
\[\tuple{\bigcup_{v \in V \setminus \{r\}}\{v, v^-\} \times \{-j \mid j \in [n-1]\}} \cup \{(r,0), (r^-,0)\},\]
and similarly, let $G^+_r$ denote the subgraph of $G_r$ induced by the vertex set
\[\tuple{\bigcup_{v \in V \setminus \{r\}}\{v, v^+\} \times [n-1]} \cup \{(r,0), (r^+,0)\}.\]
By Lemma~\ref{lem:hr}, with $(r,0)$ being the root, we can construct a tree $T_r^-$ rooted at $r^-$ for $G_r^-$ which approximately preserves the cost of any in-arborescence rooted at $(r,0)$ in $G_r^-$ by a subtree in $T_r^-$, and similarly a tree $T_r^+$ rooted at $r^+$ for $G_r^+$ which approximately preserves the cost of any out-arborescence rooted at $(r,0)$ in $G_r^+$ by a subtree in $T_r^+$. We further add a super root $r'$ and edges $\{r',r^-\}$ and $\{r',r^+\}$ both with weight 0. This concludes the construction of the weighted tree $H_r$ (see Figure~\ref{fig:hr} for an illustration).

\begin{figure}[h]
\centering
\begin{tikzpicture}
    \draw (-2,0) -- (2,0) -- (0,2) --cycle;
    \node[draw=none] at (0,0.5) {$T_r^-$};
    \draw (4,0) -- (8,0) -- (6,2) --cycle;
    \node[draw=none] at (6,0.5) {$T_r^+$};
    \node at (0,2.5) {$r^-$};
    \node at (6,2.5) {$r^+$};
    \node at (3,3) {$r'$};
    \node[fill,circle, inner sep=0pt, minimum size=0.2cm] (t1) at (0,2) {};
    \node[fill,circle, inner sep=0pt, minimum size=0.2cm] (t2) at (6,2) {};
    \node[fill,circle, inner sep=0pt, minimum size=0.2cm] (t3) at (3,2.5) {};
    \path[-]
        (t1) edge (t3)
        (t2) edge (t3);
\end{tikzpicture}
\caption{Construction of $H_r$}
\label{fig:hr}
\end{figure}

Let $H$ be the disjoint union of $H_r$ for $r \in V$. To prove Theorem~\ref{thm:sqrt-k-ps}, we show that we can achieve an $\tO(k^\delta)$-approximation for the Steiner label cover problem on graph $H$ in an online manner. We set $\sigma=\lceil 1/\delta \rceil$ and apply Lemma~\ref{lem:hr} to obtain the weight $w':E(H) \to \R_{\geq 0}$ and the mapping $\Psi_r: V(H_r) \to V(G_r)$ for all $r \in V$. Let $\Psi$ be the union of the mappings for all $r \in V$. For each pair $(s_i,t_i)$ in the original graph $G$, we recall that we focus on the vertex subset pair $(S_i,T_i)$ of $G'$ and its relation $R_i$ which captures the distance requirement.

To establish the correspondence between the Steiner label cover instances, we clarify the mapping between the leaves of the arborescences in $G'$ and $H$. Given a vertex $(s^-_i,-j)_r$ in $V(G')$ with a non-negative $j$, let $\Psi^{-1}((s^-_i,-j)_r)$ denote the set of leaves in $T^-_r$ that maps to $(s^-_i,-j)_r$ by $\Psi$. Similarly, given a vertex $(t^+_i,j)_r$ in $V(G')$ with a non-negative $j$, let $\Psi^{-1}((t^+_i,j)_r)$ denote the set of leaves in $T^+_r$ that maps to $(t^+_i,j)_r$ by $\Psi$. The mapping $\Psi^{-1}$ naturally defines the terminal sets of interest $\hat{S_i}:=\Psi^{-1}(S_i)=\{\hat{s} \in V(H) \mid \Psi(\hat{s}) \in S_i\}$ and $\hat{T_i}:=\Psi^{-1}(T_i)=\{\hat{t} \in V(H) \mid \Psi(\hat{t}) \in T_i\}$. The relation is also naturally defined: $\hat{R}_i:= \{(\hat{s},\hat{t}) \in \hat{S}_i \times \hat{T}_i \mid (\Psi(\hat{s}),\Psi(\hat{t})) \in R_i\}$. We note that for $\hat{s} \in \hat{S}_i$ and $\hat{t} \in \hat{T}_i$, $(\hat{s},\hat{t})$ belongs to $\hat{R}_i$ only when $\hat{s}$ and $\hat{t}$ belong to the same tree $H_r$ (see Figure \ref{fig:sc} for an illustration).

\begin{figure}[h]
\centering
\begin{tikzpicture}
    \draw (-2,0) -- (2,0) -- (0,2) --cycle;
    \node[draw=none] at (0,0.5) {$T_r^-$};
    \draw (4,0) -- (8,0) -- (6,2) --cycle;
    \node[draw=none] at (6,0.5) {$T_r^+$};
    \node at (0,2.5) {$r^-$};
    \node at (6,2.5) {$r^+$};
    \node at (3,3) {$r'$};
    \node[fill,circle, inner sep=0pt, minimum size=0.2cm] (t1) at (0,2) {};
    \node[fill,circle, inner sep=0pt, minimum size=0.2cm] (t2) at (6,2) {};
    \node[fill,circle, inner sep=0pt, minimum size=0.2cm] (t3) at (3,2.5) {};
    \path[-]
        (t1) edge (t3)
        (t2) edge (t3);
    \node at (-1,0.3) {$\hat{S}^r_i$};
    \node at (7,0.3) {$\hat{T}^r_i$};
    \draw (-1.6,0) -- (-1.6,0.2) -- (-1.2,0.2);
    \draw (-0.8,0.2) -- (-0.5,0.2) -- (-0.5,0);
    \draw (8-1.6,0) -- (8-1.6,0.2) -- (8-1.2,0.2);
    \draw (8-0.8,0.2) -- (8-0.5,0.2) -- (8-0.5,0);
\end{tikzpicture}
\caption{From pairwise spanner to Steiner label cover. In $G$, consider connecting $s_i$ and $t_i$ by using a junction tree rooted at $r$. In $H_r$, we consider terminal sets $\hat{S}^r_i := \hat{S}_i \cap H_r$ and $\hat{T}^r_i := \hat{T}_i \cap H_r$. A junction tree in $G$ rooted at $r$ corresponds to a subtree with leaves that belongs to $\hat{S}^r_i \cup \hat{T}^r_i$. The set $\hat{R}^r_i := \hat{R}_i \cap (\hat{S}^r_i \times \hat{T}^r_i)$ defines the admissible pairs, which maintains the distance requirement $d_i$ in the layered graph $G_r$.}
\label{fig:sc}
\end{figure}

\paragraph{Proving Theorem~\ref{thm:oslc-ps} and Theorem~\ref{thm:sqrt-k-ps}.}

We first show that the online Steiner label cover problem on $H$ is indeed an ordered one. If we can obtain an $\alpha$-competitive solution, then we can recover an $O(\alpha k^{1/2 + \delta})$-competitive solution for the online pairwise spanner problem on $G$.

Having $H$, $(\hat{S}_i,\hat{T}_i)$, and $\hat{R}_i$ as aforementioned, we show how the orderings $\succ_{i,r}$ are generated so that we have the complete input for the ordered Steiner label cover problem. The ordering construction technique closely follows the one in \cite{chlamtavc2020approximating}. The difference is that we generate an ordering for each tree for global approximation purpose while \cite{chlamtavc2020approximating} focuses on one tree for greedy purpose.

In round $i$, let $\hat{S}^r_i := \hat{S}_i \cap H_r$ be the source leaves in $V(T^-_r)$ and $\hat{T}^r_i := \hat{T}_i \cap H_r$ be the sink leaves in $V(T^+_r)$. We have that $\hat{S}_i = \bigcup_{r \in V} \hat{S}^r_i$ and $\hat{T}_i = \bigcup_{r \in V} \hat{T}^r_i$ because $H_r$ are disjoint for different $r \in V$ so $\{\hat{S}^r_i \mid r \in V\}$ forms a partition of $\hat{S}_i$ and a similar argument holds for $\hat{T}_i$. For the definition of the relation, let $\hat{R}^r_i:=\hat{R}_i \cap (\hat{S}^r_i \times \hat{T}^r_i)$. Intuitively, we partition $\hat{S}_i$, $\hat{T}_i$, and $\hat{R}_i$ into $\hat{S}^r_i$, $\hat{T}^r_i$, and $\hat{R}^r_i$ based on the vertex $r \in V$, respectively.

We define the total ordering $\prec_{i,r}$ on $\hat{S}^r_i \cup \hat{T}^r_i$ as follows.

\begin{itemize}
    \item For any $\hat{s} \in \hat{S}^r_i$ and any $\hat{t} \in \hat{T}^r_i$, we have that $\hat{s} \prec_{i,r} \hat{t}$.
    \item For any $\hat{s},\hat{s}' \in \hat{S}^r_i$ where $\Psi(\hat{s}) = (s^-_i, -j)_r$ and $\Psi(\hat{s}') = (s^-_i, -j')_r$, we have that $\hat{s} \prec_{i,r} \hat{s}'$ if and only if $j > j'$, i.e., under the mapping $\Psi$, the one that is farther from $(r,0)$ in $G'$ has a lower rank according to $\prec_{i,r}$. Ties are broken arbitrarily.
    \item For any $\hat{t}',\hat{t} \in \hat{T}^r_i$ where $\Psi(\hat{t}') = (t^+_i, j')_r$ and $\Psi(\hat{t}) = (t^+_i, j)_r$, we have that $\hat{t'} \prec_{i,r} \hat{t}$ if and only if $j' < j$, i.e., under the mapping $\Psi$, the one that is farther from $(r,0)$ in $G'$ has a higher rank according to $\prec_{i,r}$. Ties are broken arbitrarily.
\end{itemize}

For ease of notion, we use $\preceq_{i,r}$ to denote inclusive lower order, i.e., when $\hat{s} \preceq_{i,r} \hat{s}'$, it is either $\hat{s} = \hat{s}'$ or $\hat{s} \prec_{i,r} \hat{s}'$. If the pair $(\hat{s},\hat{t}) \in \hat{R}^r_i$ corresponds to an interval, then any subinterval pair $(\hat{s}',\hat{t}')$ where $\hat{s} \preceq_{i,r} \hat{s}'$ and $\hat{t'} \preceq_{i,r} \hat{t}$ also belongs to the relation $\hat{R}^r_i$. This is by the construction of the distance-based relations $R_i$ and $\hat{R}_i$ which implies that the Steiner label cover problem is ordered by $\prec_{i,r}$.

\begin{claim} \label{cl:ord}
If $(\hat{s},\hat{t}) \in \hat{R}^r_i$, then for any $\hat{s}' \in \hat{S}^r_i$ and $\hat{t}' \in \hat{T}^r_i$ such that $\hat{s} \preceq_{i,r} \hat{s}' \prec_{i,r} \hat{t}' \preceq_{i,r} \hat{t}$, we have that $(\hat{s}',\hat{t}') \in \hat{R}^r_i$.
\end{claim}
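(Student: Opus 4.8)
The plan is to unwind the definitions of $\hat{R}^r_i$, of the relation $R_i$ on $G'$, and of the ordering $\prec_{i,r}$, and to observe that membership in $\hat{R}^r_i$ reduces to a single inequality on layer indices, $j_s+j_t\le d_i$, which is monotone in exactly the direction in which the ordering is defined.

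Concretely, first I would record that since $\hat s\in\hat S^r_i\subseteq\hat S_i=\Psi^{-1}(S_i)$ and $\hat s$ lies in the tree $H_r$, its image under $\Psi$ has the form $\Psi(\hat s)=(s^-_i,-j_s)_r$ for some $j_s\ge 0$; likewise $\Psi(\hat t)=(t^+_i,j_t)_r$ with $j_t\ge 0$, and the same structural fact gives $\Psi(\hat s')=(s^-_i,-j'_s)_r$ and $\Psi(\hat t')=(t^+_i,j'_t)_r$. The hypothesis $(\hat s,\hat t)\in\hat R^r_i$ then unpacks to $(\Psi(\hat s),\Psi(\hat t))\in R_i$, which by the definition $R_i=\{((s^-_i,-j_s)_r,(t^+_i,j_t)_r)\mid r\in V,\ j_s+j_t\le d_i\}$ is exactly the inequality $j_s+j_t\le d_i$.

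Next I would translate the chain $\hat s\preceq_{i,r}\hat s'\prec_{i,r}\hat t'\preceq_{i,r}\hat t$ into inequalities on the indices. By the definition of $\prec_{i,r}$ on sources, $\hat s\preceq_{i,r}\hat s'$ means $j_s\ge j'_s$; by the definition on sinks, $\hat t'\preceq_{i,r}\hat t$ means $j'_t\le j_t$ (the middle relation $\hat s'\prec_{i,r}\hat t'$ holds automatically, since every source leaf precedes every sink leaf in $\prec_{i,r}$, and it carries no index information). Adding these two inequalities gives $j'_s+j'_t\le j_s+j_t\le d_i$, hence $(\Psi(\hat s'),\Psi(\hat t'))\in R_i$ and therefore $(\hat s',\hat t')\in\hat R_i$. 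Since moreover $\hat s'\in\hat S^r_i$ and $\hat t'\in\hat T^r_i$ by hypothesis, we conclude $(\hat s',\hat t')\in\hat R^r_i$, which is the claim.

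There is no substantive obstacle: the statement is a direct bookkeeping consequence of the definitions. The only point that requires a moment's care is the orientation of $\prec_{i,r}$ — it is set up so that, for sources, a lower rank corresponds to a \emph{larger} layer index (farther from $(r,0)$ in $G'$), whereas for sinks a lower rank corresponds to a \emph{smaller} layer index; consequently, moving inward along the interval toward the middle only decreases each relevant index, which is precisely what makes the additive distance budget $j_s+j_t\le d_i$ downward closed under passing to subintervals. I would state the proof in this order so that the role of the distance-based construction of $R_i$ (and hence $\hat R_i$) in making the instance ordered is transparent.
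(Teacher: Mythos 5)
Your proof is correct and is exactly the argument the paper has in mind: the paper states Claim~\ref{cl:ord} without a formal proof, asserting only that it follows ``by the construction of the distance-based relations $R_i$ and $\hat{R}_i$,'' and your write-up simply fills in those details (membership in $\hat{R}^r_i$ reduces to $j_s+j_t\le d_i$, and the orientation of $\prec_{i,r}$ makes both indices non-increasing as one passes to a subinterval, so the budget is preserved). No gaps; the handling of ties and of the automatic source-before-sink relation is also consistent with the paper's definitions.
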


Now we are ready to show Theorem~\ref{thm:oslc-ps}.

\thmoslcps*

\begin{proof}

Suppose we have an ordered Steiner label cover instance on $H$, $(\hat{S}_i,\hat{T}_i)$, $\hat{R}_i$, and $\succ_{i,r}$ as aforementioned and we have an $\alpha$-competitive solution. From the structure of $H$, the online solution is guaranteed to be a forest. By Lemma~\ref{lem:hr}, we can recover an online Steiner label cover solution in $G'$ from that in $H$, by losing a factor of $O(k^{\delta})$. These imply that we have an $O(\alpha k^{\delta})$-competitive solution for the online Steiner label cover problem in $G'$.

We recall that by Claim~\ref{cl:layer-ps-opt}, the Steiner label cover solution in $G'$ pays at most twice of the optimal junction tree solution for pairwise spanners in $G$ with value $\opt_{junc}$ (which also counts the duplicate edges of junction trees rooted at different vertices). By Lemma~\ref{lem:jt-apx}, we have that $\opt_{junc}$ is $O(\sqrt{k}) \opt$ where $\opt$ is the optimal value of the pairwise spanner in $G$.

By Claim~\ref{cl:layer-ps}, we can recover a pairwise spanner solution in $G$ from an online Steiner label cover solution in $G'$. Putting the results together, we obtain an online Steiner label cover solution in $G'$ with cost at most $O(\alpha k^{\delta}) \opt_{junc}$ which recovers an online pairwise spanner solution with cost at most $O(\alpha k^{1/2+\delta}) \opt$.
\end{proof}

The remaining is to show that there exists a randomized polylogarithmically competitive online algorithm for the ordered Steiner label cover problem on the forest $H$, with $(\hat{S}_i,\hat{T}_i)$ and $\hat{R}_i$ arriving online. We show the following lemma in Section \ref{sec:online-slc}.

\lemonlineslc*

Combining Lemma~\ref{lem:online-slc} and Theorem~\ref{thm:oslc-ps} implies Theorem~\ref{thm:sqrt-k-ps}.

\subsection{Online ordered Steiner label cover on a forest} \label{sec:online-slc}

This section is devoted to proving Lemma~\ref{lem:online-slc}.

We start with the high level sketch of the proof. In the original work for the offline setting \cite{chlamtavc2020approximating}, an LP-based approach is used for searching a minimum density junction tree, which picks the best solution among the roots $r \in V$ and repeats. For the online setting, we use an LP that captures the global approximate solution. The constraints of the LP involve the terminal pairs $(\hat{S}_i,\hat{T}_i)$, which is a non-cross-product relation that requires a more meticulous rounding scheme. Fortunately, we are able to use a similar technique to \cite{chlamtavc2020approximating} that generates the terminal vertex orderings, extracts representatives of the sets of terminals according to the orderings, and reduces the ordered Steiner label cover problem to the undirected group Steiner forest problem. We employ the online buy-at-bulk framework \cite{cekp} to solve the final undirected group Steiner forest instance on the forest with specific structure.

\subsubsection{An LP-based approach for Steiner label cover}
We recall that $(\hat{S_i}, \hat{T_i})$ is the terminal pair that consists of leaves in the forest $H$, where $H$ is the disjoint union of trees $H_r$ for $r \in V$. 
Given $(\hat{s},\hat{t}) \in \hat{R}_i$, let $r \in V$ be such that $\hat{s}$ and $\hat{t}$ are leaves of $H_r$. Let $r(\hat{s})=r(\hat{t})$ denote the root $r'$ of this tree $H_r$. 
The goal is to formulate an LP that fractionally picks the edges to pack the $\hat{s}$-$\hat{t}$ paths where $(\hat{s},\hat{t})$ belongs to relation $\hat{R}_i$. We recall that $H$ has $\sigma=\lceil 1/\delta \rceil$ layers, so $E(H)=n^{O(1/\delta)}$ and $|\hat{R}_i|=n^{O(1/\delta)}$.
We use a natural LP relaxation for the problem described in Lemma~\ref{lem:online-slc}:

\begin{equation} \label{opt:slc} 
\begin{aligned}
& \min_{x,y,z} & & \sum_{e \in E(H)}{w'(e) x_e} \\
& \text{subject to}
& & \sum_{(\hat{s},\hat{t}) \in \hat{R}_i}y_{\hat{s},\hat{t}} \geq 1 & \forall i \in [k],\\
& & & \sum_{\hat{t} \mid (\hat{s},\hat{t}) \in \hat{R}_i} y_{\hat{s},\hat{t}} \leq z_{\hat{s}} & \forall i \in [k], \hat{s} \in \hat{S}_i,\\
& & & \sum_{\hat{s} \mid (\hat{s},\hat{t}) \in \hat{R}_i} y_{\hat{s},\hat{t}} \leq z_{\hat{t}} & \forall i \in [k], \hat{t} \in \hat{T}_i,\\
& & & \text{$x$ supports an $\hat{s}$-$r(\hat{s})$ flow of value $z_{\hat{s}}$} & \forall i \in [k], \hat{s} \in \hat{S}_i,\\
& & & \text{$x$ supports an $\hat{t}$-$r(\hat{t})$ flow of value $z_{\hat{t}}$} & \forall i \in [k], \hat{t} \in \hat{T}_i,\\
& & & x_e \geq 0 & \forall e \in E(H),\\
& & & y_{\hat{s},\hat{t}} \geq 0 & \forall i \in [k], (\hat{s},\hat{t}) \in \hat{R}_i.
\end{aligned}
\end{equation}
This LP is a relaxation by considering an integral solution $x,y,z$ where $x_e, y_{\hat{s},\hat{t}}, z_{\hat{s}}, z_{\hat{t}} \in \{0,1\}$. $x_e$ is an indicator of edge $e$, $y_{\hat{s},\hat{t}}$ is an indicator of the $\hat{s}$-$\hat{t}$ path, while $z_{\hat{s}}$ and $z_{\hat{t}}$ denote the $\hat{s}$-$r(\hat{s})$ and $\hat{t}$-$r(\hat{t})$ \emph{flow} value, respectively. Suppose we have an integral solution for Steiner label cover, then for each $(\hat{S}_i,\hat{T}_i)$, there must exists a $y_{\hat{s},\hat{t}}=1$ where $(\hat{s},\hat{t}) \in \hat{R}_i$, which also indicates that all edges $e$ of the $\hat{s}$-$\hat{t}$ path must satisfy $x_e=1$ and there is an $\hat{s}$-$\hat{t}$ flow with value 1. The first constraint ensures that for each pair $(\hat{S}_i,\hat{T}_i)$, there must be an $\hat{s}$-$\hat{t}$ path that belongs to the relation $\hat{R}_i$,  i.e., $\hat{s} \in \hat{S}_i$, $\hat{t} \in \hat{T}_i$, and $(\hat{s},\hat{t}) \in \hat{R}_i$. In the second constraint, for each $i \in [k]$, $z_{\hat{s}}$ denotes an upper bound for the total number of paths that have $\hat{s}$ as the source, where each such $\hat{s}$-$\hat{t}$ path satisfies $(\hat{s},\hat{t}) \in \hat{R}_i$. In the fourth constraint, the path upper bound $z_{\hat{s}}$ is subject to the capacity $x$, i.e., the edge indicator $x$ is naturally an upper bound that packs the $\hat{s}$-$\hat{t}$ paths which belong to the relation. The third and the fifth constraints are defined similarly.\footnote{We thank the anonymous reviewer for suggesting to us the following simpler covering LP formulation similar to the one in \cite{shen2020online}:
\begin{align*}
  \begin{aligned}
    \text{minimize } & \sum_{e \in E(H)}{w'(e) x_e}
    \text{ over } x_e \ge 0 & \forall e \in E(H) 
    \text{ s.t.\ } \sum_{(\hat{s}, \hat{t}) \in \hat{R}_i}\min\{MC(\hat{s},r(\hat{s})),MC(\hat{t},r(\hat{t}))\} \geq 1 \quad \forall i \in [k]
  \end{aligned}
\end{align*}
where $MC(u,v)$ denotes the minimum $u$-$v$ cut value. This is a covering LP because one can consider all possible combinations by picking either $\hat{s}$ or $\hat{t}$ for each $(\hat{s},\hat{t}) \in \hat{R}_i$ as the constraints. A subroutine that computes the sum of the lowest min-cut combination can be used as a separation oracle. In LP~\eqref{opt:slc}, $y_{\hat{s},\hat{t}}$ was later used to extract representative terminal vertices for the undirected group Steiner forest problem. One can also use the above LP to get $x$ then use LP \eqref{opt:int-slc} to get $y$ to extract the representative terminal vertices.
}

In round $i$, $(\hat{S}_i,\hat{T}_i)$ and $\hat{R}_i$ arrive. The goal is to update $x$ in a non-decreasing manner so that the constraints are satisfied with some underlying variables $y_{\hat{s},\hat{t}}$ (which is initially set to 0), and $z_{\hat{s}}$ and $z_{\hat{t}}$ where $(\hat{s},\hat{t}) \in \hat{R}_i$, such that the objective is approximately optimal.

We convert LP~\eqref{opt:slc} into a covering LP. Suppose in round $i$, we are given the edge capacity $x$. 
Let $P(\hat{s})$ denote the set of edges in the $\hat{s}$-$r(\hat{s})$ path and $P(\hat{t})$ denote the set of edges in the $\hat{t}$-$r(\hat{t})$ path.
By setting the first constraint as the objective, merging the second and fourth constraint, and merging the third and fifth constraint, we derive the following internal offline LP, 
\begin{equation} \label{opt:int-slc}
\begin{aligned}
& \max_{y} & & \sum_{(\hat{s},\hat{t}) \in \hat{R}_i}{y_{\hat{s},\hat{t}}} \\
& \text{subject to}
& & \sum_{\hat{t} \mid (\hat{s},\hat{t}) \in \hat{R}_i} y_{\hat{s},\hat{t}} \leq x_e & \forall \hat{s} \in \hat{S}_i, \forall e \in P(\hat{s}),\\
& & & \sum_{\hat{s} \mid (\hat{s},\hat{t}) \in \hat{R}_i} y_{\hat{s},\hat{t}} \leq x_e & \forall \hat{t} \in \hat{T}_i, \forall e \in P(\hat{t}),\\
& & & y_{\hat{s},\hat{t}} \geq 0 & \forall (\hat{s},\hat{t}) \in \hat{R}_i,
\end{aligned}
\end{equation}
and its dual
\begin{equation} \label{opt:int-slc-dual} 
\begin{aligned}
& \min_{\alpha} & & \sum_{e \in E(H)}{x_e \tuple{\sum_{\hat{s} \mid e \in P(\hat{s})}\alpha_{\hat{s},e} + \sum_{\hat{t} \mid e \in P(\hat{t})}\alpha_{\hat{t},e} }} \\
& \text{subject to}
& & \sum_{e \in P(\hat{s})}{\alpha_{\hat{s},e}} + \sum_{e \in P(\hat{t})}{\alpha_{\hat{t},e}} \geq 1 & \forall (\hat{s},\hat{t}) \in R_i,\\
& & & \alpha_{\hat{s},e} \geq 0 & \forall \hat{s} \in \hat{S}_i, \forall e \in P(\hat{s}),\\
& & & \alpha_{\hat{t},e} \geq 0 & \forall \hat{t} \in \hat{T}_i, \forall e \in P(\hat{t}).
\end{aligned}
\end{equation}
We can solve LP~\eqref{opt:int-slc} and LP~\eqref{opt:int-slc-dual} directly because there are only polynomially many constraints. 
If the objective value is at least 1, then $x$ is \emph{good}, i.e., there exists an $\hat{s} \leadsto \hat{t}$ path. 
Otherwise, $x$ is \emph{bad}.

To solve LP~\eqref{opt:slc}, we check if $x$ is good or bad by solving LP~\eqref{opt:int-slc-dual}. If $x$ is good, then there exists $y$ and $z$ such that all the constraints of LP~\eqref{opt:slc} are satisfied, so we move on to the next round. Otherwise, $x$ is bad, so we increment $x$ until it becomes good, which implies that 
\[\sum_{e \in E(H)}{x_e \tuple{\sum_{\hat{s} \mid e \in P(\hat{s})}\alpha_{\hat{s},e} + \sum_{\hat{t} \mid e \in P(\hat{t})}\alpha_{\hat{t},e} }} \geq 1\]
for all feasible $\alpha$ in LP~\eqref{opt:int-slc-dual}. Let $A_i$ be the feasible polyhedron of LP~\eqref{opt:int-slc-dual} in round $i$. We derive the following LP which is equivalent to LP~\eqref{opt:slc}, by considering all the constraints of LP~\eqref{opt:int-slc-dual} from round 1 to round $k$.
\begin{equation} \label{opt:slc-covering} 
\begin{aligned}
& \min_{x} & & \sum_{e \in E(H)}{w'(e) x_e} \\
& \text{subject to}
& & \sum_{e \in E(H)}{x_e \tuple{\sum_{\hat{s} \mid e \in P(\hat{s})}\alpha_{\hat{s},e} + \sum_{\hat{t} \mid e \in P(\hat{t})}\alpha_{\hat{t},e} }} \geq 1 & \forall i \in [k], \alpha \in A_i,\\
& & & x_e \geq 0 & \forall e \in E(H).
\end{aligned}
\end{equation}

In round $i \in [k]$, the subroutine that solves LP~\eqref{opt:int-slc-dual} and checks if the optimum is good or not, is the separation oracle used for solving LP~\eqref{opt:slc-covering} online. Here we use Theorem~\ref{thm:covering} (the formal version of Theorem~\ref{thm:inf-covering}) to show that LP  \eqref{opt:slc-covering} can be solved online in polynomial time by paying an $O(\log n)$ factor. This requires that $\log(1/ \alpha_{\hat{s},e})$, $\log(1/ \alpha_{\hat{t},e})$, and $\log \lp^*$ where $\lp^*$ is the optimum of LP \eqref{opt:slc-covering}, can be represented by polynomial number of bits used for the edge weights $w'$. $\log \lp^*$ can be represented by polynomial number of bits. For $\log(1/ \alpha_{\hat{s},e})$ and $\log(1/ \alpha_{\hat{t},e})$, the subroutine that solves LP \eqref{opt:int-slc-dual} returns a solution $\alpha$ which is represented by polynomial number of bits. By Theorem~\ref{thm:covering}, we have the following Lemma.
\begin{lemma} \label{lem:online-unit-spanner-lp}
There exists a polynomial-time $O(\log n)$-competitive online algorithm for LP~\eqref{opt:slc}.
\end{lemma}

\subsubsection{Online rounding}
Now we are ready to show how to round the solution of LP~\eqref{opt:slc} online. The rounding scheme consists of two main steps. First we extract the representative sets according to the orderings $\succ_{i,r}$ and the online solution of LP~\eqref{opt:slc}, then reduce the online ordered Steiner label cover instance to an online undirected group Steiner forest instance.

We recall that in round $i$, we have terminal sets $\hat{S}^r_i := \hat{S}_i \cap H_r$ and $\hat{T}^r_i := \hat{T}_i \cap H_r$, and the relations $\hat{R}^r_i:=\hat{R}_i \cap (\hat{S}^r_i \times \hat{T}^r_i)$. The terminal sets and relations are partitioned based on the trees $H_r$ where $r \in V$. We also have orderings $\succ_{i,r}$ on $\hat{S}^r_i \cup \hat{T}^r_i$.

\paragraph{Extracting representative sets.} Let $\opt$ be the optimum of the Steiner label cover instance on $H$. Our goal is to find an integral solution of LP~\eqref{opt:slc} online with an objective value at most $\polylog(n) \opt$. We first use the online LP solution to find the representative vertex sets, then use the representatives to reduce to an online undirected group Steiner forest instance.

For each $i \in [k]$ and $r \in V$, we need to find the representative sets $\tilde{S}^r_i \subseteq \hat{S}^r_i$ and $\tilde{T}^r_i \subseteq \hat{T}^r_i$ such that $\tilde{S}^r_i \times \tilde{T}^r_i \subseteq \hat{R}^r_i$. Suppose in round $i$, we have a fractional solution $x,y,z$ of LP~\eqref{opt:slc}. The set pruning is accomplished by focusing on each $r \in V$ and its ordering $\prec_{i,r}$, and taking all terminal vertices according to their contribution to the objective value until we reach the one that cumulatively contributes half of the objective. Formally, for each $r \in V$:
\begin{itemize}
    \item Let $\gamma^r_i := \sum_{(\hat{s},\hat{t}) \in \hat{R}^r_i}y_{\hat{s},\hat{t}}$.
    \item  Choose the median sets, i.e., define the boundary vertices
    \[\hat{s}_{i,r}:= \max_{\prec_{i,r}}\bigg\{\hat{s} \in \hat{S}^r_i \: \bigg| \: \sum_{\hat{s}' \succeq_{i,r} \hat{s} }{\sum_{\hat{t} \mid (\hat{s}',\hat{t}) \in \hat{R}^r_i}{y_{\hat{s}',\hat{t}}}} \geq \gamma^r_i/2 \bigg\}\]
    and
    \[\hat{t}_{i,r}:= \min_{\prec_{i,r}}\bigg\{\hat{t} \in \hat{T}^r_i \: \bigg| \: \sum_{\hat{t}' \preceq_{i,r} \hat{t} }{\sum_{\hat{s} \mid (\hat{s},\hat{t}') \in \hat{R}^r_i}{y_{\hat{s},\hat{t}'}}} \geq \gamma^r_i/2 \bigg\},\]
    and let $\tilde{S}^r_i := \{\hat{s} \mid  \hat{s} \succeq_{i,r} \hat{s}_{i,r}\}$ and $\tilde{T}^r_i := \{\hat{t} \mid  \hat{t} \preceq_{i,r} \hat{t}_{i,r}\}$.
\end{itemize}

The choice of the median sets guarantees that at least half of the LP value is preserved. We have to verify that the union of cross-products of these sets does not contain any pairs that are disallowed by the relation $\hat{R}_i$. The following lemma closely follows the pairwise spanner framework in \cite{chlamtavc2020approximating}. The main difference is that the framework focuses on one tree for greedy purposes, while we focus on all trees for global approximation purpose.

\begin{lemma} \label{lem:rep-set}
The union of the cross-products of the representative sets $\tilde{S}^r_i$ and $\tilde{T}^r_i$ is a subset of $\hat{R}_i$, i.e.,
\[\bigcup_{r \in V}(\tilde{S}^r_i \times \tilde{T}^r_i) \subseteq \hat{R}_i.\]
\end{lemma}

\begin{proof}
The special case of a single tree (i.e., a single $r$) was proven in \cite{chlamtavc2020approximating}.
That is, for fixed $r \in V$, \cite{chlamtavc2020approximating} implies that
    $\tilde{S}^r_i \times \tilde{T}^r_i \subseteq \hat{R}^{r}_i$. Taking the union gives the desired claim. For the sake of completeness, the proof of a single tree case is provided in Appendix~\ref{sec:pf-rep-set}.
\end{proof}

\paragraph{From ordered Steiner label cover to group Steiner forest.} Given the representative sets, we reduce the online ordered Steiner label cover problem to the online undirected group Steiner forest problem as follows. For round $i$, let the representative sets be $\tilde{S}_i:=\bigcup_{r \in V}\tilde{S}^r_i$ and $\tilde{T}_i:=\bigcup_{r \in V}\tilde{T}^r_i$. We connect $\tilde{S}_i$ and $\tilde{T}_i$. It suffices to consider the cross product of $\tilde{S}_i$ and $\tilde{T}_i$ since $H$ is partitioned by $H_r$, i.e., representative vertices from different trees in $H$ cannot be connected.

\begin{figure}[h]
\centering
\begin{tikzpicture}
    \draw (-2,0) -- (2,0) -- (0,2) --cycle;
    \node[draw=none] at (0,0.5) {$T_{r_1}^-$};
    \draw (4,0) -- (8,0) -- (6,2) --cycle;
    \node[draw=none] at (6,0.5) {$T_{r_1}^+$};
    \node at (0,2.5) {$r_1^-$};
    \node at (6,2.5) {$r_1^+$};
    \node at (3,3) {$r'_1$};
    \node[fill,circle, inner sep=0pt, minimum size=0.2cm] (t1) at (0,2) {};
    \node[fill,circle, inner sep=0pt, minimum size=0.2cm] (t2) at (6,2) {};
    \node[fill,circle, inner sep=0pt, minimum size=0.2cm] (t3) at (3,2.5) {};
    \path[-]
        (t1) edge (t3)
        (t2) edge (t3);
    
    \draw (-2,-1) -- (2,-1) -- (0,-3) --cycle;
    \node[draw=none] at (0,-1.5) {$T_{r_2}^-$};
    \draw (4,-1) -- (8,-1) -- (6,-3) --cycle;
    \node[draw=none] at (6,-1.5) {$T_{r_2}^+$};
    \node at (0,-3.5) {$r_2^-$};
    \node at (6,-3.5) {$r_2^+$};
    \node at (3,-4) {$r'_2$};
    \node[fill,circle, inner sep=0pt, minimum size=0.2cm] (s1) at (0,-3) {};
    \node[fill,circle, inner sep=0pt, minimum size=0.2cm] (s2) at (6,-3) {};
    \node[fill,circle, inner sep=0pt, minimum size=0.2cm] (s3) at (3,-3.5) {};
    \path[-]
        (s1) edge (s3)
        (s2) edge (s3);
        
    \node at (-1.8+6,-0.5) {$\tilde{T}_1$};
    \node at (-0.8+6,-0.5) {$\tilde{T}_2$};
    \node at (1.8+6,-0.5) {$\tilde{T}_3$};
    \draw[-] (-2+6,0) -- (-2+6,-0.2) -- (-1.6+6,-0.2) -- (-1.6+6,0);
    \draw[-] (-1+6,0) -- (-1+6,-0.2) -- (-0.6+6,-0.2) -- (-0.6+6,0);
    \draw[-] (2+6,0) -- (2+6,-0.2) -- (1.6+6,-0.2) -- (1.6+6,0);
    \draw[-] (-2+6,-1) -- (-2+6,-0.8) -- (-1.6+6,-0.8) -- (-1.6+6,-1);
    \draw[-] (-1+6,-1) -- (-1+6,-0.8) -- (-0.6+6,-0.8) -- (-0.6+6,-1);
    \draw[-] (2+6,-1) -- (2+6,-0.8) -- (1.6+6,-0.8) -- (1.6+6,-1);
    
    \node at (-1.8,-0.5) {$\tilde{S}_1$};
    \node at (-0.8,-0.5) {$\tilde{S}_2$};
    \node at (1.8,-0.5) {$\tilde{S}_k$};
    \draw[-] (-2,0) -- (-2,-0.2) -- (-1.6,-0.2) -- (-1.6,0);
    \draw[-] (-1,0) -- (-1,-0.2) -- (-0.6,-0.2) -- (-0.6,0);
    \draw[-] (2,0) -- (2,-0.2) -- (1.6,-0.2) -- (1.6,0);
    \draw[-] (-2,-1) -- (-2,-0.8) -- (-1.6,-0.8) -- (-1.6,-1);
    \draw[-] (-1,-1) -- (-1,-0.8) -- (-0.6,-0.8) -- (-0.6,-1);
    \draw[-] (2,-1) -- (2,-0.8) -- (1.6,-0.8) -- (1.6,-1);
\end{tikzpicture}
\caption{Connecting $\tilde{S}_i$ and $\tilde{T}_i$ in the undirected group Steiner forest problem on $H$}
\label{fig:h-prime}
\end{figure}

The goal is to select edges from $E(H)$, such that there is an $\hat{s}$-$\hat{t}$ path for $\hat{s} \in \hat{S}_i$ and $\hat{t} \in \hat{T}_i$ for all $i \in [k]$ and the total edge weight is approximately optimal. 

The online buy-at-bulk framework \cite{cekp} considers a height reduction forest and uses the single-source (sink) group Steiner tree problem \cite{alon2006general} as a black-box to solve the online directed buy-at-bulk problem. Here, we apply this approach to our online undirected group Steiner forest instance.

\begin{theorem}(\cite{cekp})
\label{thm:bab}
Given the graph $H$ and the terminal sets $\tilde{S}_i$ and $\tilde{T}_i$ for $i \in [k]$ as the undirected group Steiner forest instance, there is a randomized polynomial-time algorithm with competitive ratio $\polylog(n)$.
\end{theorem}

We recall that $\opt$ is the optimum of the Steiner label cover instance on $H$ with terminal set pairs $(\hat{S}_i,\hat{T}_i)$ and relations $\hat{R}_i$. Using Theorem~\ref{thm:bab}, we show the following which completes the proof for Lemma~\ref{lem:online-slc}.

\begin{lemma} \label{lem:slc-bab}
There exists a randomized polynomial-time online algorithm for LP~\eqref{opt:slc} which returns an integral solution such that the objective value is at most $\polylog(n) \opt$.
\end{lemma}

\begin{proof}
Let $x$, $y$, and $z$ be a feasible solution for LP~\eqref{opt:slc} with objective value $\lp$.
By Claim~\ref{cl:ord} and Lemma~\ref{lem:rep-set}, 
we can obtain a feasible solution $x'$, $y'$, and $z'$ for LP~\eqref{opt:slc} with objective value at most $2\lp$ as follows. For any $r \in V$ and $(\hat{s},\hat{t}) \in \hat{R}^r_i$, there is a corresponding variable $y_{\hat{s},\hat{t}}$ for the $\hat{s}$-$r'$-$\hat{t}$ path. 
If $\hat{s} \notin \tilde{S}^r_i$, then we replace the $\hat{s}$-$r'$ path by an $\hat{s'}$-$r'$ path where $\hat{s'} \in \tilde{S}^r_i$; if $\hat{t} \notin \tilde{T}^r_i$, then we replace the $r'$-$\hat{t}$ path by an $r'$-$\hat{t'}$ path where $\hat{t'} \in \tilde{T}^r_i$. This shifts the path selection $y$ towards the paths with representative endpoints $\hat{s} \in \tilde{S}^r_i$ and $\hat{t} \in \tilde{T}^r_i$ and addresses the new path assignment $y'$ (see Figure \ref{fig:2lp} for an illustration). We obtain $x'$ by adjusting the capacity. For each edge $e$ that belongs to an $\hat{s}$-$\hat{t}$ path where $(\hat{s},\hat{t}) \in \tilde{S}^r_i \times \tilde{T}^r_i$, we have that $x'_e \leq 2x_e$ because the representatives contribute at least half of the original LP 
path selection. If $e$ does not belong the these paths, then $x'_e = 0$. We also adjust the flow value and obtain $z'$ accordingly.

\begin{figure}[ht]
\centering
\begin{subfigure}{1\textwidth}
\centering
\begin{tikzpicture}
\coordinate (O) at (0,0); 

\node at (2, 0)  (1) {$\hat{s}_1$};
\node at (4, 0)  (2) {$\hat{s}_2$};
\node at (6, 0)  (3) {$\hat{s}_3$};
\node at (8, 0)  (4) {$\hat{s}_4$};
\node at (10, 0)  (5) {$\hat{t}_1$};
\node at (12, 0)  (6) {$\hat{t}_2$};
\node at (14, 0)  (7) {$\hat{t}_3$};
\node at (16, 0)  (8) {$\hat{t}_4$};

\node at (3, 0) {$\prec$};
\node at (5, 0) {$\prec$};
\node at (7, 0) {$\prec$};
\node at (9, 0) {$\prec$};
\node at (11, 0) {$\prec$};
\node at (13, 0) {$\prec$};
\node at (15, 0) {$\prec$};

\draw (2,0.2) -- (2,0.5) -- (10,0.5) -- (10,0.2);
\draw (4,0.6) -- (4,0.9) -- (12,0.9) -- (12,0.6);
\draw (6,1) -- (6,1.3) -- (14,1.3) -- (14,1);
\draw (8,1.4) -- (8,1.7) -- (16,1.7) -- (16,1.4);

\draw[dashed] (6,-0.2) -- (6,-0.5) -- (8,-0.5) -- (8,-0.2);
\draw[dashed] (10,-0.2) -- (10,-0.5) -- (12,-0.5) -- (12,-0.2);

\draw[dashed] (9,2) -- (9,0.2);

\node at (7,-1) {$\hat{S}^r_i$};
\node at (11,-1) {$\hat{T}^r_i$};

\end{tikzpicture}
\caption{The original path selection by LP~\eqref{opt:slc}}
\end{subfigure}
\newline

\hspace{0.5pt}

\begin{subfigure}{1\textwidth}
\centering
\begin{tikzpicture}
\coordinate (O) at (0,0); 

\node at (2, 0)  (1) {$\hat{s}_1$};
\node at (4, 0)  (2) {$\hat{s}_2$};
\node at (6, 0)  (3) {$\hat{s}_3$};
\node at (8, 0)  (4) {$\hat{s}_4$};
\node at (10, 0)  (5) {$\hat{t}_1$};
\node at (12, 0)  (6) {$\hat{t}_2$};
\node at (14, 0)  (7) {$\hat{t}_3$};
\node at (16, 0)  (8) {$\hat{t}_4$};

\node at (3, 0) {$\prec$};
\node at (5, 0) {$\prec$};
\node at (7, 0) {$\prec$};
\node at (9, 0) {$\prec$};
\node at (11, 0) {$\prec$};
\node at (13, 0) {$\prec$};
\node at (15, 0) {$\prec$};

\draw (6,0.2) -- (6,0.5) -- (10,0.5) -- (10,0.2);
\draw (8,0.6) -- (8,0.9) -- (12,0.9) -- (12,0.6);
\draw (6,1) -- (6,1.3) -- (10,1.3) -- (10,1);
\draw (8,1.4) -- (8,1.7) -- (12,1.7) -- (12,1.4);

\draw[dotted] (2,0.2) -- (2,0.5) -- (6,0.5);
\draw[dotted] (4,0.6) -- (4,0.9) -- (8,0.9);
\draw[dotted] (10,1.3) -- (14,1.3) -- (14,1);
\draw[dotted] (12,1.7) -- (16,1.7) -- (16,1.4);

\draw[dashed] (6,-0.2) -- (6,-0.5) -- (8,-0.5) -- (8,-0.2);
\draw[dashed] (10,-0.2) -- (10,-0.5) -- (12,-0.5) -- (12,-0.2);

\draw[dashed] (9,2) -- (9,0.2);

\node at (7,-1) {$\hat{S}^r_i$};
\node at (11,-1) {$\hat{T}^r_i$};

\end{tikzpicture}
\caption{The adjusted path selection with representative endpoints for LP~\eqref{opt:slc}}
\end{subfigure}
\caption{Shifting paths towards paths with representative endpoints}
\label{fig:2lp}
\end{figure}

Now we consider the following LP for undirected group Steiner forest on $H$.
\begin{equation} \label{opt:slc-flow} 
\begin{aligned}
& \min_{x,\gamma} & & \sum_{e \in E(H)}{w'(e) x_e} \\
& \text{subject to}
& & \sum_{r \in V}\gamma^r_i \geq 1 & \forall i \in [k],\\
& & & \text{$x$ supports a $\tilde{S}^r_i$-$r'$ flow of value $\gamma^r_i$} & \forall i \in [k], \forall r \in V,\\
& & & \text{$x$ supports a $\tilde{T}^r_i$-$r'$ flow of value $\gamma^r_i$} & \forall i \in [k], \forall r \in V,\\
& & & x_e \geq 0 & \forall e \in E(H'),\\
& & & \gamma^r_i \geq 0 & \forall i \in [k], r \in V.
\end{aligned}
\end{equation}

Let $\tilde{x}_e=x'_e$ for each $e \in E(H)$ and $\tilde{\gamma}^r_i=\sum_{(\hat{s},\hat{t}) \in \tilde{R}^r_i}y'_{\hat{s},\hat{t}}$.
We observe that $\tilde{x}$ and $\tilde{\gamma}$ is a feasible solution for LP~\eqref{opt:slc-flow} with objective value at most $2\lp$.
This implies that for any solution of LP~\eqref{opt:slc}, there exists a corresponding solution of LP~\eqref{opt:slc-flow} with at most twice of the objective value.

Given a solution $\tilde{x}$ and $\tilde{\gamma}$ for LP~\eqref{opt:slc-flow}, we can obtain a solution $x'$, $y'$, and $z'$ for LP~\eqref{opt:slc} as follows. We note that the objective values of these two solutions are the same.
\begin{itemize}
    \item Let $x'_e = \tilde{x}_e$ for each $e \in E(H)$.
    \item Let $z'_{\hat{s}}=\min_{e \in P(\hat{s})}\{x'_e\}$, i.e., $z'_{\hat{s}}$ is the minimum capacity among the edges that belong to the path from $\hat{s}$ to its tree root $r(\hat{s})$.
    \item Let $z'_{\hat{t}}=\min_{e \in P(\hat{t})}\{x'_e\}$, i.e., $z'_{\hat{t}}$ is the minimum capacity among the edges that belong to the path from the tree root $r(\hat{t})$ to $\hat{t}$.
    \item Let $f^r_{s_i}$ be an $\tilde{S}^r_i$-$r'$ flow of value $\gamma^r_i$ and $f^r_{t_i}$ be an $\tilde{T}^r_i$-$r'$ flow of value $\gamma^r_i$, both supported by edge capacity $\tilde{x}$.
    \item For each $i \in [k]$ and $(\hat{s},\hat{t}) \in \tilde{R}_i$, let $y'_{\hat{s},\hat{t}}$ be the flow value along the $\hat{s}$-$\hat{t}$ path according to the flow $f$.
\end{itemize}

Now we use Theorem~\ref{thm:bab} to obtain an integral solution for LP~\eqref{opt:slc-flow}, then transform this solution to the one for LP~\eqref{opt:slc} as described above. Suppose we have a $\polylog(n)$-competitive solution for the undirected group Steiner forest instance. For each edge $e \in E(H)$ that is selected, set $\tilde{x}_e=1$, otherwise let $\tilde{x}_e=0$. Let $\gamma^r_i=1$ whenever there is an $\hat{s}$-$r'$-$\hat{t}$ path in $H$ where $\hat{s} \in \tilde{S}_i$ and $\hat{t} \in \tilde{T}_i$, otherwise let $\gamma^r_i=0$. We observe that every integral solution of LP~\eqref{opt:slc-flow} corresponds to an undirected group Steiner forest solution for $H$ and $(s_i,t_i)$.
By the structure of $H$, the integrality gap of LP~\eqref{opt:slc-flow} is $\polylog(n)$ \cite{garg2000polylogarithmic,cekp}.
This implies that the objective values obtained from the online undirected group Steiner forest solution for both LP~\eqref{opt:slc} and LP~\eqref{opt:slc-flow} are upper-bounded by $\polylog(n)\opt$.
\end{proof}

\subsection{Summary} \label{sec:us-sum}

We summarize in Algorithm~\ref{alg:unit-ps} the overall $\tO(k^{1/2 + \delta})$-competitive randomized polynomial-time algorithm described in Sections~\ref{sec:lcp-hr} and \ref{sec:online-slc} for pairwise spanners on graphs with uniform edge lengths.

\begin{algorithm}[H]
\caption{$\tO(k^{1/2 + \delta})$-competitive pairwise spanner} \label{alg:unit-ps}
\begin{algorithmic}[1]
\State Given a directed graph $G$, construct the directed graph $G'$, which is a union of disjoint layered graphs. From $G'$, construct the undirected graph $H$, which is a forest.
\For {an arriving pair $(s_i,t_i)$}
    \State Generate $\hat{S}_i$, $\hat{T}_i$, and $\hat{R}_i$, and solve the Steiner label cover LP~\eqref{opt:slc} online.\footnotemark
    \State Generate the orderings $\prec_{i,r}$. From the online solution of LP~\eqref{opt:slc}, extract the representative sets $\tilde{S}_i$ and $\tilde{T}_i$, and the corresponding relation $\tilde{R}_i$.
	\State Solve the online Steiner forest instance for graph $H$ and the terminal vertex set pair $\tilde{S}_i$ and $\tilde{T}_i$. Recover an integral solution for LP~\eqref{opt:slc}, obtain a solution which is a forest for the ordered Steiner label cover problem in $H$, and map it back to the collection of junction trees in $G$.
\EndFor
\end{algorithmic}
\end{algorithm}
\footnotetext{One may derive an LP for the graph $G'$ instead of $H$. However, the integrality gap might be large.}

We conclude that Algorithm~\ref{alg:unit-ps} is $\tO(k^{1/2 + \delta})$-competitive. We pay $O(\sqrt{k})$ for the junction tree approximation, $O(k^{\delta})$ for constructing the forest $H$ by height reduction, and finally $\polylog(n)$ for solving LP~\eqref{opt:slc} and the undirected group Steiner forest instance $H$, $\tilde{S}_i$, and $\tilde{T}_i$ online. The overall competitive ratio is $\tO(k^{1/2+\delta})$.

\section{Online Pairwise Spanners} \label{subsubsec:online-ps}

We recall that in the general {\em pairwise spanner} problem, we are given a directed graph $G=(V, E)$ with edge length $\ell: E \to \R_{\geq 0}$, a general set of $k$ terminals $D = \{(s_i,t_i) \mid i \in [k]\} \subseteq V \times V$, and a target distance $d_i$ for each terminal pair $(s_i,t_i)$, the goal is to output a subgraph $H=(V, E')$ of $G$ such that for every pair $(s_i, t_i) \in D$ it is the case that $d_H(s_i, t_i)\leq d_i$, i.e., the distance of a shortest $s_i \leadsto t_i$ path is at most $d_i$ in the subgraph $H$, and we want to minimize the number of edges in $E'$.

\subsection{An $\tO(n^{4/5})$-competitive online algorithm for pairwise spanners} \label{subsec:general-ps}

In this section, we prove Theorem~\ref{theorem:spanner}. Recall that in the online setting of the problem, the directed graph $G$ is given offline, while the vertex pairs in $D \subseteq V \times V$ arrive online one at a time. In the beginning, $E' = \emptyset$. Suppose $(s_i,t_i)$ and its target distance $d_i$ arrive in round $i$, we select some edges from $E$ and irrevocably add them to $E'$, such that in the subgraph $H=(V,E')$, $d_{H}(s_i,t_i) \leq d_i$. The goal is to approximately minimize the total number of edges added to $E'$.

We start with a high-level sketch of an offline algorithm, which we will build on for the online setting. The randomized rounding framework in \cite{chlamtavc2020approximating,berman2013approximation} has two main steps. One step is to solve and round an LP for the spanner problem. The second is to uniformly sample vertices and add the shortest path in-arborescences and out-arborescences rooted at each of the sampled vertices.
Terminal pairs are classified as either \emph{thin} or \emph{thick} and are addressed by one of the two steps above accordingly. 

In the first step, the rounding scheme based on an LP solution for spanners ensures with high probability that for all thin terminal pairs the distance requirements are met. The second step ensures with high probability that for all thick terminal pairs the distance requirements are met. By selecting an appropriate threshold for classifying the thin and thick pairs, this leads to an $O(n/\sqrt{\opt})$-approximation, where $\opt$ is the number of edges in the optimal solution.

The main challenges in adapting this approach to the online setting are as follows: 1) $\opt$ can be very small, and 2) the LP for spanners is not naturally a pure covering LP, which makes it difficult to solve online.  In the previous work in the offline setting, the small-$\opt$ case is addressed by sophisticated strategies that appear difficult to emulate online.
Instead, we show that the optimal value (however small) is at least the square root of the number of terminal pairs that have arrived. Thus, if $\opt$ is small and not many pairs have arrived, we can greedily add a path with the fewest edges subject to the distance requirement for each pair. To overcome the second challenge, we convert the LP for spanners into an equivalent covering LP as in \cite{dinitz2019lasserre}, where violating covering constraints are generated by an auxiliary LP. Having transformed the LP into a purely covering one, we can solve the LP online, treating the auxiliary LP as a separation oracle. 

\subsubsection{A simple $\tO(n^{4/5})$-approximate offline algorithm based on \cite{chlamtavc2020approximating}}
For ease of exposition, we first design a simpler offline algorithm (slightly weaker than the state-of-the-art) that is more amenable to the online setting. This allows us to establish the main ingredients governing the approximation factor in a simpler setting, and then address the online aspects separately. The algorithm leverages the framework developed in \cite{berman2013approximation,chlamtavc2020approximating}.

Let $\cP_i$ denote the collection of $s_i \leadsto t_i$ paths of distance at most $d_i$ consisting of edges in $E$. Let the \emph{local graph} $G^i=(V^i,E^i)$ be the union of all vertices and edges in $\cP_i$. A pair $(s_i,t_i) \in D$ is \emph{$t$-thick} if $|V^i| \geq t$, otherwise $(s_i,t_i)$ is \emph{$t$-thin}. Consider the following standard LP relaxation (essentially the one in \cite{dinitz2011directed}).

\begin{equation} \label{opt:off-dks-lp}
\begin{aligned}
& \min_{x,y} & & \sum_{e \in E}{x_e} \\
& \text{subject to}
& & \sum_{P \in \cP_i}{y_P} \geq 1 & \forall i \in [k],\\
& & & \sum_{P \mid e \in P \in \cP_i}y_P \leq x_e & \forall e \in E, \forall i \in [k],\\
& & & x_e \geq 0 & \forall e \in E,\\
& & & y_P \geq 0 & \forall P \in \cP_i \quad \forall i \in [k].\\
\end{aligned}
\end{equation}
Herein, $x_e$ is an indicator of edge $e$ and $y_P$ is an indicator of path $P$. Suppose we have an integral feasible solution. Then the first set of constraints ensures that there is at least one $s_i \leadsto t_i$ path of distance at most $d_i$ selected, and the second set of constraints ensures that if a path $P$ is selected, then all its edges are selected.

We say that a pair $(s_i,t_i) \in D$ is \emph{settled} if the selection of edges is such that there exists an $s_i \leadsto t_i$ path of distance at most $d_i$. Applying a simple rounding scheme based on a solution of LP~\eqref{opt:off-dks-lp} settles the thin pairs with high probability, while sampling enough vertices and adding shortest path in-arborescences and out-arborescences rooted at each sampled vertex ensures with high probability that thick pairs are settled. Let $\opt$ be the optimum value of the given pairwise spanner instance. Without loss of generality, we may assume that we know $\opt$ since we can guess every value of $\opt$ in $[|E|]$ in the offline setting. Now we are ready to describe Algorithm~\ref{alg:offline-dks} in \cite{chlamtavc2020approximating}.

\begin{algorithm}[H]
\caption{Offline pairwise spanner} \label{alg:offline-dks}
\begin{algorithmic}[1]
\State $E' \gets \emptyset$ and $t \gets n/\sqrt{\opt}$.
\State Solve LP \eqref{opt:off-dks-lp} and add each edge $e \in E$ to $E'$ with probability $\min\{1,x_e t \ln n\}$ independently.
\State Obtain a vertex set $W \subseteq V$ by sampling $(3n\ln n)/t$ vertices from $V$ independently and uniformly at random. Add the edges of shortest path in-arborescences and out-arborescences rooted at $w$ for each $w \in W$.
\end{algorithmic}
\end{algorithm}

The approximation guarantee of Algorithm~\ref{alg:offline-dks} is stated in the following lemma. The approximation (competitiveness) guarantee of the offline $\tO(n^{4/5})$-approximate algorithm (Algorithm \ref{alg:simple-ps}) and the online $\tO(n^{4/5})$-competitive algorithm (Algorithm \ref{alg:dks}) presented later are based on the proof of this lemma. For this reason and the sake of completeness we include the proof below.

\begin{restatable}{lemma}{lemnsqrtopt}\label{lem:nsqrtopt}(\cite{chlamtavc2020approximating})
Algorithm~\ref{alg:offline-dks} is $\tilde{O}(n/\sqrt{\opt})$-approximate.
\end{restatable}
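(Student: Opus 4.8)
The plan is to analyze Algorithm~\ref{alg:offline-dks} by separately bounding the number of edges it adds in each of its two steps, and to balance the thin/thick threshold against $\opt$. First I would set the threshold $t$ that classifies a pair $(s_i,t_i)$ as $t$-thin or $t$-thick, and argue that the LP-rounding step handles the thin pairs. Concretely, solve LP~\eqref{opt:off-dks-lp}; since $\opt$ is known, the optimal LP value is at most $\opt$. The key structural observation is that for a $t$-thin pair, the local graph $G^i$ has fewer than $t$ vertices, hence the support of the fractional flow $\{y_P : P \in \cP_i\}$ lives inside a small graph, so a randomized rounding that samples each edge $e$ with probability proportional to $x_e$ (scaled up by an $O(\log n)$ factor, repeated) settles every $t$-thin pair with high probability. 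The standard flow-rounding / path-sampling argument from \cite{dinitz2011directed,berman2013approximation,chlamtavc2020approximating} gives that the expected number of edges added in this step is $\tilde O(t \cdot \opt / \text{(something)})$ — more precisely, one uses that each thin local graph contributes at most $O(t)$ sampled edges over the randomized rounding and that the fractional mass is charged against the LP cost — yielding a bound of the form $\tilde O(\opt)$ edges when the rounding is scaled correctly, or else a bound that trades off with $t$. I would track the exact exponent here but defer the routine probabilistic calculation, citing the referenced framework.

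Next I would handle the $t$-thick pairs by the sampling step: independently sample each vertex into a set $R$ with probability $\Theta((\log n)/t)$, and for every $v \in R$ add to $E'$ a shortest-path in-arborescence and out-arborescence rooted at $v$. Each such arborescence has at most $n-1$ edges, so the expected number of edges added in this step is $\tilde O(n^2 / t)$. The correctness claim is that every $t$-thick pair is settled with high probability: if $|V^i| \ge t$, then with probability $1 - 1/\poly(n)$ the sample $R$ hits $V^i$, and any $v \in V^i \cap R$ lies on some $s_i \leadsto t_i$ path of length at most $d_i$, so concatenating the shortest $s_i \leadsto v$ path (in the out-arborescence... wait, the in-arborescence rooted at $v$ gives shortest paths into $v$) with the shortest $v \leadsto t_i$ path gives an $s_i \leadsto t_i$ walk of length at most $d_i$, since $v$ being on a length-$\le d_i$ path means $d(s_i,v) + d(v,t_i) \le d_i$ in $G$. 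A union bound over the $k \le n^2$ pairs gives that all thick pairs are settled whp.

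Combining the two steps, the expected (or whp) size of $E'$ is $\tilde O\!\left(\text{(thin contribution)} + n^2/t\right)$, and I would set $t \approx n/\sqrt{\opt}$ so that both the path-sampling / LP-rounding contribution and the $n^2/t$ term come out to $\tilde O(n\sqrt{\opt})$, giving an $\tilde O(n\sqrt{\opt}/\opt) = \tilde O(n/\sqrt{\opt})$-approximation. I would also note that $\opt \ge t$ can be assumed in the regime where the thick step is active (otherwise all pairs are thin and only the LP step runs), and that one may assume $\opt$ is known by guessing, as stated in the text. The main obstacle I expect is getting the thin-pair LP-rounding bound tight: one must argue carefully that scaling $x$ by $O(t \log n)$ (or running $O(\log n)$ independent roundings within each thin local graph) settles all thin pairs whp while adding only $\tilde O(n\sqrt{\opt})$ edges in expectation — this requires the union bound over thin pairs to interact correctly with the per-pair failure probability and the local-graph size bound, and it is exactly the part that rests on the machinery of \cite{chlamtavc2020approximating}. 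Since the lemma is attributed to \cite{chlamtavc2020approximating} and its proof is deferred to Appendix~\ref{pf:lem:nsqrtopt}, I would present the balancing argument in full and invoke the cited framework for the rounding-settles-thin-pairs guarantee.
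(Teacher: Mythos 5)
Your proposal follows essentially the same route as the paper's proof: LP rounding with edge-retention probability scaled by $\tilde O(t)$ to settle thin pairs (contributing $\tilde O(t)\cdot\lp^* \le \tilde O(t)\cdot\opt$ edges, with the settling guarantee deferred to the cited framework, exactly as the paper defers to a lemma of Berman et al.), vertex sampling at rate $\tilde O(1/t)$ with in-/out-arborescences for thick pairs (contributing $\tilde O(n^2/t)$ edges, correct via the hitting-set and union-bound argument), and balancing at $t = n/\sqrt{\opt}$. The only wobble is your initial hedging on whether the thin-pair step costs $\tilde O(\opt)$ or $\tilde O(t\cdot\opt)$ edges — it is the latter — but your final balancing uses the correct bound, so the argument goes through.
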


\begin{proof}
The following claim that settles thin pairs is due to \cite{berman2013approximation}. Essentially, when $n$ is large, for any $t > 2$, with high probability, all the $t$-thin pairs are settled by the rounding scheme suggested below.

\begin{claim}(\cite{berman2013approximation}) \label{lem:otlp}
Let $x,y$ be a feasible solution of LP \eqref{opt:off-dks-lp} with objective value $\lp$. By retaining every edge $e \in E$ independently with probability $\min\{1,x_e \cdot t \ln n\}$, all $t$-thin pairs are settled with probability at least $1-\exp(-t \ln (n/t) + 2 \ln n)$, and the number of edges retained is at most $\tilde{O}(t)\lp$.
\end{claim}

To settle $t$-thick pairs, 
we obtain a vertex set $W \subseteq V$ by sampling $(3n\ln n)/t$ vertices from $V$ independently and uniformly at random, then add edges of the shortest path in-arborescences and out-arborescences rooted at $w$ for each $w \in W$. For each $t$-thick pair $(s_i,t_i)$, the probability that $W \cup V^i = \emptyset$ is $(1-t/n)^{-(3n\ln n)/t} \leq \exp(-3 \ln n) = 1/n^3$. There are at most $n^2$ $t$-thick pairs, so by a union bound, with probability at least $1-1/n$, all $t$-thick pairs in $D$ are settled.

We note that there is a polynomial-time constant factor approximate algorithm for solving LP \eqref{opt:off-dks-lp} according to \cite{dinitz2011directed} even though there are an exponential number of variables and constraints. Let $\lp^*$ denote the optimal objective value of LP \eqref{opt:off-dks-lp}. By Lemma~\ref{lem:otlp}, the $t$-thin pairs are settled in the second step and the number of edges retained is at most $\tO(t)\lp^*$. The $t$-thick pairs are settled by adding shortest path arborescences in the third step, and the number of edges retained is at most $2n \cdot 3n \ln n/t$ since each arborescence has at most $n$ edges. 
It follows that the approximation guarantee is
\[\frac{\tilde{O}(t \lp^*) + \tilde{O}(n^2/t)}{\opt} \leq \frac{\tO(t \opt) + \tO(n^2/t)}{\opt} = \frac{\tO(n\sqrt{\opt})}{\opt} = \tO(\frac{n}{\sqrt{\opt}})\]
where the first inequality follows by $\lp^* \leq \opt$. Namely, the optimal fractional solution is upper-bounded by the optimal integral solution of LP \eqref{opt:off-dks-lp}, and the second equality follows by the definition $t=n/\sqrt{\opt}$.
\end{proof}

In the all-pairs spanner problem where $\opt$ is $\Omega(n)$, Algorithm~\ref{alg:offline-dks} is $\tO(\sqrt{n})$-approximate which matches the state-of-the-art approximation ratio given in \cite{berman2013approximation}. For the pairwise spanner problem, the main challenge is the lack of a nice lower bound for $\opt$. In the offline setting, \cite{chlamtavc2020approximating} achieves an $\tO(n^{3/5 + \ep})$-approximate solution by a careful case analysis when edges have uniform lengths. We give an alternative approach that is amenable to the online setting by considering two cases, where one resolves the issue when $\opt$ does not have a nice lower bound, and the other uses a variant of Algorithm~\ref{alg:offline-dks} given that $\opt$ has a nice lower bound. This approach relies on the following observation.

\begin{lemma} \label{lem:optgeqsqrt}
$\opt \geq \sqrt{k}$.
\end{lemma}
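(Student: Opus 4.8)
The plan is to show that any feasible pairwise spanner on the $k$ terminal pairs that have arrived must have at least $\sqrt{k}$ edges, by a counting argument on paths. The key observation is that each terminal pair $(s_i,t_i)$ is settled by some $s_i \leadsto t_i$ path $P_i$ of length at most $d_i$ inside the solution subgraph $H=(V,E')$. Since $H$ has only $n$ vertices, a walk (hence a simple path) in $H$ has at most $n-1 \le n$ edges, but more to the point I want to count the number of \emph{distinct} terminal pairs a small edge set can simultaneously settle.

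First I would fix an optimal solution $H=(V,E')$ with $|E'|=\opt$, and for each $i \in [k]$ pick a simple $s_i \leadsto t_i$ path $P_i \subseteq E'$. Each $P_i$ is determined by its endpoint pair $(s_i,t_i)$, and distinct terminal pairs are distinct ordered pairs of vertices (we may assume the $k$ pairs in $D$ are distinct, since duplicates impose no new constraint). Now the number of ordered vertex pairs $(u,v)$ that can appear as the endpoints of \emph{some} simple path using only edges of $E'$ is at most the number of ordered pairs of vertices that lie on edges of $E'$; more crudely, any such path uses at least one edge, its endpoints are incident to $E'$, and I can bound the count of reachable ordered pairs. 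The cleanest route: the set of vertices incident to at least one edge of $E'$ has size at most $2\opt$ (each edge contributes two endpoints), call it $V'$, and every $s_i,t_i \in V'$. Hence $k = |D| \le |V'|^2 \le (2\opt)^2 = 4\opt^2$, which gives $\opt \ge \sqrt{k}/2$ — off by a constant from the claimed bound.

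To get the exact constant $\opt \ge \sqrt{k}$ I would sharpen the endpoint count. Note every $s_i$ has out-degree at least $1$ in $E'$ and every $t_i$ has in-degree at least $1$ in $E'$, but more usefully: order the edges of $E'$ and observe that the number of distinct source vertices among the $s_i$ is at most $\opt$ (a vertex that is a source of some settled pair must have an outgoing edge in $E'$, and distinct such vertices need distinct outgoing edges only if... — actually this needs care), so I would instead argue directly that the number of ordered pairs $(u,v)$ with $u$ having an out-edge in $E'$ and $v$ having an in-edge in $E'$ is at most (number of tails)$\times$(number of heads) $\le \opt \cdot \opt = \opt^2$, since the set of tails has size at most $\opt$ and the set of heads has size at most $\opt$. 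Every terminal pair $(s_i,t_i)$ is of this form (the path $P_i$ leaves $s_i$ and enters $t_i$ using edges of $E'$), so $k \le \opt^2$, i.e. $\opt \ge \sqrt{k}$, as claimed.

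The main obstacle is getting the constant exactly right rather than up to a factor of $2$: the subtlety is that the same edge $e=(u,v) \in E'$ is simultaneously a "tail edge" for $u$ and a "head edge" for $v$, so one cannot naively say there are $\le \opt$ tails \emph{and} independently $\le \opt$ heads without double-use being an issue — but in fact that is fine, since "number of distinct tail vertices $\le |E'| = \opt$" and "number of distinct head vertices $\le |E'| = \opt$" both hold simultaneously (each inequality just says a set injects into $E'$ via "pick one incident out-/in-edge"), and the product bound $k \le \opt^2$ follows. I expect the whole argument to be two or three lines once this is set up; no LP or rounding is needed, only that a feasible solution settles every arrived pair via a path contained in it.
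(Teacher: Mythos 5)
Your final argument is exactly the paper's proof: a spanner with $\opt$ edges has at most $\opt$ distinct source vertices (each needs an out-edge, and distinct sources have distinct out-edges) and at most $\opt$ distinct sink vertices, so at most $\opt^2$ terminal pairs can be settled, giving $\opt \geq \sqrt{k}$. The detour through the weaker $4\opt^2$ bound is unnecessary but harmless; the sharpened version is correct and matches the paper.
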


\begin{proof}
We observe that when the spanner has $\ell$
edges, there are at most $\ell$ source vertices and $\ell$ sink vertices, so there are at most $\ell^2$ terminal pairs. Therefore, when the spanner has $\opt$ edges, there are at most $\opt^2$ terminal pairs, so $\opt \geq \sqrt{k}$.
\end{proof}

Now we specify the simple offline algorithm given in Algorithm~\ref{alg:simple-ps}. In the beginning, we set two parameters $T$ and $t$ (which we will describe later), and set $E'=\emptyset$. An $s_i \leadsto t_i$ path is \emph{cheapest feasible} if it meets the distance requirement $d_i$ by using the minimum number of edges from $E$. We note that cheapest feasible paths can be found by Bellman-Ford algorithm.

\begin{algorithm}[H]
\caption{Simple offline pairwise spanner} \label{alg:simple-ps}

\begin{algorithmic}[1]
\If{$k < T$}
    \State Add the edges of a cheapest feasible $s_i \leadsto t_i$ path to $E'$ for each $i \in [k]$.
\Else
    \State Solve LP \eqref{opt:off-dks-lp} and add each edge $e \in E$ to $E'$ with probability $\min\{1,x_e t \ln n\}$ independently.
    \State Obtain a vertex set $W \subseteq V$ by sampling $(3n\ln n)/t$ vertices from $V$ independently and uniformly at random. Add the edges of shortest path in-arborescences and out-arborescences rooted at each vertex $w \in W$ to $E'$.
\EndIf
\end{algorithmic}
\end{algorithm}

\begin{lemma}
Algorithm~\ref{alg:simple-ps} is $\tO(n^{4/5})$-approximate when $T=t=n^{4/5}$.
\end{lemma}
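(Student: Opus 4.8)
The plan is to prove the approximation ratio by a case analysis on the number $k$ of terminal pairs present in the instance, using the threshold $T = t = n^{4/5}$. In both cases I will bound the number of edges $|E'|$ produced by Algorithm~\ref{alg:simple-ps} by $\tO(n^{4/5}\cdot\opt)$, where $\opt$ is the optimum of the pairwise spanner instance; dividing by $\opt$ then yields the claimed ratio. The two cases correspond exactly to the two branches of the algorithm, and the role of the threshold is that $k = n^{4/5}$ is precisely the value at which the two branches have matching cost.

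First, suppose $k < T = n^{4/5}$, so the algorithm returns the union of a cheapest feasible $s_i\leadsto t_i$ path for each $i\in[k]$. I would argue that each such path uses at most $\opt$ edges: the optimal solution settles $(s_i,t_i)$, hence contains some $s_i\leadsto t_i$ path of length at most $d_i$, and that path uses at most $\opt$ edges; since a cheapest feasible path uses the fewest edges among all feasible $s_i \leadsto t_i$ paths in $G$, it uses at most $\opt$ edges as well (removing cycles only decreases both length and edge count, so we may take these paths to be simple if we wish). Therefore $|E'|\le k\cdot\opt < T\cdot\opt = n^{4/5}\cdot\opt$, which beats the target bound deterministically.

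Second, suppose $k\ge T = n^{4/5}$, so the algorithm runs the LP rounding and vertex sampling steps. Here I invoke Lemma~\ref{lem:optgeqsqrt} to obtain $\opt\ge\sqrt{k}\ge\sqrt{T}=n^{2/5}$. I would then reuse the analysis behind Lemma~\ref{lem:nsqrtopt}, but keep the threshold parameter $t$ symbolic. Since LP~\eqref{opt:off-dks-lp} has value at most $\opt$, the expected number of edges added in the rounding step is $\sum_e\min\{1,x_e t\ln n\}\le t\ln n\sum_e x_e\le t\opt\ln n$, and a concentration argument shows this is $\tO(t\cdot\opt)$ with high probability while all $t$-thin pairs become settled; the sampling step adds at most $2(n-1)$ arborescence edges for each of the $(3n\ln n)/t$ sampled vertices, i.e.\ $\tO(n^2/t)$ edges, and all $t$-thick pairs become settled with high probability. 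Plugging in $t=n^{4/5}$ and $\opt\ge n^{2/5}$ gives $|E'| = \tO(n^{4/5}\opt + n^{6/5})$, and since $n^{6/5}=n^{4/5}\cdot n^{2/5}\le n^{4/5}\cdot\opt$, this is $\tO(n^{4/5}\cdot\opt)$. Combining the two cases yields the $\tO(n^{4/5})$ bound, holding with high probability, with the failure probability absorbed by the usual union bound/restart argument.

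The step I expect to require the most care is the bookkeeping in the second case: Lemma~\ref{lem:nsqrtopt} is stated for the specific choice $t=n/\sqrt{\opt}$, which Algorithm~\ref{alg:simple-ps} does not use (and which would be unavailable online), so I must check that the rounding and sampling guarantees in its proof go through verbatim for the fixed value $t=n^{4/5}$ and then re-optimize the cost $\tO(t\opt + n^2/t)$ using only the a priori bound $\opt\ge n^{2/5}$ coming from $k\ge T$. The choice $T=t=n^{4/5}$ is exactly what makes the greedy cost $k\opt$, the LP cost $t\opt$, and the sampling cost $n^2/t$ all equal to $\tO(n^{4/5}\opt)$ at the boundary $k=n^{4/5}$; verifying this balance is the crux, and the remaining estimates are routine.
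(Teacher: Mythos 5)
Your proposal is correct and follows essentially the same argument as the paper: a case split at $k=T$, bounding the greedy branch by $k\cdot\opt$ via the fact that each cheapest feasible path has at most $\opt$ edges, and bounding the LP branch by $\tO(t\cdot\opt + n^2/t)$ combined with $\opt\ge\sqrt{k}\ge n^{2/5}$ from Lemma~\ref{lem:optgeqsqrt}. Your remark that the thin/thick settlement guarantees hold for the fixed value $t=n^{4/5}$ (rather than the $t=n/\sqrt{\opt}$ of Lemma~\ref{lem:nsqrtopt}) is a valid and worthwhile sanity check, but it is exactly what the paper implicitly relies on as well.
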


\begin{proof}
If $k < n^{4/5}$, we add the edges of a cheapest feasible $s_i \leadsto t_i$ path for each $(s_i,t_i) \in D$. Each cheapest feasible $s_i \leadsto t_i$ path contains at most $\opt$ edges, so the ratio between this solution and $\opt$ is $n^{4/5}$. The remaining case is $k \geq n^{4/5}$, which implies $\opt \geq n^{2/5}$ by Lemma~\ref{lem:optgeqsqrt}. Let $\lp^*$ be the optimal objective value of LP~\eqref{opt:off-dks-lp}. By Claim~\ref{lem:otlp}, the rounding scheme settles the $t$-thin pairs with high probability. The $t$-thick pairs are settled by a collection of shortest path in-arborescences and out-arborescences using a similar argument in the proof of Lemma~\ref{lem:nsqrtopt}. The approximation guarantee is
\[\frac{\tilde{O}(t \lp^*) + \tilde{O}(n^2/t)}{\opt} \leq \frac{\tO(n^{4/5} \opt) + \tO(n^{6/5})}{\opt}  = \tO(n^{4/5})\] since the number of edges retained from the rounding scheme is at most $\tO(t)\lp^*$ and the number of edges retained by adding arborescences is at most $2n \cdot 3n \ln n/t$ since each arborescence has at most $n$ edges. This summarizes the simple offline $\tilde{O}(n^{4/5})$-approximation algorithm.
\end{proof}

\subsubsection{An $\tO(n^{4/5})$-competitive online algorithm}

It remains to convert the simple offline algorithm to an online algorithm. We address the two main modifications.
\begin{enumerate}
    \item We have to (approximately) solve LP \eqref{opt:off-dks-lp} online, which is not presented as a covering LP.
    \item We have to round the solution of LP \eqref{opt:off-dks-lp} online.
\end{enumerate}

For the first modification, LP \eqref{opt:off-dks-lp} is converted to an equivalent covering LP \eqref{opt:cs-LP3} (which we will show later) and approximately solved in an online fashion. For the second modification, we use an online version of the rounding scheme in Algorithm~\ref{alg:simple-ps}, such that the overall probability (from round 1 to the current round) for the edge selection is consistent with the probability based on the online solution of LP \eqref{opt:off-dks-lp}, by properly scaling the probability based on a conditional argument.

\begin{algorithm}[H]
\caption{Online pairwise spanner} \label{alg:dks}

\begin{algorithmic}[1]
\For {an arriving pair $(s_i,t_i)$}
    \State Convert the spanner LP \eqref{opt:off-dks-lp} to the covering LP \eqref{opt:cs-LP3} and solve LP \eqref{opt:cs-LP3} online (see the proof of Theorem~\ref{theorem:spanner} for the details).
	\If{$i < T$}
	    \State Add the edges of a cheapest feasible $s_i \leadsto t_i$ path to $E'$. \label{line:alg-spanner-greedy}
	\ElsIf{$i = T$}
	    \State Obtain a vertex set $W \subseteq V$ by sampling $(3n\ln n)/t$ vertices from $V$ independently and uniformly at random. Add the edges of shortest path in-arborescences and out-arborescences rooted at each vertex $w \in W$ to $E'$.
	    \State Add each edge $e$ to $E'$ independently with probability $p^i_e$ for each edge $e \in E \setminus E'$.
	\Else \Comment{$i > T$}
	\State Add each edge $e$ to $E'$ independently with probability $(p^i_e - p^{i-1}_e)/(1-p^{i-1}_e)$ for each edge $e \in E \setminus E'$.
	\EndIf
\EndFor
\end{algorithmic}
\end{algorithm}

The online algorithm in round $i$ is given in Algorithm~\ref{alg:dks}. The same structure will be used later for other variants of the online pairwise spanner problem. All these algorithms rely on the LP rounding scheme (Claim~\ref{lem:otlp}) to settle the $t$-thin pairs with high probability and a collection of shortest path in-arborescences and out-arborescences to settle $t$-thick pairs (similar to Lemma~\ref{lem:nsqrtopt}).

In the beginning, we pick a threshold parameter $T$ and a thickness parameter $t$ (which we will show later), and set $E'=\emptyset$. Let $x^i_e$ denote the value of $x_e$ in the approximate solution of LP \eqref{opt:cs-LP3} obtained in round $i$. Let $p^i_e:=\min\{1,x^i_e t \ln n\}$. Algorithm~\ref{alg:dks} is the online version of Algorithm~\ref{alg:simple-ps}. A key insight is that when we add the arborescences in round $T$, it also settles the \emph{future} thick terminal pairs with high probability. With the outline structure of the online algorithm, we are ready to prove Theorem~\ref{theorem:spanner}.

\thmpdks*

\begin{proof}
Suppose in the online setting, there are $k$ rounds where $k$ may be unknown. In round $i \in [k]$, the pair $(s_i,t_i)$ and the distance requirement $d_i$ arrive and we select some new edges from $E$ to settle $(s_i,t_i)$. We run Algorithm~\ref{alg:dks} by setting $T=t=\lfloor n^{4/5} \rfloor$. It suffices to show that 1) LP \eqref{opt:off-dks-lp} can be solved online by losing a polylogarithmic factor, and 2) the overall probability of edge selection is consistent with the probability based on the online solution of LP \eqref{opt:off-dks-lp}.

\paragraph{Converting and solving LP \eqref{opt:off-dks-lp} online.} 
The goal is to update $x$ in a non-decreasing manner upon the arrival of the pair $(s_i,t_i)$ to satisfy all its corresponding constraints, so that the objective value is still approximately optimal. We convert LP \eqref{opt:off-dks-lp} into a covering LP as follows.

First, we check in round $i$, given the \emph{edge capacity} $x$, if there is a (fractional) $s_i \leadsto t_i$ path of distance at most $d_i$. This can be captured by checking the optimum of the following LPs,
\begin{equation} \label{opt:cs-LP1} 
\begin{aligned}
&\max_{y} & & \sum_{P \in \cP_i}{y_P}\\
&\text{subject to}
& &\sum_{P \mid e \in P \in \cP_i}y_P \leq x_e & \forall e \in E,\\
& & &y_P \geq 0 & \forall P \in \cP_i,\\
\end{aligned}
\end{equation}
and its dual
\begin{equation} \label{opt:cs-LP2} 
\begin{aligned}
&\min_{z} & & \sum_{e \in E}{x_e z_e}\\
&\text{subject to}
& &\sum_{e \in P}z_e \geq 1 & \forall P \in \cP_i,\\
& & &z_e \geq 0 & \forall e \in E.\\
\end{aligned}
\end{equation}

We say that $x$ is \emph{good} if the optimum of LP \eqref{opt:cs-LP1} and LP \eqref{opt:cs-LP2} is at least 1, and it is \emph{bad} otherwise. Namely, $x$ is good if there is at least one (fractional) $s_i \leadsto t_i$ path of distance at most $d_i$. In LP \eqref{opt:cs-LP2}, the feasibility problem is equivalent to the following problem. Given the local graph $G^i$ and edge weight $z$, is there an $s_i \leadsto t_i$ path of distance at most $d_i$ and weight less than $1$? We note that with uniform lengths, this problem can be solved by Bellman-Ford algorithm with $d_i$ iterations, which computes the smallest weight among all $s_i \leadsto t_i$ paths of distance at most $d_i$ in the local graph $G^i$.

Although this bicriteria path problem in general is NP-hard \cite{arkin1991bicriteria}, an FPTAS is known to exist \cite{hassin1992approximation,lorenz2001simple}, which gives an approximate separation oracle. We can verify in polynomial time that if there is a path of distance at most $d_i$ and weight less than $1-\ep$. We obtain a solution $z'$ for LP \eqref{opt:cs-LP2} where each constraint is satisfied by a factor of $1-\ep$ and set $z:=z'/(1-\ep)$ as the solution.

To solve LP \eqref{opt:off-dks-lp}, suppose in round $i$, we are given $x$. First, we check if $x$ is good or bad by approximately solving LP \eqref{opt:cs-LP2}. If $x$ is good, then there exists $y$ such that $\sum_{P \in \cP_j}{y_P} \geq 1$, i.e., the solution is feasible for LP \eqref{opt:off-dks-lp}, so we move on to the next round. Otherwise, $x$ is bad, so we increment $x$ 
until it becomes good, which implies $\sum_{e \in E}{x_e z_e} \geq 1$ for all feasible $z$ in LP \eqref{opt:cs-LP2}. Let $Z_i$ be the feasible polyhedron of LP \eqref{opt:cs-LP2} in round $i$. We derive the following LP (essentially the one in \cite{dinitz2019lasserre,dinitz2011directed}) which is equivalent to LP \eqref{opt:off-dks-lp}, by considering all the constraints of LP \eqref{opt:cs-LP2} from round 1 to round $k$.

\begin{equation} \label{opt:cs-LP3}
\begin{aligned}
& \min_{x} & & \sum_{e \in E}{x_e} \\
& \text{subject to}
& & \sum_{e \in E}z_e x_e \geq 1 & \forall z \in Z_i \quad \forall i \in [k] ,\\
& & & x_e \geq 0 & \forall e \in E.\\
\end{aligned}
\end{equation}

In round $i \in [k]$, the subroutine that approximately solves LP \eqref{opt:cs-LP2} and checks if the optimum is good or not, is the separation oracle used for solving LP \eqref{opt:cs-LP3} online. Here we use Theorem~\ref{thm:covering} (the formal version of Theorem~\ref{thm:inf-covering}) to show that LP  \eqref{opt:cs-LP3} can be solved online in polynomial time by paying an $O(\log n)$ factor. This requires that both $\log(1/ z_e)$ and $\log \lp^*$ are polynomial in the number of bits used for the edge lengths, where $\lp^*$ is the optimum of LP \eqref{opt:cs-LP3}. Clearly, $\log \lp^* \leq \log |E|$ is in $\poly(n)$. For $\log(1/ z_e)$, the subroutine that approximately solves LP \eqref{opt:cs-LP2} returns an approximate solution $z$ which is represented by polynomial number of bits used for the edge lengths \cite{hassin1992approximation,lorenz2001simple}. By Theorem~\ref{thm:covering}, we have the following Lemma.

\begin{lemma} \label{lem:online-spanner-lp}
There exists a polynomial-time $O(\log n)$-competitive online algorithm for LP \eqref{opt:off-dks-lp}.
\end{lemma}

\paragraph{Conditional edge selection.} After having a fractional solution of LP \eqref{opt:off-dks-lp} in round $i$ where $i \geq T = \lfloor n^{4/5} \rfloor$, we independently pick $e \in E \setminus E'$ with some scaled probability so that the edge selection is consistent with the probability based on the online solution of LP \eqref{opt:off-dks-lp}. More specifically, let $p_e := \min\{1,x_e t \ln n\}$ and let $p^i_e$ be the value of $p_e$ in round $i$. Let $\tilde{E}$ be the set of edges where each edge is neither selected while adding cheapest feasible paths prior to round $T$ nor selected while adding in-arborescences and out-arborescences in round $T$. We show that each edge $e \in \tilde{E}$ has already been selected with probability $p^i_e$ in round $i$. This can be proved by induction. According to Algorithm \ref{alg:dks}, the base case is round $T$, where $e \in \tilde{E}$ is selected with probability $p^T_e$. Now suppose $i > T$, if $e \in \tilde{E}$ has been selected, it is either selected prior to round $i$ or in round $i$. For the former case, $e$ must had already been selected in round $i-1$, with probability $p^{i-1}_e$ by inductive hypothesis. For the later case, conditioned on $e$ has not been selected in round $i-1$, $e$ is selected with probability $(p^i_e-p^{i-1}_e)/(1-p^{i-1}_e)$. Therefore, in round $i$, $e$ has been selected with probability $p^{i-1}_e + (1-p^{i-1}_e) \cdot \frac{p^i_e-p^{i-1}_e}{1-p^{i-1}_e} = p^i_e$, 
which completes the proof. Intuitively, when $i > T$, conditioned on $e \in \tilde{E}$ was not picked from round 1 to round $i-1$, we pick $e$ with probability $(p^i_e-p^{i-1}_e)/(1-p^{i-1}_e)$ at round $i$, so that the overall probability that $e$ is picked from round 1 to round $i$ is $p^i_e$.

\paragraph{Summary.} We conclude the proof as follows. The overall algorithm is given in Algorithm \ref{alg:dks}. For the initialization, $x$ is a zero vector, $E'$ is an empty set, and $T = t = \lfloor n^{4/5} \rfloor$. The set $E'$ is the solution. We pay an extra logarithmic factor for solving LP \eqref{opt:off-dks-lp} online by Lemma~\ref{lem:online-spanner-lp}. The competitive ratio remains $\tilde{O}(n^{4/5})$.
\end{proof}

\subsection{Online pairwise spanners with maximum allowed distance $d$} \label{subsec:maxd}

Suppose the given graph has uniform edge lengths, and the diameter is bounded or it is guaranteed that the distances between the terminal pairs are bounded. Let $d = \max_{i \in [k]} \{d_i\}$ 
be the maximum
allowed distance of any pair of terminals in the input. This setting is equivalent to the \emph{$d$-diameter spanning subgraph} problem introduced in \cite{bhattacharyya2012transitive}. We assume that $d$
is known offline and show the following result.
  
\begin{restatable}{theorem}{thmdus}\label{theorem:unit-spanner}
For the online pairwise spanner
  problem with uniform edge lengths and maximum allowed distance $d$, there
  is a randomized polynomial-time algorithm with competitive ratio
  \begin{math}
    \tO(d^{1/3} n^{2/3}).
  \end{math}
\end{restatable}

\begin{proof}
When the maximum allowed distance is $d$ for the terminal pairs, we employ Algorithm~\ref{alg:dks} with $T = \lfloor d^{-4/3} n^{4/3} \rfloor$ and $t = d^{1/3}n^{2/3}$. 
If $k < T$, then for each $(s_i,t_i) \in D$, we add the edges of a shortest $s_i \leadsto t_i$ path. Each shortest $s_i \leadsto t_i$ path contains at most $d$ edges. By Lemma~\ref{lem:optgeqsqrt}, $\opt \geq \sqrt{k}$, 
so the ratio between this solution and $\opt$ is
$\frac{d k}{\opt} \leq d \sqrt{k} \leq d \sqrt{T} = O(d^{1/3}n^{2/3}).$
If $k \geq T$, then $\opt \geq d^{-2/3}n^{2/3}$. Let $\lp$ be the value of the online integral $O(\log n)$-competitive solution of LP \eqref{opt:off-dks-lp} obtained by Algorithm~\ref{alg:dks}. 
The approximation guarantee is
\[\frac{\tilde{O}(t \lp) + \tilde{O}(n^2/t)}{\opt} \leq \frac{\tO(d^{1/3}n^{2/3} \opt) + \tO(d^{-1/3}n^{4/3})}{\opt}  = \tO(d^{1/3}n^{2/3}).\]
\end{proof}

\subsection{Online quasimetric spanners and all-server spanners} \label{subsec:as}

Consider the case where the edge lengths form a \emph{quasimetric}, i.e., for any two edges $u \to v$ and $v \to w$,
there is also an edge $u \to w$ such that
\begin{math}
  \ell(u,w) 
  \leq 
  \ell(u,v) + \ell(v,w).
\end{math}
This setting includes the class of transitive-closure graphs with uniform edge lengths, in which each pair or vertices connected by a directed path is also connected by a directed edge.
The offline version of the \emph{transitive-closure
  spanner} problem was formally defined in  \cite{bhattacharyya2012transitive}.
We obtain the following result. 

\begin{restatable}{theorem}{thmqs}\label{theorem:quasimetric-spanner}
  For the online pairwise spanner problem where 
  edge lengths are quasimetric, there is a randomized polynomial-time algorithm with
  competitive ratio $\tO(n^{2/3})$.
\end{restatable}

In the setting of all-server spanners, the given graph has uniform edge lengths. For each terminal pair $(s_i,t_i)$, there is also an edge $s_i \to t_i$ in $E$. This setting is equivalent to the \emph{all-server spanner} problem introduced in \cite{elkin1999client}. We show the following theorem for this case.

\begin{restatable}{theorem}{thmasds}\label{thm:asds}
  For the online all-server spanner
  problem with uniform edge lengths, there is a randomized polynomial-time algorithm with
  competitive ratio $\tO(n^{2/3})$.
\end{restatable}

\begin{proof} [Proof of Theorems \ref{theorem:quasimetric-spanner} and \ref{thm:asds}.]
We employ Algorithm~\ref{alg:dks} with $T = \lfloor n^{4/3} \rfloor$ and $t = n^{2/3}$. For these special cases, we start by adding edges instead of paths. Then, in the initial greedy phase in line \ref{line:alg-spanner-greedy} of Algorithm \ref{alg:dks}, the cost of each step is upper-bounded by 1 instead of $\opt$. This enables us to obtain better competitive ratios. More formally, if $k < T$, then for each $(s_i,t_i) \in D$, we add the edge $s_i \to t_i$. By Lemma~\ref{lem:optgeqsqrt}, $\opt \geq \sqrt{k}$, 
so the ratio between this solution and $\opt$ is
\[\frac{k}{\opt} \leq \sqrt{k} \leq \sqrt{T} = O(n^{2/3}).\]
If $k \geq T$, then $\opt \geq n^{2/3}$. Let $\lp$ be the value of the online integral $O(\log n)$-competitive solution of LP \eqref{opt:off-dks-lp} obtained by Algorithm~\ref{alg:dks}. The approximation guarantee is
\[\frac{\tilde{O}(t \lp) + \tilde{O}(n^2/t)}{\opt} \leq \frac{\tO(n^{2/3} \opt) + \tO(n^{4/3})}{\opt}  = \tO(n^{2/3}).\]
\end{proof}

\subsection{Online pairwise spanners with uniform edge lengths} \label{subsec:us}

In this section, we prove Theorem~\ref{thm:us}.

\thmus*

\begin{proof}
We employ Algorithm~\ref{alg:dks} with a slight tweak and set $T = \lfloor n^{4/3 - 4\ep} \rfloor$ and $t = n^{2/3 + \ep}$.

If $k < T$, instead of adding edges of a shortest $s_i \leadsto t_i$ path, we use Theorem~\ref{thm:sqrt-k-ps} to find an $\tO(n^{2/3 + \ep})$ competitive solution.

\thmsqrtkps*

For any $\ep \in (0,1/3)$, there exists $\delta$ such that $4\delta/(9+12\delta) = \ep$. By picking this $\delta$, we have that
\begin{align*}
(\frac{4}{3} - 4\ep)(\frac{1}{2}+\delta) &=
(\frac{4}{3} - \frac{16\delta}{9+12\delta})(\frac{1}{2}+\delta) = \frac{2}{3} + \frac{4\delta}{3} - \frac{8\delta + 16\delta^2}{9 + 12\delta} \\
&= \frac{2}{3} + \frac{12\delta+16\delta^2-8\delta-16\delta^2}{9+12\delta} = \frac{2}{3} + \frac{4\delta}{9+12\delta} = \frac{2}{3} + \ep.
\end{align*}
Hence, the ratio between the solution obtained by Theorem~\ref{thm:sqrt-k-ps} and $\opt$ is
\[k^{1/2 + \delta} \leq n^{(4/3 - 4\ep)(1/2+\delta)} = n^{2/3+\ep} = \tO(n^{2/3+\ep}).\] 

If $k \geq T$, then $\opt \geqslant n^{2/3 - 2\ep}$. Let $\lp$ be the online integral solution of LP \eqref{opt:off-dks-lp} obtained by Algorithm~\ref{alg:dks}. 
The approximation guarantee is
\begin{equation} \label{eq:ds}
\frac{\tilde{O}(t \lp) + \tilde{O}(n^2/t)}{\opt} \leq \frac{\tO(n^{2/3 + \ep} \opt) + \tO(n^{4/3 - \ep})}{\opt}  = \tO(n^{2/3 + \ep}).
\end{equation}
\end{proof}

\subsection{Online directed Steiner forests with uniform costs} \label{sec:dsf}

This problem is a special case of pairwise spanners with uniform edge lengths and infinite target distances. We show the following theorem.

\thmsteiner*

\begin{proof}
The structure of the online algorithm is the same as that for online pairwise spanners with uniform edge lengths. We employ Algorithm \ref{alg:dks} and set $T = \lfloor n^{4/3 - 4\ep} \rfloor$ and $t = n^{2/3 + \ep}$. If $k < T$, instead of adding edges of a shortest $s_i \leadsto t_i$ path, we use Theorem~\ref{thm:sqrt-k-ps} to find an $\tO(n^{2/3 + \ep})$ competitive solution.\footnote{One can also use the $\tO(k^{1/2+\delta})$-competitive online algorithm in \cite{cekp}.} If $k \geq T$, then the algorithm is $\tO(n^{2/3+\ep})$-competitive by \eqref{eq:ds}.
\end{proof}

\section{Online Covering in Polynomial Time} \label{sec:covering}

This section is devoted to proving the formal version of Theorem~\ref{thm:inf-covering}. We recall that the problem of interest is to solve the covering LP \eqref{equation:covering} online:
\begin{align*}
  \begin{aligned}
    \text{minimize } & \rip{\mathbf{c}}{x} 
    \text{ over } x \in \nnreals^n 
    \text{ s.t.\ } A x \geq \mathbf{1}
  \end{aligned}
\end{align*}
where $A \in \R_{\geq 0}^{m \times n}$ consists of $m$ covering
constraints, $\mathbf{1} \in \R_{> 0}^m$ is a vector of all ones treated as the lower bound of the covering constraints, and $\mathbf{c} \in \R_{> 0}^n$ denotes the positive coefficients of the linear cost function.

In the online covering problem, the cost vector $\mathbf{c}$ is given offline, and each of these covering constraints is presented one by one in an online fashion, that is, $m$ can be unknown. In round $i \in [m]$, $\{a_{ij}\}_{j \in [n]}$ (where $a_{ij}$ denotes the $i$-th row $j$-th column entry of $A$) is revealed, and we have to monotonically update $x$ so that the constraint $\sum_{j \in [n]}{a_{ij}x_j} \geq 1$ is satisfied. We always assume that there is at least one positive entry $a_{ij}$ in each round $i$, otherwise constraint $i$ cannot be satisfied since all the row entries are zeros. The goal is to update $x$ in a non-decreasing manner and approximately minimize the objective value $\rip{\mathbf{c}}{x}$.

We recall that an important idea in this
line of work is to simultaneously consider the dual packing problem LP \eqref{equation:packing}:
\begin{align*}
  \begin{aligned}
    \text{maximize } & \rip{\mathbf{1}}{y} 
    \text{ over } y \in \nnreals^m 
    \text{ s.t.\ } A^T y \leq \mathbf{c}
  \end{aligned}
\end{align*}
where $A^T$ consists of $n$ packing constraints with an upper bound $\mathbf{c}$ given offline.

The primal-dual framework in \cite{buchbinder2009online} simultaneously solves
both LP \eqref{equation:covering} and LP \eqref{equation:packing}, and crucially
uses LP-duality and strong connections between the two solutions to
argue that they are both nearly optimal. The modified framework closely follows the \emph{guess-and-double} scheme in \cite{buchbinder2009online}. Specifically, the scheme runs in phases where each phase estimates a lower bound for the optimum. When the first constraint arrives, the scheme generates the first lower bound
\[\alpha(1) \gets \min_{j \in [n]}\{\frac{c_j}{a_{1j}} \mid a_{1j} > 0\} \leq \opt\]
where $c_j$ is the $j$-th entry of $\mathbf{c}$ and $\opt$ is the optimal value of LP \eqref{equation:covering}.

During phase $r$, we always assume that the lower bound of the optimum is $\alpha(r)$ until the online objective $\rip{\mathbf{c}}{x}$ exceeds $\alpha(r)$. Once the online objective exceeds $\alpha(r)$, we start the new phase $r+1$ from the current violating constraint (let us call it constraint $i_{r+1}$, in particular, $i_1 = 1$), and double the estimated lower bound, i.e., $\alpha(r+1) \gets 2\alpha(r)$.\footnote{In \cite{buchbinder2009online}, the scheme starts all over again from the first constraint. We start from the current violating constraint because it is more amenable when violating constraints are generated by a separation oracle. There is no guarantee for the order of arriving violating constraints in such settings.} We recall that $x$ must be updated in a non-decreasing manner, so the algorithm maintains $\{x^r_j\}$, which denotes the value of each variable $x_j$ in each phase $r$, and the value of each variable $x_j$ is actually set to $\max_r\{x^r_j\}$.

In Algorithm~\ref{alg:covering}, we describe one round of the modified scheme in phase $r$. When a covering constraint $i$ arrives, we introduce a packing variable $y_i=0$. If the constraint is violated, we increment each $x_j$ according to an exponential function of $y_i$ until the constraint is satisfied \emph{by a factor of 2}. This is the main difference between the modified framework and \cite{buchbinder2009online}, which increments the variables until the constraint is satisfied.

\begin{algorithm}[H]
\caption{Online Covering} \label{alg:covering}

\begin{algorithmic}[1]

\For {arriving covering constraint $i$}
    \State $y_i \gets 0$. \Comment{the packing variable $y_i$ is used for the analysis}
    \If {$\sum_{j=1}^n{a_{ij}x^r_j} < 1$} \Comment{if constraint $i$ is not satisfied}
        \While {$\sum_{j=1}^n a_{ij}x_j^r < 2$} \Comment{update until constraint $i$ is satisfied by a factor of 2}
            \State Increase $y_i$ continuously.
            \State Increase each variable $x^r_j$ by the following increment function:
            \[x^r_j \gets \frac{\alpha(r)}{2nc_j}\exp \tuple{\frac{\ln (2n)}{c_j}\sum_{k=i_r}^{i}a_{kj}y_k}.\]
        \EndWhile
    \EndIf
\EndFor
\end{algorithmic}
\end{algorithm}

Although the augmentation is in a continuous fashion, it is not hard to implement it in a discrete way for any desired precision by binary search. Therefore, to show that the modified framework is efficient, it suffices to bound the number of violating constraints it will encounter. The performance of the modified scheme is analyzed in Theorem~\ref{thm:covering} (the formal version of Theorem~\ref{thm:inf-covering}).

\begin{theorem} \label{thm:covering}
There exists an $O(\log n)$-competitive online algorithm for the covering LP \eqref{equation:covering} which encounters $\poly(n, \log \opt, \log (1/\alpha(1)))$ violating constraints.
\end{theorem}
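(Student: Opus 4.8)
The plan is to analyze Algorithm~\ref{alg:covering} in two parts: (1) the competitive ratio, which essentially mirrors the argument in \cite{buchbinder2009online} up to the factor-of-2 over-satisfaction, and (2) the bound on the number of violating constraints, which is the genuinely new component. For the competitive ratio, I would first fix a single phase $r$ and show that the pair $(x^r, y)$ is related by the standard primal-dual potential argument: the increment function is chosen precisely so that $c_j x_j^r$ is (up to additive $\alpha(r)/2n$) an exponential in the packing load $\sum_{k \geq i_r} a_{kj} y_k$, hence $\frac{d}{dy_i}\rip{\mathbf{c}}{x^r} = \sum_j a_{ij} c_j \frac{dx_j^r}{dy_i} \leq \ln(2n)\bigl(\sum_j a_{ij} x_j^r + \frac{\alpha(r)}{2n}\sum_j a_{ij}/c_j \cdot \text{(stuff)}\bigr)$; while the constraint is being fixed the first term is at most $2$, so the primal cost increases by at most $O(\log n)$ per unit of $\rip{\mathbf{1}}{y}$ added. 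Summing over the constraints addressed in phase $r$ gives $\rip{\mathbf{c}}{x^r} \leq O(\log n)\rip{\mathbf{1}}{y^{(r)}} + O(\alpha(r))$ where $y^{(r)}$ is the packing mass added in that phase, and since the phase terminates as soon as $\rip{\mathbf{c}}{x^r}$ exceeds $\alpha(r)$, we get $\rip{\mathbf{c}}{x^r} = O(\alpha(r))$; since $x = \max_r x^r$ coordinatewise and $\alpha(r)$ doubles each phase with $\alpha(r) \leq 2\,\opt$ at the last phase, $\rip{\mathbf{c}}{x} = O(\opt \log n)$. The over-satisfaction by a factor of 2 only costs a constant, which is absorbed into $O(\log n)$. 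I would also need to check feasibility of the packing solution within each phase (so that $y$ is a valid dual certificate), which follows the same exponential-growth bound: no packing constraint $j$ is violated because $c_j x_j^r \geq \frac{\alpha(r)}{2n}(2n)^{(\text{load}_j)/c_j}$ and if $\text{load}_j > c_j$ the primal cost would already exceed $\alpha(r)$, ending the phase.

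The key new step is bounding the number of violating constraints per phase. Here I would use the factor-of-2 insight highlighted in the introduction: when a violated constraint $i$ is addressed, the algorithm increases $y_i$ until $\sum_j a_{ij} x_j^r \geq 2$, whereas before the update $\sum_j a_{ij} x_j^r < 1$; therefore the quantity $\sum_j a_{ij} x_j^r$ at least \emph{doubled}, which forces at least one coordinate $x_j^r$ to double. Since each $x_j^r$ starts at $\frac{\alpha(r)}{2nc_j}$ and (by feasibility of the packing problem within the phase, established above) never exceeds roughly $\frac{\alpha(r)}{2nc_j}(2n)^{1} = \frac{\alpha(r)}{c_j}$, it can double only $O(\log n)$ times before hitting its cap. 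More carefully, one tracks $\prod_j \max\{1, x_j^r \cdot (2nc_j/\alpha(r))\}$ or equivalently $\sum_j \log_2(x_j^r \cdot 2nc_j/\alpha(r))_+$: this potential starts at $0$, increases by at least $1$ each time a violated constraint is fully processed, and is bounded above by $n \cdot O(\log n)$ (the per-coordinate cap). Hence each phase sees $O(n \log n)$ violating constraints. The number of phases is $O(\log(\opt/\alpha(1)))$ since $\alpha$ doubles from $\alpha(1)$ up to at most $2\,\opt$, giving the total $\poly(n, \log\opt, \log(1/\alpha(1)))$ as claimed. The dependence on bit-complexity enters because $\alpha(1) = \min_j \{c_j/a_{1j} : a_{1j}>0\}$ and the $x_j^r$ caps involve $c_j$ and the $a_{kj}$; these are polynomial in the input encoding length.

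The main obstacle I anticipate is handling the phase-restart subtlety flagged in the footnote: unlike \cite{buchbinder2009online}, the modified scheme does \emph{not} restart from the first constraint when a phase ends — it continues from the current violating constraint, because constraints arrive via a separation oracle in arbitrary order. I need to verify that the competitive-ratio argument still goes through: specifically, that restarting the exponential from $x_j^{r+1} = \frac{\alpha(r+1)}{2nc_j}$ (rather than carrying over $x_j^r$) does not lose feasibility of already-satisfied earlier constraints (it doesn't, because $x = \max_r x_j^r$ is monotone and $\alpha(r+1) = 2\alpha(r)$ means $x_j^{r+1}$ starts \emph{above} $x_j^r$'s starting value, but earlier constraints that were satisfied in phase $r$ may temporarily not be satisfied by $x^{r+1}$ alone — however they remain satisfied by the max, which is what counts). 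The cleanest way to phrase this is: the reported solution is $x_j := \max_r x_j^r$, every constraint ever presented was satisfied by the $x^r$ of the phase in which it was (re)processed, monotonicity of the max preserves that, and the cost bound $\rip{\mathbf{c}}{x} \leq \sum_r \rip{\mathbf{c}}{x^r} \leq \sum_r O(\alpha(r)) = O(\alpha(r_{\max})) = O(\opt\log n)$ uses the geometric growth of $\alpha$. I would also double-check that the sum $\sum_j a_{ij}/c_j$ terms appearing in the derivative estimate are controlled by the normalization $\sum_j a_{ij} x_j^r < 2$ together with the lower bound $x_j^r \geq \alpha(r)/(2nc_j)$, which yields $\sum_j a_{ij}/c_j \leq (2n/\alpha(r)) \sum_j a_{ij} x_j^r < 4n/\alpha(r)$, so that term contributes $O(\log n / \alpha(r)) \cdot \alpha(r)/(2n) \cdot n = O(\log n)$ overall — routine but needs to be written out.
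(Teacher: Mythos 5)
Your proposal is correct and follows essentially the same route as the paper's proof: the $O(\log n)$ competitive ratio via the guess-and-double primal--dual analysis (per-phase derivative bound $\partial X(r)/\partial y_i \leq 2\ln(2n)$, per-phase packing feasibility, weak duality, and taking the coordinate-wise max over phases), and the oracle-call bound via the key observation that over-satisfying a violated constraint by a factor of $2$ forces at least one $x_j^r$ to double while each $x_j^r$ has bounded dynamic range within a phase. The only (immaterial) differences are that you cap $x_j^r$ by $\alpha(r)/c_j$ inside the phase where the paper uses the global bound $O(\log n)\opt/c_j$ together with the start value $\alpha(1)/(2nc_j)$, and your extra additive term in the derivative estimate is unnecessary since the increment function is a pure exponential; both yield the claimed polynomial count.
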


\begin{proof}
The proof for the $O(\log n)$-competitiveness closely follows the one in \cite{buchbinder2009online}. Let $X(r)$ and $Y(r)$ be the covering and packing objective values, respectively, generated during phase $r$. The following claims are used to show that Algorithm~\ref{alg:covering} is $O(\log n)$-competitive. For the sake of completeness, we provide the proof in Appendix \ref{sec:pf-covering}.
\begin{enumerate}[label=(\roman*)]
    \item $x$ is feasible.
    \item For each \emph{finished} phase $r$, $\alpha(r) \leq 4 \ln (2n) \cdot Y(r)$.
    \item $y$ generated during phase $r$ is feasible.
    \item The sum of the covering objective generated from phase 1 to $r$ is at most $2 \alpha(r)$.
    \item Let $r'$ be the last phase, then the covering objective $\rip{\mathbf{c}}{x} \leq 2 \alpha(r')$.
\end{enumerate}
From these five claims together with weak duality, we conclude that
\[\rip{\mathbf{c}}{x} \leq 2 \alpha(r') = 4 \alpha(r'-1) \leq 16 \ln(2n) \cdot Y(r'-1) \leq 16 \ln(2n) \cdot \opt.\]

Now we show that Algorithm~\ref{alg:covering} encounters $\poly(n, \log \opt, \log (1/\alpha(1)))$ violating constraints. 
We first show that there are $O(\log\log n + \log \opt + \log (1/\alpha(1)))$ phases. The estimated lower bound $\alpha$ doubles when we start a new phase. Suppose there are $r'$ phases, then $\alpha(1) \cdot 2^{r'-1} = O(\log n) \opt$ because Algorithm~\ref{alg:covering} is $O(\log n)$-competitive. This implies that $r' = O(\log 
\log n + \log \opt + \log (1/\alpha(1)))$.

In each phase, when a violating constraint arrives, we increment $x$ so that the constraint is satisfied by a factor of $2$. This implies that at least one variable $x_j$ is doubled. $x_j = O(\log n) \opt / c_j$ because $c_j x_j \leq \rip{\mathbf{c}}{x} = O(\log n) \opt$. At the start of phase $r$, $x_j = \alpha(r)/(2nc_j) \geq \alpha(1)/(2nc_j)$. Suppose $x_j$ has been doubled $t$ times in phase $r$, then
\[\frac{\alpha(1)}{2nc_j} \cdot 2^t \leq \frac{\alpha(r)}{2nc_j} \cdot 2^t \leq x_j = O(\log n) \frac{\opt}{c_j}\]
which indicates that $t = O(\log n + \log \opt + \log (1/\alpha(1)))$.

There are $n$ variables and $r'$ phases, and in each phase, each variable is doubled at most $t$ times. Therefore, Algorithm~\ref{alg:covering} encounters $\poly(n, \log \opt, \log (1/\alpha(1)))$ violating constraints.
\end{proof}

As stated in \cite{buchbinder2009online}, the online scheme naturally extends to the setting with unbounded number of constraints where covering constraints do not appear explicitly, but are detected by a separation oracle. However, in the previous work, there was no guarantee about the efficiency, i.e., how fast the algorithm reaches an approximately optimal solution. In the worst case scenario, it is possible that a detected constraint is slightly violated, thus hindering the growth of the covering variables since they are incremented until the constraint is \emph{just} satisfied. The modified framework, on the other hand, increments the variables until the constraint is satisfied by a factor of 2. The sufficient growth of the covering variables ensures a polynomial upper bound for the number of violating constraints that the framework will encounter.

\section{Conclusions and Open Problems} \label{sec:conclusion}

In this work, we present the first online algorithm for pairwise spanners with competitive ratio $\tO(n^{4/5})$ for general lengths and $\tO(n^{2/3 + \ep})$ for uniform lengths. We also improve the competitive ratio for the online directed Steiner forest problem with uniform costs to $\tO(n^{2/3+\ep})$ when $k=\omega(n^{4/3})$. We also show an efficient modified framework for online covering and packing. Our work raises several open questions that we state below.

\paragraph{Online pairwise spanners.} An intriguing open problem is improving the competitive ratio for online pairwise spanners. For graphs with uniform edge lengths, there is a small polynomial gap between the state-of-the-art offline approximation ratio $\tO(n^{3/5 + \ep})$ and the online competitive ratio $\tO(n^{2/3 + \ep})$. For graphs with general edge lengths, we are not aware of any studies about the pairwise spanner problem. Our $\tO(n^{4/5})$-competitive online algorithm intrinsically suggests an $\tO(n^{4/5})$-approximate offline algorithm. As the approach in \cite{chlamtavc2020approximating} achieves an $\tO(n/\sqrt{\opt})$-approximation, we believe that the approximation ratio can be improved for the offline pairwise spanner problem, by judicious case analysis according to the cardinality of $\opt$.

\paragraph{Approximating Steiner forests online in terms of $n$.} The state-of-the-art online algorithm for Steiner forests with general costs is $\tO(k^{1/2 + \ep})$-competitive \cite{cekp}. A natural open question is designing an $o(n)$-competitive online algorithm when $k$ is large, and potentially extend this result to the more general buy-at-bulk network design problem. The currently best known offline approximation for Steiner forests with general costs is $O(n^{2/3 + \ep})$ \cite{berman2013approximation}, by case analysis that settles thick and thin terminal pairs separately. However, the approach in \cite{berman2013approximation} for settling thin pairs is essentially a greedy procedure which is inherently offline. Our approach utilizes the uniform cost assumption to obtain a useful lower bound for the optimal solution, which is incompatible with general costs. It would be interesting to resolve the aforementioned obstacles and have an $o(n)$-competitive online algorithm for directed Steiner forests with general edge costs. One open problem for uniform costs is to improve the competitive ratio, as there is a polynomial gap between the state-of-the-art offline approximation ratio $\tO(n^{4/7 + \ep})$ and the online competitive ratio $\tO(n^{2/3 + \ep})$.

\section{Acknowledgements}
We thank the anonymous reviewers for comments and suggestions that helped improve the presentation. We thank Anupam Gupta and Greg Bodwin for bringing to our attention references that we missed in previous versions of the write-up.
\bibliographystyle{acm}
\bibliography{reference}

\appendix
\section{Missing Proofs in Section~\ref{sec:us}}

\subsection{Missing proof for Lemma~\ref{lem:sqrt-k-den}} \label{pf:lem:sqrt-k-den}

\lemsqrtkden*

\begin{proof}
Let $G^*$ (a subgraph of $G$) be the optimal pairwise spanner solution with $\opt$ edges. The proof proceeds by considering the following two cases: 1) there exists a vertex $r \in V$ that belongs to at least $\sqrt{k}$ $s_i \leadsto t_i$ paths of distance at most $d_i$ in $G^*$ for distinct $i$, and 2) there is no such vertex $r \in V$.

For the first case, we consider the union of the $s_i \leadsto t_i$ paths in $G^*$, each of distance at most $d_i$, that passes through $r$. This subgraph in $G^*$ contains an in-arborescence and an out-arborescence both rooted at $r$, whose union forms a junction tree. This junction tree has at most $\opt$ edges and connects at least $\sqrt{k}$ terminal pairs, so its density is at most $\opt / \sqrt{k}$.

For the second case, each vertex $r \in V$ appears in at most $\sqrt{k}$ $s_i \leadsto t_i$ paths in $G^*$. More specifically, each edge $e \in E$ also appears in at most $\sqrt{k}$ $s_i \leadsto t_i$ paths in $G^*$. By creating $\sqrt{k}$ copies of each edge, all terminal pairs can be connected by edge-disjoint paths. Since the overall duplicate cost is at most $\sqrt{k} \cdot \opt$, at least one of these paths has cost at most $\sqrt{k} \cdot \opt / k $. This path constitutes a junction tree whose density is at most $\opt / \sqrt{k}$.
\end{proof}

\subsection{Missing proof in Lemma~\ref{lem:rep-set}} \label{sec:pf-rep-set}
\begin{lemma} \label{lem:rep-set-single}
The cross-product of the representative sets $\tilde{S}^r_i$ and $\tilde{T}^r_i$ is a subset of $\hat{R}^r_i$, i.e.,
\[\tilde{S}^r_i \times \tilde{T}^r_i \subseteq \hat{R}^r_i.\]
\end{lemma}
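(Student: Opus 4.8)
The statement to prove is Lemma~\ref{lem:rep-set-single}: for a fixed root $r\in V$, the cross-product $\tilde S^r_i\times\tilde T^r_i$ of the representative sets is contained in the relation $\hat R^r_i$. The key structural fact available to us is Claim~\ref{cl:ord}: the relation $\hat R^r_i$, when viewed under the orderings $\prec_{i,r}$, is \emph{downward closed on intervals} -- if $(\hat s,\hat t)\in\hat R^r_i$ then every pair $(\hat s',\hat t')$ with $\hat s\preceq_{i,r}\hat s'\prec_{i,r}\hat t'\preceq_{i,r}\hat t$ also lies in $\hat R^r_i$. So the whole proof reduces to exhibiting, for the \emph{extreme} representatives $\hat s_{i,r}=\min_{\prec_{i,r}}\tilde S^r_i$ and $\hat t_{i,r}=\max_{\prec_{i,r}}\tilde T^r_i$, a single witnessing pair in $\hat R^r_i$ that sandwiches them; then Claim~\ref{cl:ord} upgrades this to the full cross-product.

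**Main steps.** First I would recall the definitions of the boundary vertices: $\hat s_{i,r}$ is the $\prec_{i,r}$-largest source whose "tail mass" $\sum_{\hat s'\succeq_{i,r}\hat s}\sum_{\hat t:(\hat s',\hat t)\in\hat R^r_i}y_{\hat s',\hat t}$ is still at least $\gamma^r_i/2$, and $\hat t_{i,r}$ is the $\prec_{i,r}$-smallest sink whose "head mass" is at least $\gamma^r_i/2$; and $\tilde S^r_i=\{\hat s\succeq_{i,r}\hat s_{i,r}\}$, $\tilde T^r_i=\{\hat t\preceq_{i,r}\hat t_{i,r}\}$. Second, the crux: I claim there exists a pair $(\hat s,\hat t)\in\hat R^r_i$ with $\hat s\preceq_{i,r}\hat s_{i,r}$ and $\hat t_{i,r}\preceq_{i,r}\hat t$. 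The argument is a counting/pigeonhole one on the flow $y$: the set of source-sink pairs $(\hat s',\hat t')\in\hat R^r_i$ with $\hat s'\succ_{i,r}\hat s_{i,r}$ carries strictly less than $\gamma^r_i/2$ total $y$-mass (by maximality of $\hat s_{i,r}$), and likewise the pairs with $\hat t'\prec_{i,r}\hat t_{i,r}$ carry strictly less than $\gamma^r_i/2$ mass; since the total $y$-mass on $\hat R^r_i$ is $\gamma^r_i$, there must be a pair avoiding both bad events, i.e. a pair $(\hat s,\hat t)\in\hat R^r_i$ with $\hat s\preceq_{i,r}\hat s_{i,r}$ and $\hat t\succeq_{i,r}\hat t_{i,r}$, and it carries positive mass. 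Third, take any $(\hat s',\hat t')\in\tilde S^r_i\times\tilde T^r_i$; then $\hat s\preceq_{i,r}\hat s_{i,r}\preceq_{i,r}\hat s'$ and $\hat t'\preceq_{i,r}\hat t_{i,r}\preceq_{i,r}\hat t$, and since every source precedes every sink in $\prec_{i,r}$ we have $\hat s'\prec_{i,r}\hat t'$, so the chain $\hat s\preceq_{i,r}\hat s'\prec_{i,r}\hat t'\preceq_{i,r}\hat t$ holds; by Claim~\ref{cl:ord}, $(\hat s',\hat t')\in\hat R^r_i$. This completes the proof.

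**Expected obstacle.** The only delicate point is the pigeonhole step establishing the existence of the sandwiching witness pair $(\hat s,\hat t)$. One has to be careful that the two "bad" mass bounds are strict (they are, by the $\max$/$\min$ in the definitions of the boundary vertices -- anything strictly past the boundary fails the $\ge\gamma^r_i/2$ test), and that a pair failing both bad events indeed satisfies $\hat s\preceq_{i,r}\hat s_{i,r}$ \emph{and} $\hat t\succeq_{i,r}\hat t_{i,r}$ simultaneously rather than one-or-the-other; this follows because "$\hat s\succ_{i,r}\hat s_{i,r}$" and "$\hat s\preceq_{i,r}\hat s_{i,r}$" are complementary, so avoiding the first bad set forces the desired inequality on the source side, and symmetrically on the sink side. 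Since this exact single-tree argument is the one already carried out in \cite{chlamtavc2020approximating}, I would cite it and reproduce it here only for completeness; the novelty in Lemma~\ref{lem:rep-set} proper is merely taking the union over $r$, which is trivial once the single-$r$ case is in hand.
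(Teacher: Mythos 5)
Your proof is correct, and it follows the same overall strategy as the paper's: exhibit a single witness pair in $\hat{R}^r_i$ that sandwiches all of $\tilde{S}^r_i \times \tilde{T}^r_i$, then invoke the interval property of Claim~\ref{cl:ord}. The one place you diverge is in how the witness is produced. The paper takes the witness to be $(\hat{s}_{i,r}, \hat{t}_{\max})$ with $\hat{t}_{\max}$ the $\prec_{i,r}$-largest sink paired with $\hat{s}_{i,r}$; it then needs a two-stage mass argument (first lower-bounding the mass on sources $\preceq_{i,r} \hat{s}_{i,r}$ by $\gamma^r_i/2$, then using Claim~\ref{cl:ord} once to show all of that mass lands on sinks $\preceq_{i,r} \hat{t}_{\max}$) to conclude $\hat{t}_{\max} \succeq_{i,r} \hat{t}_{i,r}$. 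Your one-shot union bound --- the pairs with source past $\hat{s}_{i,r}$ and the pairs with sink before $\hat{t}_{i,r}$ each carry strictly less than $\gamma^r_i/2$ of the total mass $\gamma^r_i$, so some positive-mass pair avoids both --- is a cleaner route to an equally good witness, and it uses Claim~\ref{cl:ord} only once at the end. The strictness bookkeeping you flag is handled exactly as you describe (both your argument and the paper's implicitly assume $\gamma^r_i > 0$ for the tree under consideration, which is the only case where the representative sets matter).
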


\begin{proof}
Let $U^r_i := \{\hat{s} \mid \hat{s} \preceq_{i,r} \hat{s}_{i,r}\}$ be the prefix set according to $\prec_{i,r}$. By the choice of $\hat{s}_{i,r}$ and the definition of $\gamma^r_i$, we have that
\[\sum_{\hat{s} \in U^r_i}{\sum_{\hat{t} \mid (\hat{s}, \hat{t}) \in \hat{R}^r_i}{y_{\hat{s},\hat{t}}}} = \gamma^r_i - \sum_{\hat{s} \succ_{i,r} \hat{s}_{i,r} }{\sum_{\hat{t} \mid (\hat{s}, \hat{t}) \in \hat{R}^r_i}{y_{\hat{s},\hat{t}}}} > \gamma^r_i - \frac{\gamma^r_i}{2} = \frac{\gamma^r_i}{2}.\]
Let $\hat{t}_{\max} := \max_{\prec_{i,r}}\{\hat{t} \mid (\hat{s}_{i,r}, \hat{t}) \in \hat{R}^r_i\}$.
We have that for any $\hat{s} \in U^r_i$, if $\hat{t}$ is such that $(\hat{s},\hat{t}) \in \hat{R}^r_i$, then $\hat{t} \preceq_{i,r} \hat{t}_{\max}$. Otherwise, suppose $\hat{t} \succ_{i,r} \hat{t}_{\max}$, then by Claim~\ref{cl:ord}, $(\hat{s}, \hat{t}) \in \hat{R}^r_i$ implies that $(\hat{s}_{i,r}, \hat{t}) \in \hat{R}^r_i$, this contradicts to the definition of $\hat{t}_{\max}$. We have that
\[\sum_{\hat{t} \preceq_{i,r} \hat{t}_{\max}}{\sum_{\hat{s} \mid (\hat{s}, \hat{t}) \in \hat{R}^r_i}{y_{\hat{s},\hat{t}}}} \geq \sum_{\hat{t} \preceq_{i,r} \hat{t}_{\max}}{\sum_{\substack{\hat{s} \in U^r_i \\ (\hat{s}, \hat{t}) \in \hat{R}^r_i}}{y_{\hat{s},\hat{t}}}} = \sum_{\hat{s} \in U^r_i}{\sum_{\hat{t} \mid (\hat{s}, \hat{t}) \in \hat{R}^r_i}{y_{\hat{s},\hat{t}}}} >  \frac{\gamma^r_i}{2}\]
which implies that $\hat{t}_{\max} \succeq_{i,r} \hat{t}_{i,r}$ and $\tilde{T}^r_i \subseteq \{\hat{t} \mid \hat{t} \prec_{i,r} \hat{t}_{\max}\}$. By Claim~\ref{cl:ord}, $(\hat{s}_{i,r}, \hat{t}_{\max}) \in \hat{R}^r_i$ implies that $\tilde{S}^r_i \times \tilde{T}^r_i \subseteq \tilde{R}^r_i$.
\end{proof}

\section{Missing Proofs in Theorem~\ref{thm:covering}} \label{sec:pf-covering}
We recall Algorithm~\ref{alg:covering}.

\setcounter{algorithm}{4}

\begin{algorithm}[H]
\caption{Online covering}

\begin{algorithmic}[1]

\For {arriving covering constraint $i$}
    \State $y_i \gets 0$. \Comment{the packing variable $y_i$ is used for the analysis}
    \If {$\sum_{j=1}^n{a_{ij}x^r_j} < 1$} \Comment{if constraint $i$ is not satisfied}
        \While {$\sum_{j=1}^n a_{ij}x_j^r < 2$} \Comment{update until constraint $i$ is satisfied by a factor of 2}
            \State Increase $y_i$ continuously.
            \State Increase each variable $x^r_j$ by the following increment function:
            \[x^r_j \gets \frac{\alpha(r)}{2nc_j}\exp \tuple{\frac{\ln (2n)}{c_j}\sum_{k=i_r}^{i}a_{kj}y_k}.\]
        \EndWhile
    \EndIf
\EndFor
\end{algorithmic}
\end{algorithm}

We recall that the following claims are used to show that Algorithm~\ref{alg:covering} is $O(\log n)$-competitive. 
\begin{enumerate}[label=(\roman*)]
    \item \label{pd1} $x$ is feasible.
    \item \label{pd2} For each \emph{finished} phase $r$, $\alpha(r) \leq 4 \ln (2n) \cdot Y(r)$.
    \item \label{pd3} $y$ generated during phase $r$ is feasible.
    \item \label{pd4} The sum of the covering objective generated from phase 1 to $r$ is at most $2 \alpha(r)$.
    \item \label{pd5} Let $r'$ be the last phase, then the covering objective $\rip{\mathbf{c}}{x} \leq 2 \alpha(r')$.
\end{enumerate}

\begin{proof}[Proof of \ref{pd1}.] In phase $r$, suppose constraint $i$ arrives, then either it is already satisfied by $x^r$, or it is violated and we update $x^r$ until constraint $i$ is satisfied by a factor of 2 (or start another phase until we reach a phase that satisfies constraint $i$ by a factor of 2 without exceeding the estimated lower bound $\alpha$). $x$ satisfies all the covering constraints from $1$ to $i$ because it is the coordinate-wise maximum of $\{x^k\}_{k \in [r]}$.
\end{proof}

\begin{proof}[Proof of \ref{pd2}.] In the beginning of phase $r$, $x^r_j = \alpha(r)/(2nc_j)$ so $X(r)$ is initially at most $\alpha(r)/2$. The total increase of $X(r)$ is at most $\alpha(r)/2$ because $X(r) \geq \alpha(r)$ when phase $r$ ends. Therefore, it suffices to show that
\[\frac{\partial X(r)}{\partial y_i} \leq 2 \ln(2n) \cdot \frac{\partial Y(r)}{\partial y_i}.\]
This follows because
\begin{align}
    \frac{\partial X(r)}{\partial y_i} &= \sum_{j=1}^n{c_j \frac{\partial x^r_j}{\partial y_i}} \nonumber \\
    &= \sum_{j=1}^n{\frac{c_j\ln(2n)a_{ij}}{c_j}\frac{\alpha(r)}{2nc_j} \cdot \exp \tuple{\frac{\ln (2n)}{c_j}\sum_{k=i_r}^{i}a_{kj}y_k}} \nonumber \\
    &= \ln(2n) \sum_{j=1}^n{a_{ij} x^r_j} \leq 2 \ln(2n) = 2 \ln(2n) \cdot \frac{\partial Y(r)}{\partial y_i} \label{eq:covering-2}
\end{align}
where \eqref{eq:covering-2} holds because $\sum_{j=1}^n{a_{ij} x^r_j} \leq 2$ and $Y(r) = \rip{\mathbf{1}}{y}$ implies that $\partial Y(r) / \partial y_i = 1$.
\end{proof}

\begin{proof}[Proof of \ref{pd3}.] We observe that during phase $r$ of Algorithm~\ref{alg:covering}, $x^r_j \leq \alpha(r)/c_j$, since otherwise phase $r$ is finished. Therefore,
\[x^r_j = \frac{\alpha(r)}{2nc_j}\exp \tuple{\frac{\ln (2n)}{c_j}\sum_{k=i_r}^{i}a_{kj}y_k} \leq \frac{\alpha(r)}{c_j}\]
which implies that $\sum_{k=i_r}^{i}a_{kj}y_k \leq c_j$.
\end{proof}

\begin{proof}[Proof of \ref{pd4}.] The sum of the covering objective generated from phase 1 to $r$ is at most
\[\sum_{k=1}^r \alpha(k) = \sum_{k=1}^r \frac{\alpha(r)}{2^{k-r}} \leq 2\alpha(r).\]
\end{proof}

\begin{proof}[Proof of \ref{pd5}.] In the last phase $r'$, $x$ is feasible because it is the coordinate-wise maximum of $\{x^r\}_{r \in [r']}$. We have that \[\rip{\mathbf{c}}{x} = \sum_{j=1}^n c_j x_j \leq \sum_{j=1}^n \sum_{r=1}^{r'} c_j x^r_j = \sum_{r=1}^{r'} \tuple{\sum_{j=1}^n c_j x^r_j} \leq \sum_{r=1}^{r'}\alpha(r) \leq 2 \alpha(r')\] where the first inequality holds because $x_j = \max_{r \in [r']}\{x^r_j\} \leq \sum_{r=1}^{r'} x^r_j$, the second inequality is by the fact that the covering objective $\sum_{j=1}^n c_j x^r_j$ cannot exceed the estimated lower bound $\alpha(r)$, while the last inequality is by \ref{pd4}.
\end{proof}

\section{Online Packing in Polynomial Time} \label{sec:packing}

In this section, we prove the formal version of Theorem~\ref{thm:inf-packing}. The problem of interest is to solve the packing LP \eqref{equation:packing} online:
\begin{align*}
  \begin{aligned}
    \text{maximize } & \rip{\mathbf{1}}{y} 
    \text{ over } y \in \nnreals^m 
    \text{ s.t.\ } A^T y \leq \mathbf{c}
  \end{aligned}
\end{align*}
where $A^T \in \R_{\geq 0}^{n \times m}$ consists of $n$ packing constraints with an upper bound $\mathbf{c}$ given offline.

In the online packing problem, the columns $i \in [m]$ of $A^T$ and the corresponding variables $y_i$'s are presented online as zeros, one at a time, where $m$ can be unknown. Let $a_{ij}$ be the $i$-th row $j$-th column entry of the matrix $A$ and $c_j$ be the $j$-th entry of $\mathbf{c}$. In round $i$, $\{a_{ij}\}_{j \in [n]}$, i.e., the $i$-th column of $A^T$ is revealed, and the goal is to either let the arriving variable $y_i$ remain zero, or irrevocably assign a positive value to it, such that 1) the objective value $\rip{\mathbf{1}}{y}$ is approximately optimal, and 2) each constraint $\sum_{k=1}^i a_{kj} y_k \leq c_j$ for each $j \in [n]$ is approximately satisfied.

An important idea for solving the online packing problem is to simultaneously consider the dual online covering problem LP \eqref{equation:covering}:
\begin{align*}
  \begin{aligned}
    \text{minimize } & \rip{\mathbf{c}}{x} 
    \text{ over } x \in \nnreals^n 
    \text{ s.t.\ } A x \geq \mathbf{1}.
  \end{aligned}
\end{align*}
In this problem, the cost vector $\mathbf{c}$ is given offline, and each of these covering constraints is presented one by one in an online fashion. More specifically, in round $i$, $\{a_{ij}\}_{j \in [n]}$ is revealed. The goal is to monotonically update $x$ so that the arriving constraint $\sum_{j=1}^n{a_{ij} x_j} \geq 1$ is satisfied and the objective value $\rip{\mathbf{c}}{x}$ remains approximately optimal.

Similar to Section~\ref{sec:covering}, we employ the primal-dual framework that simultaneously solves
both LP \eqref{equation:covering} and LP \eqref{equation:packing}, and crucially
uses LP-duality and strong connections between the two solutions, to
argue that they are both nearly optimal.

In the beginning of Algorithm~\ref{alg:packing}, a parameter $B > 0$ is given as an input, and $x$ is initialized to a zero vector. In round $i$, $y_i$ is introduced with the $i$-th column of $A^T$, i.e., $\{a_{ij}\}_{j \in [n]}$ is revealed. In the corresponding online covering problem, constraint $i$ arrives and is presented in the form $\sum_{j=1}^n{a_{ij}x_j} \geq 1$. Without loss of generality, we always assume that there is at least one positive entry $a_{ij}$ in round $i$, otherwise the packing objective $\rip{\mathbf{1}}{y}$ is unbounded. If the arriving constraint is violated, we increase the value of the new packing variable $y_i$ and each covering variable $x_j$ simultaneously until the new covering constraint is satisfied \emph{by a factor of 2}. We use an augmenting method in a continuous fashion, which can be implemented in a discrete way with any desired accuracy by binary search. The $x_j$'s are incremented according to an exponential function of $y_i$. We note that $y_i$ is only increased in round $i$ and fixed after then, and the $x_j$'s never decrease. The performance of the modified scheme is analyzed in Theorem~\ref{thm:packing} (the formal version of Theorem~\ref{thm:inf-packing}).

\begin{algorithm}[H]
\caption{Online packing} \label{alg:packing}

\begin{algorithmic}[1]

\For {an arriving covering constraint $i$}
    \State $y_i \gets 0$ and $a^{\max}_j \gets \max_{k \in [i]}\{a_{kj}\}$ for each $j \in [n]$.
    \If {$\sum_{j=1}^n{a_{ij}x_j} < 1$} \Comment{if constraint $i$ is not satisfied}
        \While {$\sum_{j=1}^n a_{ij}x_j < 2$} \Comment{update until constraint $i$ is satisfied by a factor of 2}
            \State Increase $y_i$ continuously.
            \State Increase each variable $x_j$ where $a^{\max}_j > 0$ by the following increment function:
            \[x_j \gets \max\Bigg\{x_j,\frac{1}{na^{\max}_j}\tuple{\exp \tuple{\frac{B}{3c_j}\sum_{k=1}^{i}a_{kj}y_k}-1}\Bigg\}.\]
        \EndWhile
    \EndIf
\EndFor
\end{algorithmic}
\end{algorithm}

\begin{theorem} \label{thm:packing}
For any $B > 0$, there exists a $1/B$-competitive online algorithm for the packing LP \eqref{equation:packing} which updates $y$ $\poly(n, \log B', \log \opt, \log \alpha)$ times. Moreover, for each constraint $j \in [n]$, the following holds
\[\sum_{i=1}^m a_{ij} y_i = c_j \cdot O\tuple{\frac{\log n + \log (a^{\max}_j/a^{\min}_j)}{B}}.\]
Here, $a^{\max}_j:= \max_{i \in [m]}\{a_{ij}\}$ and $a^{\min}_j:= \min_{i \in [m]}\{a_{ij} \mid a_{ij} > 0\}$ for each $j \in [n]$, $\alpha := \max_{j \in [n]}\{a^{\max}_j/c_j\}$, and $B':=3\max_{j \in [n]}\{\ln(2na^{\max}_j/a^{\min}_j+1)\}$.
\end{theorem}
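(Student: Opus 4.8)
The plan is to run the analysis for the covering objective $X := \rip{\mathbf{c}}{x}$ and the packing objective $Y := \rip{\mathbf{1}}{y}$ maintained by Algorithm~\ref{alg:packing} in parallel, exactly as in the proof of Theorem~\ref{thm:covering}, and to read off all three conclusions (competitiveness, the per-constraint violation, and the update count) from the interaction of $X$, $Y$, and LP duality.

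First, competitiveness. As in Theorem~\ref{thm:covering}, $x$ is feasible for the covering LP: every violated covering constraint is pushed to $\sum_j a_{ij}x_j \ge 2$ before the round ends, and $x$ is nondecreasing. For the comparison of growth rates, differentiate the increment rule: when $x_j = \frac{1}{na^{\max}_j}\bigl(\exp(\frac{B}{3c_j}\sum_k a_{kj}y_k)-1\bigr)$ is the binding value, $\frac{\partial x_j}{\partial y_i} = \frac{Ba_{ij}}{3c_j}\bigl(x_j + \frac{1}{na^{\max}_j}\bigr)$, so
\[
\frac{\partial X}{\partial y_i} \;=\; \sum_j c_j\frac{\partial x_j}{\partial y_i} \;=\; \frac{B}{3}\Bigl(\sum_j a_{ij}x_j + \sum_j\frac{a_{ij}}{na^{\max}_j}\Bigr)\;\le\;\frac{B}{3}(2+1)\;=\;B,
\]
using $\sum_j a_{ij}x_j\le 2$ inside the while loop and $a_{ij}\le a^{\max}_j$ summed over $n$ coordinates. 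Since $\frac{\partial Y}{\partial y_i}=1$, and $X,Y$ start at $0$ and change only during while loops, this integrates to $X\le BY$ throughout. Because $x$ is a feasible covering solution, $\opt = \opt_{\mathrm{pack}} = \opt_{\mathrm{cover}}\le X\le BY$, i.e.\ $Y\ge\opt/B$, so the algorithm is $1/B$-competitive.

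Second, the violation bound. Inside the while loop processing any constraint $i$ with $a_{ij}>0$ we always have $a_{ij}x_j\le\sum_{j'}a_{ij'}x_{j'}\le 2$, and the relevant entry is at least $a^{\min}_j$, so $x_j\le 2/a^{\min}_j$ at all times. Plugging the final value of $x_j$ into its defining exponential and solving for the load yields
\[
\sum_i a_{ij}y_i \;\le\; \frac{3c_j}{B}\,\ln\!\Bigl(1+\frac{2na^{\max}_j}{a^{\min}_j}\Bigr)\;=\;c_j\cdot O\!\Bigl(\frac{\log n + \log(a^{\max}_j/a^{\min}_j)}{B}\Bigr),
\]
the stated guarantee (and the source of the definition of $B'$ as three times the worst such logarithm).

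Third, and this is the step I expect to be the main obstacle, the bound on the number $N$ of times $y$ is updated — equivalently the number of violated constraints encountered, since each $y_i$ is set exactly once. Whenever a constraint $i$ is processed, $\sum_j a_{ij}x_j$ jumps from $<1$ to $2$, so it more than doubles, forcing some coordinate $x_j$ to more than double; moreover the coordinate $j^\star$ attaining $\max_j a_{ij}\,\Delta x_j\ge 1/n$ ends the round with $x_{j^\star}\ge 1/(na_{ij^\star})\ge 1/(na^{\max}_{j^\star})$. I will track the potential $\Phi := \sum_j\log_2\!\bigl(1+na^{\max}_j x_j\bigr)$, which is monotone nondecreasing, starts at $0$, and is bounded above by $n\cdot\max_j\log_2(1+2na^{\max}_j/a^{\min}_j)=O(nB')$ by the previous paragraph; the task is to show each processed constraint raises $\Phi$ by a constant, by separating the $O(n)$ rounds in which a coordinate first climbs above its floor $\tfrac{1}{na^{\max}_j}$ from the remaining rounds, where a coordinate already above the floor must grow multiplicatively. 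The delicate point is that the increment rule is essentially additive below the floor and only multiplicative above it, so a naive accounting degrades to $\poly(a^{\max}_j/a^{\min}_j)$ rather than $\polylog$; to recover $\poly(n,\log B',\log\opt,\log\alpha)$ I will use the ``satisfy by a factor of $2$'' stopping rule to certify that every processed round either crosses a floor or at least doubles some coordinate, and use $\alpha=\max_j a^{\max}_j/c_j$ together with the competitiveness bound $Y\le BX/\text{(lower envelope of }\partial X/\partial y_i)$-type estimates to control the range over which coordinates can travel. The first two parts are routine adaptations of the covering analysis; coordinating the floor-crossing count with the multiplicative-growth count in the third part is where the real work lies.
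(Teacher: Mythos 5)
Your first two parts coincide with the paper's proof: the derivative computation $\partial X/\partial y_i \le \frac{B}{3}(2+1)=B$ giving $X\le BY$ and hence $1/B$-competitiveness via weak duality, and the bound $x_j\le 2/a^{\min}_j$ plugged into the exponential to get the per-constraint load $\frac{3c_j}{B}\ln(2na^{\max}_j/a^{\min}_j+1)$, are exactly the paper's claims (ii) and (iii). For the third part your skeleton (floor crossings happen at most once per coordinate, multiplicative growth above the floor is counted) is also the paper's, but two concrete points in your sketch do not go through as stated. First, the certification ``every processed round either crosses a floor or at least doubles some coordinate'' is false with your floor $1/(na^{\max}_j)$: if a large coordinate grows by a factor just under $2$ and a small coordinate rises to just under $1/(na^{\max}_j)$, the constraint value can still reach $2$ (the bound you get is $2\sum_{j\in L}a_{ij}x_j+\sum_{j\in S}a_{ij}/(na^{\max}_j)<2+1=3$, which is not a contradiction). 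The paper repairs this by taking the floor to be $1/(2na^{\max}_j)$ and the growth factor to be $3/2$, so that the no-progress bound becomes $\frac{3}{2}\cdot 1+\frac{1}{2}=2$, which does contradict the stopping rule; in general one needs (growth factor) $+$ (floor numerator) $\le 2$, and your choice $2+1$ violates this.

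Second, your potential $\Phi=\sum_j\log_2(1+na^{\max}_jx_j)$ is capped using $x_j\le 2/a^{\min}_j$, which gives $\Phi=O(nB')$ and hence an update count of $\poly(n,B')$ --- not the claimed $\poly(n,\log B',\log\opt,\log\alpha)$, since $B'$ is not polynomial in $\log B'$. The missing ingredient is the duality-based ceiling on each coordinate: because $\frac{B}{B'}y$ is a \emph{feasible} packing solution by part (iii), weak duality gives $c_jx_j\le X\le BY\le B'\opt$, i.e.\ $x_j=O(B'\opt/c_j)$. Measuring the multiplicative travel of a large coordinate from $1/(2na^{\max}_j)$ up to $O(B'\opt/c_j)$ yields $t=O(\log n+\log\opt+\log B'+\log(a^{\max}_j/c_j))$ growth steps per coordinate, and summing over the $n$ coordinates (plus one floor crossing each) gives the stated bound. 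You gesture at ``using $\alpha$ together with the competitiveness bound to control the range,'' which is indeed the right move, but as written your accounting would stop at the weaker $\poly(n,B')$ count.
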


\begin{proof}
The proof for the $1/B$-competitiveness closely follows the one in \cite{buchbinder2009online}. We provide the proof for completeness. 
Let $X(i)$ and $Y(i)$ be the covering and packing objective values, respectively, in round $i \in [m]$. The following claims are used to show that Algorithm~\ref{alg:packing} is $1/B$-competitive. 
\begin{enumerate}[label=(\roman*)]
    \item \label{p1} $x$ is feasible.
    \item \label{p2} In each round $i \in [m]$, $X(i)/B \leq Y(i)$.
    \item \label{p3} For any packing constraint $j \in [n]$,
    \[\sum_{i=1}^m a_{ij}y_i \leq c_j \frac{3\ln (2na^{\max}_j/a^{\min}_j+1)}{B}.\]
\end{enumerate}
From the first two claims together with weak duality, we conclude that Algorithm~\ref{alg:packing} is $B$-competitive, while the third claim directly proves that each packing constraint is approximately satisfied.

\begin{proof}[Proof of \ref{p1}.] This clearly holds because if constraint $i$ is already satisfied, we do nothing, otherwise update variables $x_j$'s until constraint $i$ is satisfied by a factor of 2.
\end{proof}

\begin{proof}[Proof of \ref{p2}.] We observe that in the beginning of each round, when $a^{\max}_j$ increases, $x_j$ and $X(i)$ do not change. $X(i)$ increases only when $Y(i)$ increases. Initially, both $X(i)$ and $Y(i)$ are zero. Consider round $i$ in which $y_i$ is increased continuously. We show that $\partial X(i)/\partial y_i \leq B \cdot \partial Y(i)/\partial y_i$ and conclude that $X(i)/B \leq Y(i)$.
\begin{align}
    \frac{\partial X(i)}{\partial y_i} &= \sum_{j=1}^n c_j \frac{\partial x_j}{\partial y_i} \nonumber \\
    &\leq \sum_{j=1}^n \frac{c_j}{n a^{\max}_j}\frac{Ba_{ij}}{3c_j}\exp \label{eq:packing-partial} \tuple{\frac{B}{3c_j}\sum_{k=1}^{i}a_{kj}y_k} \\
    &= \frac{B}{3}\sum_{j=1}^n \nonumber a_{ij}\tuple{\frac{1}{n a^{\max}_j}\tuple{\exp \tuple{\frac{B}{3c_j}\sum_{k=1}^{i}a_{kj}y_k}-1} + \frac{1}{n a^{\max}_j}} \nonumber \\
    &\leq \frac{B}{3}\sum_{j=1}^n a_{ij}\tuple{x_j + \frac{1}{n a^{\max}_j}} \leq \frac{B}{3}(2 + 1) = B = B \frac{\partial Y(i)}{\partial y_i} \label{eq:packing-last-line}
\end{align}
where \eqref{eq:packing-partial} follows by taking the partial derivative, and \eqref{eq:packing-last-line} follows because
\begin{enumerate}
    \item $\sum_{j=1}^n a_{ij}x_j \leq 2$ while incrementing $x_j$ in round $i$.
    \item $x_j \geq 1/(n a^{\max}_j)(\exp (B/(3c_j)\sum_{k=1}^{i}a_{kj}y_k)-1)$.
    \item $\sum_{j=1}^n a_{ij}/(na^{\max}_j) \leq 1$.
\end{enumerate}
\end{proof}

\begin{proof}[Proof of \ref{p3}.] We recall that $a^{\max}_j:= \max_{i \in [m]}\{a_{ij}\}$ and $a^{\min}_j:= \min_{i \in [m]}\{a_{ij} \mid a_{ij} > 0\}$ for each $j \in [n]$. During the run of Algorithm~\ref{alg:packing}, $x_j \leq 2/a^{\min}_j$, since otherwise each covering constraint $i \in [m]$ with $a_{ij} > 0$ is already satisfied by a factor of 2. Consider the packing constraint $j$, we have
\[\frac{1}{na^{\max}_j}\tuple{\exp \tuple{\frac{B}{3c_j}\sum_{i=1}^m a_{ij}y_i}-1} \leq x_j \leq \frac{2}{a^{\min}_j}\]
which implies that
\[\sum_{i=1}^m a_{ij}y_i \leq c_j \frac{3\ln (2na^{\max}_j/a^{\min}_j+1)}{B}.\]
\end{proof}

Now we show that Algorithm~\ref{alg:packing} updates $y$ $\poly(n, \log B', \log \opt, \log \alpha)$ times where $\opt$ is the objective value of LP \eqref{equation:packing}. It suffices to show that Algorithm~\ref{alg:packing} encounters $\poly(n, \log B', \log \opt, \log \alpha)$ violating covering constraints. We recall that $B' := 3\max_{j \in [n]}\{\log(2na^{\max}_j/a^{\min}_j+1)\}$ and $\alpha:=\max_{j \in [n]}\{a^{\max}_j/c_j\}$. Suppose we scale the packing solution $y$ by $B/B'$ such that all the packing constraints $j \in [n]$ are satisfied. Then by weak duality, since $By/B'$ is feasible, we have that
\[\frac{c_j x_j}{B'} \leq \frac{X(i)}{B'} \leq \frac{BY(i)}{B'} \leq \opt\]
for every $j \in [n]$. Therefore, $x_j = O(B'\opt/c_j)$.

We show that for each arriving covering constraint $i$ that is violated, one of the following cases must hold when the constraint is satisfied by a factor of 2 after the variable update: 1) there exist a \emph{large} variable $x_j \geq 1/(2na^{\max}_j)$ that is updated to at least $3x_j/2$, or 2) there exists a \emph{small} variable $x_j < 1/(2na^{\max}_j)$ that becomes large, i.e., $x_j$ is updated to at least $1/(2na^{\max}_j)$. Let $L$ and $S$ be the set of large and small variable subscript labels before the update, respectively, and $x'_j$ be the value of $x_j$ after the update. If none these two cases holds, then
\[\sum_{j=1}^n a_{ij} x'_j < \frac{3}{2}\sum_{j \in L}a_{ij}x_j + \sum_{j \in S}\frac{a_{ij}}{2na^{\max}_j} < \frac{3}{2} + \frac{1}{2} = 2\]
where the second inequality is by the fact that constraint $i$ is violated and $a_{ij} \leq a^{\max}_j$.
This implies that constraint $i$ is not satisfied by a factor of $2$, a contradiction.

Suppose $x_j$ has been updated $t$ times by a factor of $3/2$ since it was large, then
\[\frac{1}{2na^{\max}_j} (\frac{3}{2})^t =  O(\frac{B'\opt}{c_j})\]
which implies $t=O(\log n + \log \opt + \log B' + \log (a^{\max}_j/c_j))$.

There are $n$ variables, each variable can be updated from small to large once and updated $t$ times by a factor of $3/2$ since it was large. Hence, Algorithm~\ref{alg:packing} encounters $\poly(n, \log B', \log \opt, \log \alpha)$ violating covering constraints.
\end{proof}

\end{document}